\PassOptionsToPackage{quiet}{fontspec}
\documentclass[UTF8,oneside,12pt]{article}

\setcounter{tocdepth}{2}       
\usepackage[nottoc]{tocbibind} 

\usepackage[a4paper,margin=1in]{geometry}
  \usepackage{float}

  \usepackage{makecell}
  \usepackage{xypic} \xyoption{all}

\usepackage{booktabs}
\usepackage{eso-pic}  

\usepackage{amsmath}        
\usepackage{amssymb}        
\usepackage{latexsym}       
\usepackage{mathrsfs}       
\usepackage{eucal}          
\usepackage{amsthm}         
\usepackage{mathtools}      
\usepackage{stmaryrd}       
\usepackage{esint}          
\usepackage{extarrows}      
\usepackage{enumerate}      
\usepackage{simpler-wick}   
\usepackage{xcolor}         
\usepackage{graphicx}       
\usepackage{tikz}         
\usepackage{tikz-cd}      
\usetikzlibrary{cd,matrix, calc, arrows,
decorations.pathreplacing, decorations.markings, decorations.pathmorphing,
backgrounds,fit,positioning,shapes.symbols,chains,shadings,fadings,intersections}
\usetikzlibrary{trees}
\usetikzlibrary{decorations.markings} 
\tikzset{double line with blue arrow/.style args={#1,#2}{decorate,decoration={markings,%
mark=at position 0 with {\coordinate (ta-base-1) at (0,1pt);
\coordinate (ta-base-2) at (0,-1pt);},
mark=at position 1 with {\draw[blue, #1] (ta-base-1) -- (0,1pt);
\draw[blue, #2] (ta-base-2) -- (0,-1pt);
}}}}
\usepackage{subcaption}     
\usepackage[colorlinks=true,linkcolor=blue]{hyperref}
\usepackage{cleveref}       
\usepackage{tcolorbox}      
\tcbuselibrary{most}        
\usepackage{bm}             
\usepackage{slashed}        

\usepackage{bbm}
\usepackage[T1]{fontenc}

\usepackage{tocloft} 




\newcommand{\ket}[1]{{\left| #1 \right>}}

\newcommand{\abracket}[1]{\left\langle#1\right\rangle}

\newcommand{\bracket}[1]{\left(#1\right)}

\renewcommand{\O}{\mathcal{O}}

\newcommand{\cA}{\mathcal A} 
\newcommand{\R}{\mathbb{R} }
\newcommand{\C}{\mathbb{C} }

\newcommand{\g}{{\mathfrak{g}}}

\newcommand{\pa}{\partial}

\newcommand{\OO}{{\mathcal O}}

\def\bC{\mathbb C}

\def\bH{\mathbb H}

\def\bK{\mathbb K}

\def\bP{\mathbb P}
\def\bQ{\mathbb Q}
\def\bR{\mathbb R}
\def\bZ{\mathbb Z}

\newcommand{\bi}{\begin{enumerate}}
\newcommand{\ei}{\end{enumerate}}
\newcommand{\be}{\begin{equation}}
\newcommand{\ee}{\end{equation}}
\newcommand{\bea}{\begin{equation*}\begin{aligned}}
\newcommand{\eea}{\end{aligned}\end{equation*}}
\newcommand{\beq}{\begin{eqnarray}}
\newcommand{\eeq}{\end{eqnarray}}
\newcommand{\ba}{\begin{array}}
\newcommand{\ea}{\end{array}}

\newcommand{\ols}[1]{\mskip.3\thinmuskip\overline{\mskip-.3\thinmuskip {#1} \mskip-.3\thinmuskip}\mskip.3\thinmuskip} 
\newcommand{\on}{\operatorname}

\newcommand{\nord}[1]{\,:\mathrel{#1}:\,}

\newcommand{\lb}{\left(}
\newcommand{\rb}{\right)}
\newcommand{\lcb}{\left\{}
\newcommand{\rcb}{\right\}}

\newcommand{\lsb}{\left[}
\newcommand{\rsb}{\right]}
\newcommand{\lan}{\left\langle}
\newcommand{\ran}{\right\rangle}

\newcommand{\ra}{\,\rightarrow\,}

\newcommand{\LRA}{\,\Leftrightarrow\,}

\def\Xint#1{\mathchoice
   {\XXint\displaystyle\textstyle{#1}}%
   {\XXint\textstyle\scriptstyle{#1}}%
   {\XXint\scriptstyle\scriptscriptstyle{#1}}%
   {\XXint\scriptscriptstyle\scriptscriptstyle{#1}}%
   \!\int}
\def\XXint#1#2#3{{\setbox0=\hbox{$#1{#2#3}{\int}$}
     \vcenter{\hbox{$#2#3$}}\kern-.5\wd0}}

\def\dint{\Xint-}

\def\dashint{\Xint-}

\def\blt{\bullet}

\def\hf{\frac{1}{2}}
\def\p{\partial}

\def\pb{\ols \p}

\def\sym{\on{Sym}}
\def\asym{\scalebox{.9}[1]{$\bigwedge$}}

\def\gsp{\on{Sp}}

\def\bC{\mathbb C}

\def\bH{\mathbb H}

\def\bK{\mathbb K}

\def\bP{\mathbb P}
\def\bQ{\mathbb Q}
\def\bR{\mathbb R}
\def\bZ{\mathbb Z}

\def\fg{\mathfrak{g}}
\def\fh{\mathfrak{h}}
\def\sp{\mathfrak{sp}}

\def\sO{\mathscr O}

\def\cA{\mathcal A}

\def\cD{\mathcal D}
\def\cH{\mathcal H}

\def\cL{\mathcal L}
\def\cM{\mathcal M}

\def\cO{\mathcal O}

\def\cV{\mathcal V}
\def\cW{\mathcal W}


\newcommand{\E}{{\mathbb E}}   

\newcommand{\cE}{{\mathcal E}}



\renewcommand{\L}{\mathscr{L} }

\newcommand{\ut}[1]{$\underline{\text{#1}}$}



\newcommand{\lr}[1]{\left\{ #1\right \}}


\renewcommand{\a}{\alpha}
\renewcommand{\b}{\beta}

\newcommand{\e}{\varepsilon }

\newcommand{\inv}{^{-1}}

\newcommand{\w}{\wedge}

\newcommand{\lra}{ \longrightarrow}

\newcommand{\tc}[1]{{\raisebox{.5pt}{\textcircled{\raisebox{-.9pt}{#1}}}}}

\newcommand{\EE}{{\mathcal{E}}}

\newcommand{\Lie}{\text{Lie}}

\newcommand{\blue}[1]{{\color{blue}{#1}}}






\DeclareMathOperator{\Hom}{Hom}

\DeclareMathOperator{\Map}{Map}

\DeclareMathOperator{\Sym}{Sym}
\DeclareMathOperator{\Tr}{Tr}

\let\Im\undefined

\DeclareMathOperator{\Im}{Im}



\theoremstyle{plain}
\newtheorem{thm}{Theorem}[section]
\newtheorem{thm-defn}{Theorem/Definition}[section]

\newtheorem{lem-defn}[thm]{Lemma/Definition}
\newtheorem{prop}[thm]{Proposition}
\newtheorem{cor}[thm]{Corollary}
\newtheorem*{conj}{Conjecture}

\newtheorem{eg}[thm]{Example}

\theoremstyle{definition}
\newtheorem{defn}[thm]{Definition}

\theoremstyle{remark}
\newtheorem{rmk}[thm]{Remark}

\begin{document}

\newcommand{\myauthor}{Si Li} 

 \title{Quantization and Algebraic Index}
 \author{\myauthor} 
  \date{}

  \maketitle
{\centering Dedicated to Prof. S.-T. Yau on the occasion of his 75th Birthday \par}

\abstract{This article reviews the program on connecting Batalin-Vilkovisky (BV) quantization with index theories of algebraic type. We explain how the classical algebraic index theorem can be proved in terms of BV quantization of topological quantum mechanics. This is generalized to 2d chiral CFT in which we present an elliptic chiral analog of the algebraic index theory. As an application, we show how the generating function of all genus Gromov-Witten invariants on elliptic curves is mirror equivalent to an elliptic chiral index in the mirror BCOV theory. 
}

\tableofcontents

\section{Introduction}

\begin{center}
\tikzset{every picture/.style={line width=0.75pt}} 

\begin{tikzpicture}[x=0.75pt,y=0.75pt,yscale=-1,xscale=1,scale=0.8]

\draw   (218,64.5) .. controls (218,47.1) and (265.12,33) .. (323.25,33) .. controls (381.38,33) and (428.5,47.1) .. (428.5,64.5) .. controls (428.5,81.9) and (381.38,96) .. (323.25,96) .. controls (265.12,96) and (218,81.9) .. (218,64.5) -- cycle ;
\draw   (66,197.5) .. controls (66,180.1) and (113.12,166) .. (171.25,166) .. controls (229.38,166) and (276.5,180.1) .. (276.5,197.5) .. controls (276.5,214.9) and (229.38,229) .. (171.25,229) .. controls (113.12,229) and (66,214.9) .. (66,197.5) -- cycle ;
\draw   (371,196.5) .. controls (371,179.1) and (418.12,165) .. (476.25,165) .. controls (534.38,165) and (581.5,179.1) .. (581.5,196.5) .. controls (581.5,213.9) and (534.38,228) .. (476.25,228) .. controls (418.12,228) and (371,213.9) .. (371,196.5) -- cycle ;
\draw   (191.15,162.9) -- (208.46,139.71) -- (211,143.23) -- (255.78,110.92) -- (253.24,107.4) -- (280.7,98.28) -- (263.39,121.47) -- (260.85,117.95) -- (216.07,150.26) -- (218.61,153.78) -- cycle ;
\draw   (362.14,98.72) -- (389.88,106.97) -- (387.46,110.56) -- (433.24,141.43) -- (435.66,137.83) -- (453.71,160.46) -- (425.97,152.21) -- (428.39,148.62) -- (382.61,117.75) -- (380.19,121.35) -- cycle ;
\draw   (283.53,197.03) -- (303.59,187.99) -- (303.62,192.45) -- (343.85,192.18) -- (343.82,187.72) -- (364,196.5) -- (343.94,205.54) -- (343.91,201.09) -- (303.68,201.35) -- (303.71,205.81) -- cycle ;

\draw (230,53) node [anchor=north west][inner sep=0.75pt]   [align=left] {\blue{Quantum Field Theory}};
\draw (85,186) node [anchor=north west][inner sep=0.75pt]   [align=left] {\blue{Homological Algebra}};
\draw (421,187) node [anchor=north west][inner sep=0.75pt]   [align=left] {\blue{Index Theory}};

\end{tikzpicture}

 \end{center}

It is well known that the Atiyah-Singer index theorem is closely related to supersymmetric/topological quantum mechanics \cite{Atiyah-circular,getzler1983pseudodifferential,windey1983supersymmetric,witten1982constraints}. Though not rigorous, this physics interpretation provides a clear and deep insight into the origin of index theorem via the geometry of the loop space. As a natural generalization,  one can replace the loop by the two-dimensional torus. This leads to Witten's proposal  \cite{witten1987elliptic,witten1988index} for the index of Dirac operators on loop spaces.

In \cite{Fedosov-book,Nest-Tsygan}, Fedosov and Nest-Tsygan established the algebraic index theorem for deformation quantized algebras as the algebraic analogue of Atiyah-Singer index theorem. It was further shown  \cite{nest1996formal} that the original Atiyah-Singer index theorem can be deduced from this algebraic one. In \cite{GLL-BV, GLX-Effective}, we established an exact connection between the algebraic index theorem and topological quantum mechanics via a trace map constructed in the Batalin-Vilkovisky(BV) formalism \cite{BATALIN198127}. This sets up a mathematical understanding of the physics approach to index theorem in terms of an exact low-energy effective quantum field theory \cite{GLX-Effective}. Such connection between quantization and algberaic index can be naturally extended to quantum field theory on other geometric objects, such as the torus. In \cite{LS-VOA,Gui:2021dci}, we developed the effective BV quantization theory of two-dimensional chiral theory and established a chiral analogue of the algebraic index theory on the torus. 

This paper reviews the program on BV quantization and  index theories of algebraic type developed in \cite{GLL-BV, GLX-Effective,LS-VOA,Gui:2021dci}. Here we summarize the main structures. 

 Let us denote by $\mathbf{k}$ the field of Laurent series $\mathbb{C}((\hbar))$. Roughly speaking, BV quantization in quantum field theory on $X$ leads to the following data (we will give more details in the body of the text)
  \begin{enumerate}
    \item A factorization algebra  of local observables (we follow the set-up in \cite{costello2021factorization}).
 $$
 \mathrm{Obs}:\quad \text{a $\mathbf{k}$-module equipped with certain algebra structure.}
 $$
 It carries an algebraic structure called factorization product (or operator product expansion in physics terminology).
    \item  A (factorization) chain complex
 $$
 C_{\bullet}(\mathrm{Obs}):\quad \text{a $\mathbf{k}$-chain complex}, \qquad d: \quad \text{the differential}.
 $$
It captures the algebraic structure and global information from local observables.
\item A BV algebra $(O_{\mathrm{BV}}, \Delta)$
  $$
  O_{\mathrm{BV}}:\quad \text{a BV algebra over}\ \mathbb{C}, \qquad \text{$\Delta:\quad$ the BV operator}
  $$
  together with a BV integration map
   $$
 \int_{\mathrm{BV}}: O_{\mathrm{BV}}\to \mathbb{C}, \qquad \text{such that} \qquad \int_{\mathrm{BV}} \Delta(-)=0.
 $$
 In physics, $O_{\mathrm{BV}}$ are functions on the space of zero modes at low energy. $\int_{\mathrm{BV}}$ is a choice (related to the gauge fixing) of the integration map on zero modes. It will be $\hbar$-linearly extended when the quantum parameter $\hbar$ is involved. 
 \item A $\mathbf{k}$-linear map (encoding the path integral in physics)
 $$
 \mathrm{\mathbf{Tr}}:C_{\bullet}(\mathrm{Obs})\rightarrow O_{\mathrm{BV},\mathbf{k}}=O_{\mathrm{BV}}\otimes_{\mathbb{C}}\mathbb{C}((\hbar))
 $$
  satisfying the \textbf{quantum master equation} (QME)
  $$
  (d+\hbar\Delta) \mathrm{\mathbf{Tr}}=0.
  $$
 In other words, QME says $\mathbf{Tr}$ is a chain map intertwining $d$ and $-\hbar \Delta$. In physics, it describes the quantum gauge consistency condition in terms of BV formalism. Index is obtained as the partition function of the model, which can be formulated as
$$
\text{Index}= \int_{\mathrm{BV}}  \mathrm{\mathbf{Tr}}(1).
$$
  \end{enumerate}

In Section \ref{sec:BV-quantization}, we review the theory of effective BV quantization. In Section \ref{sec:TQM}, we explain the 1d example of topological quantum mechanics on the circle and show how the above structures lead to the  algebraic index theorem. In this case
\begin{itemize}
    \item The factorization algebra is the Weyl algebra: $\on{Obs}=\cW_{2n}$.
    \item The factorization complex is the Hochschild chain complex $\lb C_\blt(\cW_{2n}),b\rb$.
    \item BV algebra on zero modes: $\lb\cA,\Delta\rb=\lb \Omega^\blt(\bR^{2n}),\cL_{\omega^{-1}}\rb.$
    \item Free correlation map 
    \bea \lan -\ran: C_{\blt}(\cW_{2n})\to \Omega^\blt(\bR^{2n})((\hbar)), \quad b\mapsto \hbar \cL_{\omega^{-1}}.\eea
    \item $\on{Index}=\int_{BV}\lan 1\ran=\lsb e^{\omega_\hbar/\hbar}\widehat{A}\rsb$.
\end{itemize}

In Section \ref{sec:2d1}, we explain the 2d chiral example and the elliptic chiral analogue of algebra index via $\beta\gamma-bc$ system. In this case, the factorization complex is the chiral chain complex of the corresponding vertex operator algebra. The trace map arising from BV quantization on elliptic curves will be called the \emph{elliptic trace map}. 

The above two examples in Section \ref{sec:BV-quantization} and \ref{sec:2d1} share a special property: they are both UV finite theories. A conjectured structure for BV quantization of general UV finite theory is presented in Section \ref{sec:UV}.

\begin{table}[h]
  \centering
{
\begin{tabular}{|c|c|}
\hline
 1d TQM & 2d Chiral QFT\\
\hline
 \makecell[c]{Associative algebra} & \makecell[c]{Vertex operator algebra} \\
\hline
 {\color{red}Hochschild homology} & {\color{blue}Chiral homology} \\
\hline
 \makecell[c]{BV QME:\\$\displaystyle {\color{red}(\hbar \Delta +b) \langle -\rangle _{1d} =0}$} & \makecell[c]{BV QME:\\$\displaystyle {\color{blue}(\hbar \Delta +d_{ch}) \langle -\rangle _{2d} =0}$} \\
\hline
 \makecell[c]{$\displaystyle \langle \mathcal{O}_{1} \otimes \cdots \otimes \mathcal{O}_{n} \rangle _{1d} =$integrals\\ on the {\color{red}compactified } \\ configuration spaces of  $\displaystyle S^{1}$} & \makecell[c]{$\displaystyle \langle \mathcal{O}_{1} \otimes \cdots \otimes \mathcal{O}_{n} \rangle _{2d} =${\color{blue}regularized} \\{\color{blue}integrals} of singular forms on $\displaystyle \Sigma ^{n}$} \\
\hline
 Algebraic Index & Elliptic Chiral Algebraic Index \\
 \hline
\end{tabular}}
  \end{table}

\noindent \textbf{Acknowledgement:} The author would like to thank Prof. S.-T. Yau for his invaluable support and encouragement in my career. This work is supported by the National Key Research and Development Program of China  (NO. 2020YFA0713000).

\section{Effective Theory of BV Quantization}\label{sec:BV-quantization}

In this section, we review Costello's homotopic theory of effective BV quantization \cite{costello2011renormalization}. This is the basic framework that we will use to establish  the connection between BV quantization and index theories of algebraic type. We follow the presentation in \cite{LS-VOA}. 

\subsection{(-1)-shifted Symplectic Structure}
We first explain that classical field theories and their quantizations have a  universal description in terms of (-1)-shifted symplectic structure. This is particularly convenient to quantize gauge theories in the BV framework.

We start with the finite dimensional toy model. Let $(V,Q,\omega)$ be a finite dimensional dg (differential graded) symplectic space. Here
\begin{itemize}
    \item V is a finite dimensional graded vector space.
    \item $Q:V \longrightarrow V$ differential, $\deg Q=1$ and $Q^2=0$.
      \item $\omega:\w^2 V\longrightarrow\R$ non-degenerate pairing of {deg=$-1$}, that is, 
      $$
      \omega(a,b)=0, \text{ unless } |a|+|b|=1.
      $$
      \item $\omega$ is Q-compatible $ Q(\omega)=0$, i.e., 
      $$
    \omega(Q(a),b) +(-1)^{|a|}\omega(a,Q(b))=0. 
      $$
\end{itemize}
The non-degeneracy of $\omega$ leads to linear isomorphisms
\begin{align*}
       \omega: V^\vee &\overset{\sim}{\longrightarrow} V[1]\\
    \Longrightarrow \quad \w^2(V^\vee)&\overset{\sim}{\longrightarrow} \w^2(V[1]) \simeq Sym^2(V)[2]\\
    \omega &\longleftrightarrow K[2]
\end{align*}
  
Here $K=\omega^{-1}\in \Sym^2(V)$ is the Poisson Kernel and
$$
\deg(K)=1,~~Q(K)=0. 
$$
We obtain a triple $(A,Q,\Delta)$ as follows
\begin{itemize}
    \item $A=\O(V):=\widehat{\Sym}(V^\vee)$ (formal power series on V)
    \item $Q:A\lra A$ derivation induced dually from $Q:V\lra V$
    \item BV operator
    $$
    \Delta=\Delta_K:A\lra A
    $$
    by contracting with the Poisson Kernel $K$
    \begin{align*}
        \Delta_K: \Sym^m(V)\lra \Sym^{m-2} (V). 
\end{align*}
       Explicitly, for $\a_i\in V^\vee$
       \begin{align*}
       \Delta_K(\a_1\otimes\cdots\otimes\a_m)
         = \sum_{i<j}\pm \langle K,\a_i\otimes\a_j \rangle\a_1\otimes\cdots\widehat{\a_i}\otimes\cdots\widehat{\a_i}\otimes\cdots \otimes\a_m. 
    \end{align*}
    \item $\Delta_K$ induces a BV bracket on $A$ by
    $$
    \{a,b\}:=\Delta_K(ab)-(\Delta_K a)b-(-1)^{|a|}a \Delta_kb
    $$
    Here $|a|$ is the degree of $a$. 
    \item Since $K$ is $Q-$closed, we have
    $$
    [Q,\Delta_K]:=Q\Delta_K+\Delta_K Q=0
    $$
\end{itemize}
The triple $(A,Q,\Delta)$ is exactly the data of a DGBV. Given such a DGBV, we can talk about
\begin{itemize}
\item \underline{Classical master equation}: 
\begin{align*}
    QI_0+\frac{1}{2}\lr{I_0,I_0}=0\qquad \text{for}\quad I_0\in A, \ \deg(I_0)=0. 
\end{align*}
Then the classical BRST operator $\delta=Q+\lr{I_0,-}$ satisfies $\delta^2=0$. 
\item \underline{Quantum master equation}:
\begin{align*}
    QI+\hbar I+\frac{1}{2}\lr{I,I}=0
    \Longleftrightarrow(Q+\hbar \Delta)e^{I/\hbar}=0, \quad \text{for}\quad    I=I_0+\hbar I_1+...\in A[[\hbar]]. 
\end{align*}
Then the quantum BRST operator $\delta^\hbar=Q+\hbar\Delta+\lr{I,-}$ satisfies
$
(\delta^\hbar)^2=0.
$
\end{itemize}

\subsubsection*{Classical Field Theory}
Now we discuss the QFT situation. For our purpose, we focus on theories where fields are sections of vector bundles.  A classical field theory can be organized into $\infty-$dimensional (-1)-shifted dg symplectic space
    $$
    (\EE,Q,\omega)
    $$
    \begin{itemize}
        \item $\EE=\Gamma(X,E^\blt)$ the space of fields. Here $E^\blt$ is a graded vector bundle on $X$.
        \item $(\EE,Q)$ elliptic complex
\[\begin{tikzcd}
	{...} & {\EE^{-1}} & {\EE^0} & {\EE^1} & {...}
	\arrow[from=1-1, to=1-2]
	\arrow["Q", from=1-2, to=1-3]
	\arrow["Q", from=1-3, to=1-4]
	\arrow["Q", from=1-4, to=1-5]
\end{tikzcd}\]
For example, $Q=\bar{\partial}$ or $d$. 
    \item $\omega$: {local} (-1)-symplectic pairing
$$
\omega(\a,\b)=\int_X\langle\a,\b\rangle, ~~\forall\a,\b\in\EE
$$
and compatible with Q.
    \end{itemize}

\begin{eg}[Chern-Simons Theory]
Let $X$ be a $\dim=3$ manifold, and $\g$ be a Lie algebra with trace pairing $Tr:\g\otimes\g\lra \R$. The space of fields is 
$$
\EE=\Omega^\blt (X,\g)[1].
$$
The degree shifting $[1]$ gives the following intepretation. 
\begin{center}
\begin{tabular}{|c|c|c|c|c|} 
  \hline 
  & $\Omega^0 (X,\g)$ & $\Omega^1 (X,\g)$ & $\Omega^2(X,\g)$ & $\Omega^3 (X,\g)$ \\ 
  \hline 
deg & -1 & 0 & 1 & 2\\
\hline
& $c$ & $A$ & $A^\vee$ & $c^\vee$\\
\hline
& ghost & field & anti-field & anti-ghost \\
  \hline 
\end{tabular}
\end{center}

$Q=d$ is the de Rham differential. The $(-1)$-symplectic pairing is
    $$
    \omega(\a,\b)=\int_XTr(\a\w\b),\quad \a,\b\in\EE
    $$
which pairs $0$-forms with $3$-forms and pairs $1$-forms with $2$-forms. 
\end{eg}
\begin{eg}[Scalar Field Theory in BV formalism] The field complex $\EE$ is 
\begin{center}
\begin{tikzcd}
C^\infty(M) \ar[r,"Q=\Delta+m^2"] & C^\infty(M)\\
\deg=0  & \deg=1 \\
\phi & \phi^\vee
\end{tikzcd}
\end{center}
The $(-1)$-symplectic pairing is $$
\omega(\phi,\phi^\vee)=\int_M\phi\phi^\vee.
$$
\end{eg}

\subsubsection*{UV Problem}

Let us now perform the same construction of DGBV algebra following the toy model. We first need the notion of "functions" $\O(V)=\widehat{\Sym}(V^\vee)$ on $V$.
\begin{itemize}
    \item \underline{linear function}: we have to take a continuous linear dual and so
    $$
    \EE^\vee =\Hom_X(\EE,\R)
    $$ 
    is given by distributions.
    \item $(\EE^\vee)^{\otimes n}=\Hom_{X\times...\times X}(\EE^{\otimes n},\R)$ are distributions on $X^n$. Here $$
    \EE^{\otimes n}=\Gamma(X^n,E^{\boxtimes n})
    $$ 
    is the completed tensor product. Thus 
    $$
    \Sym^m (\EE^\vee):=(\EE^\vee)^{\otimes m}/S_m
    $$
    is well-defined by distributions on $X^m$. As a result, we can form
    $$
    \O(\EE)=\prod_{m\ge 0}\Sym^m(\EE^\vee)
    $$ 
    representing (formal) functions on $\EE$.
    \item $Q:\EE\lra\EE$ induces duality $Q:\EE^\vee\lra\EE^\vee$ on distributions, and gives rise to 
    $$
    Q:\O(\EE)\lra\O(\EE).
    $$
    \item \underline{BV operator}: Let 
    $
    K=\omega^{-1}$ be the Poisson kernel as above. Since 
    $$
    \omega=\int\langle-,-\rangle
    $$
    is an integral, its inverse $K$ is a $\delta$-function distribution supported on the diagonal of $X\times X$. Thus
    $K$ is \underline{NOT} a smooth element in $\Sym^2(\EE)$, but a distributional section. As a result, the naive BV operator 
    $$
    \Delta_K: \Sym^m(\EE^\vee)\to \Sym^{m-2}(\EE^\vee)
    $$
    is \underline{ill-defined} since we can not pair two distributions. This is essentially the {Ultra-Violet problem}. Renormalization is needed in the quantum theory!
\end{itemize}

Before we move on to discuss the issue of renormalization, let us point out that the classical theory is actually well-behaved. Let $\O_{loc}(\EE)\subset\O(\EE)$ denote the subspace of local functionals, i.e., those by integrals of lagrangian densities
$$\O_{loc}(\EE)=\lr{\int_X\L(...)}
$$
Although the BV operator $\Delta_K$ is ill-defined, the associated BV bracket $\lr{-,-}$ is actually well-defined on local functionals since $\delta$-function can be integrated.
$$
\lr{-,-}:\O_{loc}(\EE)\otimes\O_{loc}(\EE)\lra\O_{loc}(\EE)
$$
\[
\begin{tikzpicture}
    \coordinate (a) at (0,0);
    \coordinate (b) at (2,0);
    \coordinate (a1) at (-0.8,0.5);
    \coordinate (a2) at (-1,0);
    \coordinate (a3) at (-0.8,-0.5);
    \coordinate (b1) at (2.5,0.5);
    \coordinate (b3) at (2.5,-0.5);

    \draw[thick] (a) -- (a1);
    \draw[thick] (a) -- (a2);
    \draw[thick] (a) -- (a3);
    \draw[thick] (b) -- (b1);

    \draw[thick] (b) -- (b3);
    \draw[thick] (a) -- (b);

    \node at (1,0.25) {$K\sim \delta$};
    \node at (3,0) {{\Huge=}};
    \node at (4.5,0) {\large$\int_X(-)$};
    \node at (0,-1) {$\int_X\L_1$};
    \node at (2,-1) {$\int_X\L_2$};
\end{tikzpicture}
\]
In other words, 
\begin{itemize}
\item CME makes sense for local functionals
\item QME needs renormalization
\end{itemize}
We refer to \cite{costello2011renormalization} for detailed discussions on this issue. 

\begin{eg}[Chern-Simons theory]
$\EE=\Omega^\blt (X,\g)[1]$
\begin{center}
\begin{tabular}{ccccc} 
  & $\Omega^0 (X,\g)$ & $\Omega^1 (X,\g)$ & $\Omega^2(X,\g)$ & $\Omega^3 (X,\g)$ \\ 
deg & -1 & 0 & 1 & 2\\
& $c$ & $A$ & $A^\vee$ & $c^\vee$\\
& ghost & field & anti-field & anti-ghost \\
\end{tabular}
\end{center}
Let $\cA=C+A+A^\vee+C^\vee\in \EE$ denote the master field collecting all components. Then the BV Chern-Simons action is
$$
CS[\cA]=\int_XTr(\hf \cA\w d\cA +\frac{1}{6}\cA\w[\cA,\cA]).
$$
This takes the same form as ordinary Chern-Simons except that we have expanded $\cA$ to get terms containing different components. The first quadratic term is denoted by $S_{free}$, the free part.  The second cubic term is denoted by $I$, the interaction part.

    $CS$ satisfies the following classical master equation 
    $$
    \lr{CS,CS}=0.
    $$
This follows from the general argument that classical gauge theory is organized into a solution of classical master equation. Let us separate the free part and interaction
$$
CS=S_{free}+I.
$$

It is easy to see that 
$$
\lr{S_{free},-}=d\ (=Q)
$$
which corresponds to the de Rham differential. Thus
\begin{align*}
    &\lr{CS,CS}=0\\
    \Leftrightarrow &\quad  \hf \lr{S_{free},S_{free}}+\lr{S_{free},I}+\hf\lr{I,I}=0\\
    \Leftrightarrow & \quad QI+\hf \lr{I,I}=0
\end{align*}
This is precisely the form of classical master equation in our DGBV.
\end{eg}

\subsection{Effective Renormalization}
Assume we have a classical field theory $(\EE=\Gamma(X,E^\blt),Q,\omega)$ with classical local functional $I_0$ (interaction) satisfying CME
$$
QI_0+\hf \lr{I_0,I_0}=0. 
$$
As we explained before, quantization asks for
$$
I_0\lra I=I_0+\hbar I_1+\hbar I_2+...\in\O(\EE)[[\hbar]]
$$
satisfying QME
$$
``QI_0+\hf \lr{I_0,I_0}+\hbar\Delta I=0".
$$
\ut{Problem:} $\Delta I$ is \ut{NOT} well-defined. In the following, we explain Costello's homotopic renormalization theory to solve this problem.

\subsubsection*{Toy Model}
To motivate the construction, let us look back again at the toy model where $(V,Q,\omega)$ is finite dimensional (-1)-shifted dg symplectic space. The Poisson kernel 
$$
K_0\in \Sym^2(V)
$$
has $\deg(K_0)=1$ and satisfies $Q(K_0)=0$. This allows us to construction the BV operator $\Delta_0$ by contracting with $K_0$ and obtain the DGBV triple $(A,Q,\Delta_0)$. 

Let us now consider the change of $K_0$ by chain homotopy. Let 
$$
P\in \Sym^2(V),\quad \deg(P)=0. 
$$
Define
$$
K_P=K_0+Q(P)=K_0+(Q\otimes 1+1\otimes Q) P.
$$
We again have
\begin{itemize}
    \item  $K_P\in \Sym^2(V),\deg(K_P)=1$
    \item $Q(K_P)=0$
\end{itemize}
Thus we can construct a new BV operator 
$$
\Delta_P=\text{contraction with } K_P 
$$
such that $(\O(V),Q,\Delta_P)$ forms a new {DGBV}. 

To see the relation with the original DGBV, denote
$$
\p_P:\Sym^m(V^\vee)\lra \Sym^{m-2}(V^\vee)
$$
where $\p_P$ is a 2nd order operator of contracting with $P\in \Sym^2(V)$
\begin{prop}
    The following diagram commutes
    \[\begin{tikzcd}
	{\O(V)[[\hbar]]} && {\O(V)[[\hbar]]} \\
	\\
	{\O(V)[[\hbar]]} && {\O(V)[[\hbar]]}
	\arrow["{e^{\hbar \p_P}}", from=1-1, to=1-3]
	\arrow["{Q+\hbar \Delta_0}", from=1-1, to=3-1]
	\arrow["{Q+\hbar \Delta_P}", from=1-3, to=3-3]
	\arrow["{e^{\hbar \p_P}}", from=3-1, to=3-3]
\end{tikzcd}\]
i.e.
$$
(Q+\hbar \Delta_P)e^{\hbar \p_P}=e^{\hbar \p_P}(Q+\hbar \Delta_0).
$$
\end{prop}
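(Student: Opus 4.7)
The plan is to reduce the intertwining identity to a few elementary commutator computations on $\mathcal{O}(V)$. The three operators $\partial_P$, $\Delta_0$, and $\Delta_P$ are all second-order operators obtained by contracting with fixed elements of $\Sym^2(V)$ (namely $P$, $K_0$, $K_P$), so they are constant-coefficient differential operators on $\mathcal{O}(V)=\widehat{\Sym}(V^\vee)$ and must commute pairwise. I would first record this, together with the observation that linearity of the contraction in its bivector argument gives
\[
\Delta_P-\Delta_0 \;=\; \partial_{K_P-K_0}\;=\;\partial_{Q(P)},\qquad Q(P):=(Q\otimes 1+1\otimes Q)P.
\]

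Next, I would compute $[Q,\partial_P]$ directly on symmetric generators. Since $Q$ is the derivation on $\mathcal{O}(V)$ induced by dualizing $Q\colon V\to V$, a short check using the second-order Leibniz rule for $\partial_P$ yields
\[
[Q,\partial_P]\;=\;-\partial_{Q(P)}.
\]
Combined with the previous identity this is already the infinitesimal ($O(\hbar)$) form of the statement: $[Q,\partial_P]+(\Delta_P-\Delta_0)=0$.

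To upgrade to the full exponentiated identity I would invoke the graded Hadamard lemma. Because $[Q,\partial_P]=-\partial_{Q(P)}$ is itself constant-coefficient, it commutes with $\partial_P$, so $[\partial_P,[Q,\partial_P]]=0$ and
\[
[Q,e^{\hbar\partial_P}]\;=\;\hbar\,[Q,\partial_P]\,e^{\hbar\partial_P}\;=\;-\hbar\,\partial_{Q(P)}\,e^{\hbar\partial_P}.
\]
Since $\Delta_0$ also commutes with $\partial_P$, I can pass it freely through the exponential, and the desired difference collapses to
\[
(Q+\hbar\Delta_P)e^{\hbar\partial_P}-e^{\hbar\partial_P}(Q+\hbar\Delta_0)\;=\;\hbar\bigl((\Delta_P-\Delta_0)+[Q,\partial_P]\bigr)e^{\hbar\partial_P}\;=\;0.
\]

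The step most likely to need care is the graded sign in $[Q,\partial_P]=-\partial_{Q(P)}$, since one must track how $Q\colon V\to V$ dualizes to a degree-$1$ derivation on $\widehat{\Sym}(V^\vee)$ and how it interacts with the second-order Leibniz rule of $\partial_P$. Everything else is formal: the content of the proposition is that the first-order commutator $[Q,\partial_P]$ exactly cancels the BV-operator correction $\Delta_P-\Delta_0$, and the Hadamard lemma promotes this infinitesimal cancellation to the full exponential statement.
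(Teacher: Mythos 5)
Your proof is correct and takes essentially the same route as the paper, whose entire proof is the one-line assertion that the identity ``follows from the chain homotopy relation $K_P=K_0+Q(P)$''; your computation --- $\Delta_P-\Delta_0=\partial_{Q(P)}$, $[Q,\partial_P]=-\partial_{Q(P)}$, and exponentiation using the mutual commutativity of the constant-coefficient contraction operators --- is exactly the content implicit in that assertion. The sign you flag in $[Q,\partial_P]=-\partial_{Q(P)}$ is indeed the only delicate point, and it is fixed by the convention that $Q$ on $V^\vee$ is the dual of $Q$ on $V$ making the evaluation pairing $Q$-invariant.
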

\begin{proof} This follows from the chain homotopy relation $K_P=K_0+Q(P)$. 
\end{proof}

\begin{cor}
    Assume $I\in \O(V)[[\hbar]]$ satisfies QME
    $$
    (Q+\hbar \Delta_0)e^{I/\hbar}=0
    $$
    in the DGBV $(\O(V),Q,\Delta_0)$. Then $\tilde{I}\in \O(V)[[\hbar]]$  satisfies QME
     $$
    (Q+\hbar \Delta_P)e^{\tilde{I}/\hbar}=0
    $$
    in the  DGBV $(\O(V),Q,\Delta_P)$. Here $\tilde{I}$ is related to I by
    $$
    e^{\tilde{I}/\hbar}=e^{\hbar \p_P}e^{I/\hbar}
    $$   
\end{cor}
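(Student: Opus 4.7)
The plan is to deduce the corollary as a direct consequence of the intertwining relation
\[
(Q+\hbar\Delta_P)\,e^{\hbar\p_P} = e^{\hbar\p_P}\,(Q+\hbar\Delta_0)
\]
from the preceding proposition. First I would apply the operator $e^{\hbar\p_P}$ to both sides of the hypothesis $(Q+\hbar\Delta_0)e^{I/\hbar}=0$, obtaining $e^{\hbar\p_P}(Q+\hbar\Delta_0)e^{I/\hbar}=0$. Next I would use the intertwining identity to push $e^{\hbar\p_P}$ past $Q+\hbar\Delta_0$, yielding
\[
(Q+\hbar\Delta_P)\,\bigl(e^{\hbar\p_P}e^{I/\hbar}\bigr)=0.
\]
Substituting the definition $e^{\tilde I/\hbar}=e^{\hbar\p_P}e^{I/\hbar}$ then gives exactly the desired QME $(Q+\hbar\Delta_P)e^{\tilde I/\hbar}=0$ in the new DGBV $(\O(V),Q,\Delta_P)$.

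The one point that needs a brief justification is that $\tilde I$ as defined is actually a well-defined element of $\O(V)[[\hbar]]$, i.e.\ that $e^{\hbar\p_P}e^{I/\hbar}$ can be written in the exponential form $e^{\tilde I/\hbar}$ for some $\tilde I$. Here I would invoke the standard Feynman-diagrammatic interpretation: $\hbar\p_P$ is a second-order differential operator of contraction with $P$, and the expansion
\[
e^{\hbar\p_P}e^{I/\hbar} = \sum_{\Gamma}\frac{\hbar^{g(\Gamma)}}{|\mathrm{Aut}(\Gamma)|}\,W_\Gamma(P,I)
\]
organizes into a sum over graphs with propagator $P$ and vertices given by the Taylor components of $I$. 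Restricting to connected graphs gives $\tilde I/\hbar$, so that $\tilde I = \hbar\log\bigl(e^{\hbar\p_P}e^{I/\hbar}\bigr)$ lies in $\O(V)[[\hbar]]$ with classical limit $\tilde I|_{\hbar=0}=I_0$.

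The main obstacle, as noted in the preceding proposition, is really the proposition itself rather than the corollary: one has to check that $[Q,\p_P]=\Delta_P-\Delta_0$ on $\O(V)$, which is precisely the infinitesimal form of the chain homotopy $K_P = K_0+Q(P)$. Granting that, the corollary is essentially a one-line consequence, and I would present it as such.
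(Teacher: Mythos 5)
Your proposal is correct and follows exactly the route the paper intends: the corollary is an immediate consequence of the intertwining relation $(Q+\hbar\Delta_P)e^{\hbar\p_P}=e^{\hbar\p_P}(Q+\hbar\Delta_0)$ from the preceding proposition, and your remark that $\tilde I$ is well defined as a sum over connected graphs matches the paper's own Wick/Feynman-diagram discussion immediately following the statement. Nothing further is needed.
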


 The operator $e^{\hbar \p_P}$ plays the role of integration with respect to the Gaussian measure. The relation $e^{\tilde{I}/\hbar}=e^{\hbar \p_P}e^{I/\hbar}$ can be read via Wick's Theorem as 
 
\bea 
\tikzset{every picture/.style={line width=0.75pt}}   
\begin{tikzpicture}[x=0.75pt,y=0.75pt,yscale=-1,xscale=1]

\draw    (192.5,32) -- (224.72,64.22) ;
\draw [shift={(224.72,64.22)}, rotate = 45] [color={rgb, 255:red, 0; green, 0; blue, 0 }  ][fill={rgb, 255:red, 0; green, 0; blue, 0 }  ][line width=0.75]      (0, 0) circle [x radius= 3.35, y radius= 3.35]   ;
\draw    (190.61,98.34) -- (224.72,64.22) ;
\draw [shift={(224.72,64.22)}, rotate = 315] [color={rgb, 255:red, 0; green, 0; blue, 0 }  ][fill={rgb, 255:red, 0; green, 0; blue, 0 }  ][line width=0.75]      (0, 0) circle [x radius= 3.35, y radius= 3.35]   ;
\draw    (318.27,96.34) -- (285.97,64.2) ;
\draw [shift={(285.97,64.2)}, rotate = 224.86] [color={rgb, 255:red, 0; green, 0; blue, 0 }  ][fill={rgb, 255:red, 0; green, 0; blue, 0 }  ][line width=0.75]      (0, 0) circle [x radius= 3.35, y radius= 3.35]   ;
\draw    (320,30) -- (285.97,64.2) ;
\draw [shift={(285.97,64.2)}, rotate = 134.86] [color={rgb, 255:red, 0; green, 0; blue, 0 }  ][fill={rgb, 255:red, 0; green, 0; blue, 0 }  ][line width=0.75]      (0, 0) circle [x radius= 3.35, y radius= 3.35]   ;
\draw   (224.72,64.13) .. controls (224.72,47.07) and (238.55,33.25) .. (255.6,33.25) .. controls (272.65,33.25) and (286.48,47.07) .. (286.48,64.13) .. controls (286.48,81.18) and (272.65,95) .. (255.6,95) .. controls (238.55,95) and (224.72,81.18) .. (224.72,64.13) -- cycle ;

\draw (174.92,67) node  [font=\fontsize{4.71em}{5.65em}\selectfont]  {$($};
\draw (338.42,66) node  [font=\fontsize{4.71em}{5.65em}\selectfont]  {$)$};
\draw (38,49.4) node [anchor=north west][inner sep=0.75pt]    {$\tilde{I} =\sum\limits_{\text{connected graphs}}$};
\draw (251,12.4) node [anchor=north west][inner sep=0.75pt]    {$P$};
\draw (247,102.4) node [anchor=north west][inner sep=0.75pt]    {$P$};
\draw (206,57.4) node [anchor=north west][inner sep=0.75pt]    {$I$};
\draw (296,57.4) node [anchor=north west][inner sep=0.75pt]    {$I$};
\draw (358,62.4) node [anchor=north west][inner sep=0.75pt]    {$.$};
\end{tikzpicture}
\eea
Here $I$ serves as vertices and $P$ for the propagator. Thus, Feynman diagrams give the required chain homotopy between different DGBV's.

\subsubsection*{Back to QFT} 
Now consider the QFT set-up $(\EE=\Gamma(X,E^\blt),Q,\omega)$. The problem is that the Poisson kernel $K_0=\omega\inv$ is a $\delta$-function distribution which leads to a singular BV operator. Nevertheless we know $K_0$ is $Q$-closed
\begin{align*}
Q(K_0)=0. 
\end{align*}
\ut{Costello's approach}: Using elliptic regularity
$$
H^\blt(\text{Distribution},Q)=H^\blt(\text{Smooth},Q). 
$$
Here ``Distribution" or ``Smooth" means distributional or smooth sections of relevant tensor bundles of $E^\blt$. Thus we can replace $K_0$ by a smooth object in its $Q$-cohomology class by
$$
K_0=K_r+Q(P_r). 
$$
Here $K_r$ is smooth while $P_r$ (called parametrix) is singular. Define
\begin{align*}
    &\Delta_r:\text{BV operator associated with $K_r$}
\end{align*}
Since $K_r\in \Sym^2(\E)$ is now smooth, the operator
$$
\Delta_r: \O(\EE)\to \O(\EE)\text{ is well-defined}.
$$
\begin{defn}
The DGBV $(\O(\EE),Q,\Delta_r)$ will be called the \textbf{effective DGBV} with respect to the regularization $r$. 
\end{defn}

Let $r'$ be another regularization with parametrix $P_{r'}$
$$
K_0=K_{r'}+Q(P_{r'}). 
$$
Then the two regularized Poisson kernels differ by a chain homotopy
$$
K_{r'}-K_{r}=Q(P^{r'}_r)
$$
where $P^{r'}_r\in \Sym^2(\EE)$ is smooth. Let
$$
\p_{P^{r'}_r}:\O(\EE)\lra\O(\EE)
$$
be the 2nd order operator of contracting with the smooth kernel $P^{r'}_r$.

The same argument as in the toy model gives the chain homotopy

\[\begin{tikzcd}
	{\big(\O(\EE)[[\hbar]],Q+\hbar\Delta_r\big)} & {} & {\big(\O(\EE)[[\hbar]],Q+\hbar\Delta_r^\prime}\big) \\
	& {{\text{Homotopy RG flow (HRG)}}}
	\arrow["\exp\bracket{\hbar\partial_{P^{r^\prime}_r}}", from=1-1, to=1-3]
	\arrow[from=2-2, to=1-2]
\end{tikzcd}\]

\begin{defn}[Costello\cite{costello2011renormalization}]
    An effective perturbative quantization of $I_0$ (which satisfies CME) is a family 
    $$
    I[r]\in \O(\EE)[[\hbar]]\quad 
    $$
    (whichi is at least cubic modulo $\hbar$) for each choice of regularization $r$ satisfying
    \begin{itemize}
        \item Effective QME
        $$
        (Q+\hbar\Delta_r)e^{I[r]/\hbar}=0.
        $$ 
        \item Homotopy RG flow $$
e^{I[r']/\hbar}=e^{\hbar\p_{P^{r'}_r}}e^{I[r]/\hbar}
$$ 
which is equivalent to the Feynman diagram expansion

\bea 
\tikzset{every picture/.style={line width=0.75pt}}   
\begin{tikzpicture}[x=0.75pt,y=0.75pt,yscale=-1,xscale=1]

\draw    (192.5,32) -- (224.72,64.22) ;
\draw [shift={(224.72,64.22)}, rotate = 45] [color={rgb, 255:red, 0; green, 0; blue, 0 }  ][fill={rgb, 255:red, 0; green, 0; blue, 0 }  ][line width=0.75]      (0, 0) circle [x radius= 3.35, y radius= 3.35]   ;
\draw    (190.61,98.34) -- (224.72,64.22) ;
\draw [shift={(224.72,64.22)}, rotate = 315] [color={rgb, 255:red, 0; green, 0; blue, 0 }  ][fill={rgb, 255:red, 0; green, 0; blue, 0 }  ][line width=0.75]      (0, 0) circle [x radius= 3.35, y radius= 3.35]   ;
\draw    (318.27,96.34) -- (285.97,64.2) ;
\draw [shift={(285.97,64.2)}, rotate = 224.86] [color={rgb, 255:red, 0; green, 0; blue, 0 }  ][fill={rgb, 255:red, 0; green, 0; blue, 0 }  ][line width=0.75]      (0, 0) circle [x radius= 3.35, y radius= 3.35]   ;
\draw    (320,30) -- (285.97,64.2) ;
\draw [shift={(285.97,64.2)}, rotate = 134.86] [color={rgb, 255:red, 0; green, 0; blue, 0 }  ][fill={rgb, 255:red, 0; green, 0; blue, 0 }  ][line width=0.75]      (0, 0) circle [x radius= 3.35, y radius= 3.35]   ;
\draw   (224.72,64.13) .. controls (224.72,47.07) and (238.55,33.25) .. (255.6,33.25) .. controls (272.65,33.25) and (286.48,47.07) .. (286.48,64.13) .. controls (286.48,81.18) and (272.65,95) .. (255.6,95) .. controls (238.55,95) and (224.72,81.18) .. (224.72,64.13) -- cycle ;

\draw (174.92,67) node  [font=\fontsize{4.71em}{5.65em}\selectfont]  {$($};
\draw (338.42,66) node  [font=\fontsize{4.71em}{5.65em}\selectfont]  {$)$};
\draw (38,49.4) node [anchor=north west][inner sep=0.75pt]    {$I[r'] =\sum\limits_{\text{connected graphs}}$};
\draw (251,12.4) node [anchor=north west][inner sep=0.75pt]    {$P_r^{r'}$};
\draw (247,102.4) node [anchor=north west][inner sep=0.75pt]    {$P_r^{r'}$};
\draw (196,57.4) node [anchor=north west][inner sep=0.75pt]    {$I[r]$};
\draw (296,57.4) node [anchor=north west][inner sep=0.75pt]    {$I[r]$};
\draw (358,62.4) node [anchor=north west][inner sep=0.75pt]    {$.$};
\end{tikzpicture}
\eea
       
    \item  $I[r]$ is asymptotic local when $r\lra 0$ and has the classical limit
    \[
    \lim_{r\rightarrow0}I_0[r]=I_0
    \]
    \end{itemize}
\end{defn}

Here is a pictue to illustrate what is going on. The situation is very similar to how residue is defined in algebraic geometry: we need to perturb the singularity and define residue at the deformed configuration, and show that all local deformations give the same answers. Here we use all "nearby" regularizations to define the unrenormalized point. 

\bea 
\tikzset{every picture/.style={line width=0.75pt}}         
\begin{tikzpicture}[x=0.75pt,y=0.75pt,yscale=-1,xscale=1]

\draw [color={rgb, 255:red, 144; green, 19; blue, 254 }  ,draw opacity=1 ]   (230.27,95.93) .. controls (262.54,80.07) and (341.87,80.98) .. (370.8,92.41) ;
\draw [shift={(373.33,93.5)}, rotate = 205.15] [fill={rgb, 255:red, 144; green, 19; blue, 254 }  ,fill opacity=1 ][line width=0.08]  [draw opacity=0] (10.72,-5.15) -- (0,0) -- (10.72,5.15) -- (7.12,0) -- cycle    ;
\draw [shift={(227.33,97.5)}, rotate = 329.86] [fill={rgb, 255:red, 144; green, 19; blue, 254 }  ,fill opacity=1 ][line width=0.08]  [draw opacity=0] (10.72,-5.15) -- (0,0) -- (10.72,5.15) -- (7.12,0) -- cycle    ;
\draw  [fill={rgb, 255:red, 0; green, 0; blue, 0 }  ,fill opacity=1 ] (221,126.17) .. controls (221,124.69) and (222.19,123.5) .. (223.67,123.5) .. controls (225.14,123.5) and (226.33,124.69) .. (226.33,126.17) .. controls (226.33,127.64) and (225.14,128.83) .. (223.67,128.83) .. controls (222.19,128.83) and (221,127.64) .. (221,126.17) -- cycle ;
\draw  [fill={rgb, 255:red, 0; green, 0; blue, 0 }  ,fill opacity=1 ] (369,121.67) .. controls (369,120.19) and (370.19,119) .. (371.67,119) .. controls (373.14,119) and (374.33,120.19) .. (374.33,121.67) .. controls (374.33,123.14) and (373.14,124.33) .. (371.67,124.33) .. controls (370.19,124.33) and (369,123.14) .. (369,121.67) -- cycle ;
\draw  [fill={rgb, 255:red, 0; green, 0; blue, 0 }  ,fill opacity=1 ] (290.5,153.17) .. controls (290.5,151.69) and (291.69,150.5) .. (293.17,150.5) .. controls (294.64,150.5) and (295.83,151.69) .. (295.83,153.17) .. controls (295.83,154.64) and (294.64,155.83) .. (293.17,155.83) .. controls (291.69,155.83) and (290.5,154.64) .. (290.5,153.17) -- cycle ;
\draw   (184,132.38) .. controls (184,95.17) and (236.79,65) .. (301.92,65) .. controls (367.04,65) and (419.83,95.17) .. (419.83,132.38) .. controls (419.83,169.59) and (367.04,199.76) .. (301.92,199.76) .. controls (236.79,199.76) and (184,169.59) .. (184,132.38) -- cycle ;

\draw (225.67,126.9) node [anchor=north west][inner sep=0.75pt]    {$r_{1}$};
\draw (373.67,122.4) node [anchor=north west][inner sep=0.75pt]    {$r_{2}$};
\draw (292.5,156.57) node [anchor=north west][inner sep=0.75pt]    {$r=0$};
\draw (210.87,100.18) node [anchor=north west][inner sep=0.75pt]  [color={rgb, 255:red, 0; green, 0; blue, 255 }  ,opacity=1 ]  {$I[ r_{1}]$};
\draw (356.42,96.3) node [anchor=north west][inner sep=0.75pt]  [color={rgb, 255:red, 0; green, 0; blue, 255 }  ,opacity=1 ]  {$I[ r_{2}]$};
\draw (282.77,69.83) node [anchor=north west][inner sep=0.75pt]  [color={rgb, 255:red, 144; green, 19; blue, 254 }  ,opacity=1 ] [align=left] {HRG};
\draw (249,171.5) node [anchor=north west][inner sep=0.75pt]   [align=left] {(unrenormalized)};
\end{tikzpicture}
\eea

In practice, here are steps for constructing perturbative quantization. 
\begin{itemize}
    \item [\tc{1}] Construct counter-term $I^\e\in\hbar\O_{loc}(\EE)[[\hbar]]$ such that 
    \[
e^{I[r]^{Naive}/\hbar}:=\lim_{\e\rightarrow0}(e^{\hbar{P^{r}_\e}}e^{(I_0+I^\e)/\hbar})\quad  \text{exists}
    \]
    Then this naive family
   $\lr{I[r]^{Naive}}_r$ satisfies HRG by construction.
   \item[\tc{2}] The choice of counter-terms is not unique. We need to further correct $I^\e $ such that $e^{I[r]/\hbar}$ satisfies QME.
\end{itemize}
\tc{1} is always possible by the method of counter-term. \tc{2} is \ut{NOT} always possible: obstruction may exist which is called "\textbf{gauge anomaly}" in physics terminology. There is a deformation-obstruction theory, which shows that the gauge anomaly lies in 
\[
H^1(\O_{loc}(\EE),Q+\{I_0,-\}).
\]

\subsection{Heat Kernel Regularization}
There are many ways of regularizations. One  method that connects to geometry is the heat kernel regularization. Typically, fixing a choice of metric, we have 
\begin{itemize}
    \item the adjoint of the elliptic operator $Q:\cE\to \cE$, denoted as $Q^\dagger: \cE\to \cE$,
    \item assume $\lsb Q,Q^\dagger\rsb= QQ^\dagger+Q^\dagger Q$ is a generalized Laplacian \footnote{ We use $[-,-]$ for graded commutator in this paper. }.
\end{itemize}

Thus we can define a heat operator $e^{-L\lsb Q,Q^\dagger\rsb}$ for $L>0$. Let $K_L\in \sym^2(\cE)$ be the kernel of the heat operator by 
\bea \lb e^{-L\lsb Q,Q^\dagger\rsb}\alpha\rb \lb x\rb
=\int dy \lan K_L(x,y),\alpha(y)\ran\quad \text{for }\alpha\in\cE.\eea 
Here $\lan-,-\ran$ is the pairing from $\omega$.
Note that
\begin{itemize}
    \item $K_0=\lim\limits_{L\to 0} K_L$ is the $\delta$-function distribution $\omega^{-1}$,
    \item $K_L\in \sym^2(\cE)$ is smooth for $L>0$.
\end{itemize}

Let $P_L$ be the kernel of the operator $\int_0^L dt\ Q^\dagger e^{-t\lsb Q,Q^\dagger\rsb} $. Explicitly, we have 
\bea P_L= \int_0^L dt \lb Q^\dagger \otimes 1\rb K_t.\eea
The operator equation 
\bea \lsb Q, \int_0^L dt\ Q^\dagger e^{-t\lsb Q,Q^\dagger\rsb} \rsb 
= \int_0^L dt \lsb Q,Q^\dagger\rsb e^{-t\lsb Q,Q^\dagger\rsb} 
=1-e^{-L\lsb Q,Q^\dagger\rsb}\eea
can be translated into the kernel equation:
\bea K_0-K_L=(Q\otimes 1+1\otimes Q)P_L\eea
or simply written as
\bea K_0-K_L=Q \lb P_L\rb.\eea
We can use $K_L$ to define the effective QME. 

Similarly, for $0<\varepsilon<L$,  the operator equation is 
\bea \lsb Q, \int_\varepsilon^L dt\ Q^\dagger e^{-t\lsb Q,Q^\dagger\rsb} \rsb 
=e^{-\varepsilon\lsb Q,Q^\dagger\rsb}-e^{-L\lsb Q,Q^\dagger\rsb}\eea
or 
\bea K_\varepsilon-K_L=(Q\otimes 1+1\otimes Q)P_\varepsilon^L,\eea
where $P^L_\varepsilon= \int_\varepsilon^L dt \lb Q^\dagger \otimes 1\rb K_t$ is called the \emph{regularized propagator}. 
Now we can use $P^L_\varepsilon$ to connect the effective QME at $\varepsilon$ with the effective QME at $L$ via the HRG.

\bea
\tikzset{every picture/.style={line width=0.75pt}} 
\begin{tikzpicture}[x=0.75pt,y=0.75pt,yscale=-1,xscale=1]

\draw    (10.5,100.67) -- (237.5,100.67) ;
\draw [shift={(240.5,100.67)}, rotate = 180] [fill={rgb, 255:red, 0; green, 0; blue, 0 }  ][line width=0.08]  [draw opacity=0] (10.72,-5.15) -- (0,0) -- (10.72,5.15) -- (7.12,0) -- cycle    ;

\draw (4,78.23) node [anchor=north west][inner sep=0.75pt]    {$0$};
\draw (3.12,96) node [anchor=north west][inner sep=0.75pt]  [font=\normalsize,rotate=-0.88]  {$\blt$};
\draw (58,96) node [anchor=north west][inner sep=0.75pt]  [font=\normalsize,rotate=-0.88]  {$\blt$};
\draw (166.12,96) node [anchor=north west][inner sep=0.75pt]  [font=\normalsize,rotate=-0.88]  {$\blt$};
\draw (58,77.23) node [anchor=north west][inner sep=0.75pt]    {$L_{1}$};
\draw (165.5,76.23) node [anchor=north west][inner sep=0.75pt]    {$L_{2}$};
\draw (245.5,93.07) node [anchor=north west][inner sep=0.75pt]    {$L$};
\draw (40.5,108.07) node [anchor=north west][inner sep=0.75pt]    {$I[L_{1}]^{naive}$};
\draw (152.5,108.07) node [anchor=north west][inner sep=0.75pt]    {$I[L_{2}]^{naive}$};
\draw (110,106.57) node [anchor=north west][inner sep=0.75pt]    {$\overset{\text{HRG}}{\leadsto}$};
\end{tikzpicture}
\eea

\begin{rmk}
$P_0^\infty= \int_0^\infty dt \lb Q^\dagger \otimes 1\rb K_t$ is the \emph{full propagator}. At $t=0$, one will encounter ultraviolet (UV) divergence since there exists a singularity for the full propagator. On a non-compact manifold, one will encounter infrared (IR) divergence at $t=\infty$.
\end{rmk}

Consider the case when $X$ is compact. Let 
$$
\mathbf{H}=\lcb \left. \varphi\in\cE\ \right|\ \lsb Q,Q^\dagger\rsb\varphi=0\rcb\ = 
\lcb \left. \varphi\in\cE\ \right|\  Q\varphi=Q^\dagger\varphi=0\rcb\ \simeq H^\blt\lb \cE,Q\rb.
$$
$\mathbf{H}$ is called the space of {harmonics} (or the {zero modes}), which is a finite-dimensional space (by Hodge theory).
Then we have
\bea
\begin{tikzcd}
\infty-\text{dimensional } (-1)-\text{symplectic geometry } (\cE,Q,\omega)  \ar[d, "L\to\infty"] \\
\text{finite-dimensional } (-1)-\text{symplectic geometry } (\mathbf{H},\omega_{\mathbf{H}}
=\left. \omega \right|_{\mathbf{H}})
\end{tikzcd}
\eea

The BV operator $\Delta_{\mathbf{H}}$ associated with $\omega_{\mathbf{H}}^{-1}$ is $\Delta_{\mathbf{H}}=\Delta_\infty$. The essential story of effective BV quantization is depicted in the following diagram,
\bea
\tikzset{every picture/.style={line width=0.75pt}} 
\begin{tikzpicture}[x=0.75pt,y=0.75pt,yscale=-1,xscale=1]

\draw    (10.5,100.67) -- (237.5,100.67) ;
\draw [shift={(240.5,100.67)}, rotate = 180] [fill={rgb, 255:red, 0; green, 0; blue, 0 }  ][line width=0.08]  [draw opacity=0] (10.72,-5.15) -- (0,0) -- (10.72,5.15) -- (7.12,0) -- cycle    ;

\draw (4,78.23) node [anchor=north west][inner sep=0.75pt]    {$0$};
\draw (3.12,96) node [anchor=north west][inner sep=0.75pt]  [font=\normalsize,rotate=-0.88]  {$\blt$};
\draw (110.12,96) node [anchor=north west][inner sep=0.75pt]  [font=\normalsize,rotate=-0.88]  {$\blt$};
\draw (110.12,130.07) node [anchor=south][inner sep=0.75pt]  [font=\normalsize,rotate=-0.88]  {$I[L]$};

\draw (245.5,93.07) node [anchor=north west][inner sep=0.75pt]    {$L=\infty$};
\draw (245.5,130.07) node [anchor=south][inner sep=0.75pt]    {$I[\infty]$};

\end{tikzpicture}
\eea
and $I[\infty]$ solves the QME for $\lb \sO(\mathbf{H}),\Delta_{\mathbf{H}}\rb$ at $L=\infty$. The limit $L\to \infty$ is an interesting point where we will find some finite-dimensional geometric data.

\subsection{UV Finite Theory}\label{sec:UV}

In the BV formalism, the classical master equation
$$
QI_0+\frac{1}{2}\{I_0,I_0\}=0
$$
is quantized to the quantum master equation
$$
\text{``}QI+\hbar\Delta I+\frac{1}{2}\{I,I\}=0".
$$

As we explained as above, this naive quantum master equation is ill-defined for local $I\in \O_{loc}(\EE)$, and we have to use regularization to formulate the renormalized quantum master equation
$$
QI[r]+\hbar\Delta_r I[r]+\frac{1}{2}\{I,I\}_r=0.
$$

If the effective action at regularization $r$ can be found as
$$
e^{I[r]/\hbar}=\lim_{\e\rightarrow 0}e^{\hbar\p_{P^r_\e}}e^{I/\hbar}
$$
for $I\in \O_{loc}(\e)[[\hbar]]$, i.e., the $\e$-dependent counter-term is \textbf{\underline{NOT}} needed, we say the theory is UV finite. That is, for all regularized Feynman diagrams
\begin{center}
    \begin{tikzpicture}[scale=0.7]
        \coordinate (x) at (0,0);
        \coordinate (y) at (3,0);
        \coordinate (x1) at (-1,1);
        \coordinate (x2) at (-1,-1);
        \coordinate (y1) at (4,1);
        \coordinate (y2) at (4,-1);
        \fill (x) circle (2pt);
        \fill (y) circle (2pt);

        \draw[thick] (x) to[out=45,in=135] (y);
        \draw[thick] (x) to[out=-45,in=-135] (y);
        \draw[thick] (x) to (x1);
        \draw[thick] (x) to (x2);
        \draw[thick] (y) to (y1);
        \draw[thick] (y) to (y2);

        \node at (-2,0) {$\displaystyle\lim_{\e\rightarrow 0}$};
         \node at (6,0) {exist.};
        \node[anchor=south] at (1.5,0.5) {$P_\e^L$};
        \node[anchor=north] at (0,-0.2) {$I$};
        \node[anchor=north] at (3,-0.2) {$I$};
    \end{tikzpicture}
\end{center}

In this way, we can consider the limit 
$$
I[r]\rightarrow I,\qquad r\rightarrow 0,
$$
and the $r\ra 0$ limit of the renormalized quantum master equation 
$$
QI+\frac{1}{2}\{I,I\}+\cdots=0
$$
will have a local expression that deforms the CME. 
\begin{conj}
For UV finite theory, we expect to describe effective QME at $r\to 0$ limit by
$$
l_1^\hbar I+\frac{1}{2}l_2^\hbar(I,I)+\frac{1}{3!}l_3^\hbar (I,I,I)+\cdots=0
$$
where $\{l_1^\hbar,l_2^\hbar,\cdots\}$ defines a family of $L_\infty$-algebra parametrized by $\hbar$. They can be viewed as traded from $\Delta$ in terms of the renormalization procedure.
\end{conj}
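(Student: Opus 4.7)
The plan is to construct the family $\{l_k^\hbar\}$ as $r\to 0$ limits of operations extracted from the effective quantum master equation, and to derive the $L_\infty$-relations from the square-zero identity $(Q+\hbar\Delta_r)^2=0$ together with UV finiteness. At any positive scale $r$, the effective QME
\[
\cF^\hbar_r(I[r]) := Q I[r] + \hbar \Delta_r I[r] + \tfrac{1}{2}\{I[r],I[r]\}_r = 0
\]
is tautologically the Maurer--Cartan expression for the dg Lie structure $(Q+\hbar\Delta_r,\{-,-\}_r)$ on $\O(\EE)[[\hbar]]$, viewed as an $L_\infty$-algebra with vanishing higher brackets. The key idea is that under the correspondence $I\mapsto I[r]$ provided by the HRG flow, this structure is pushed forward onto $\O_{loc}(\EE)[[\hbar]]$, and UV finiteness is precisely the statement that the pushed-forward operations remain well-defined (indeed local, to leading order) as $r\to 0$.

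Concretely, I would define
\[
F^\hbar(I) := \lim_{r\to 0}\, \cF^\hbar_r(I[r])
\]
as a formal functional of local $I$ and read off the candidate brackets by polarization, $F^\hbar(I)=\sum_{k\geq 1}\tfrac{1}{k!} l_k^\hbar(I,\ldots,I)$. The Feynman expansion for $I[r]$ turns this into an explicit sum over connected graphs with $k$ external legs, internal edges labeled by the regularized propagator, and a single internal BV-contraction by the heat kernel $K_r$ (from $\hbar\Delta_r$) or merged-propagator pair (from $\{-,-\}_r$). At $\hbar=0$ only trees contribute and one recovers $l_1^0=Q$, $l_2^0=\{-,-\}$, $l_{\geq 3}^0=0$, so $\{l_k^\hbar\}$ is visibly a loop-corrected deformation of the classical dg Lie structure on local functionals.

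The $L_\infty$-relations would then come from iterating the square-zero identity. Applying $Q+\hbar\Delta_r$ to the identity $\hbar(Q+\hbar\Delta_r)e^{I[r]/\hbar}=e^{I[r]/\hbar}\cF^\hbar_r(I[r])$ and using $(Q+\hbar\Delta_r)^2=0$ yields
\[
(Q+\hbar\Delta_r)\bigl(e^{I[r]/\hbar}\,\cF^\hbar_r(I[r])\bigr)=0,
\]
which upon expansion in powers of $I[r]$ packages exactly the scale-$r$ $L_\infty$-identities evaluated on copies of $I[r]$. UV finiteness allows each graph integral to pass to the $r\to 0$ limit, and the identities descend to relations on $\{l_k^\hbar\}$ acting on $I$.

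The hard part will be twofold. First, the construction above produces $l_k^\hbar$ only as a symmetric polynomial evaluated on copies of one argument; upgrading to an honest operation on arbitrary tuples $(I_1,\dots,I_k)$ is a polarization, but one has to check that after polarization each graph integral is still finite and that the resulting functional is (asymptotically) local, which requires UV finiteness in its strongest form — convergence of \emph{mixed} graph integrals rather than only diagonal ones. Second, and more seriously, one must verify that the $r\to 0$ limit genuinely commutes with the coalgebra comultiplications used to extract the $L_\infty$-axioms from the squared master equation, so that the finite-$r$ combinatorics assemble in the limit into the precise pattern of set partitions demanded by the $L_\infty$-identities. The cleanest route is probably to phrase the whole construction as an operadic homotopy transfer along the HRG quasi-isomorphism, treating $\Delta_r$ as a perturbation of the transferred $L_\infty$-structure on $\O_{loc}(\EE)$; the technical heart of the proof would then be establishing that this transfer diagram converges as $r\to 0$ under the UV-finiteness hypothesis.
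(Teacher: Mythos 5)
The statement you are asked to prove is labelled a \emph{Conjecture} in the paper, and the paper supplies no proof of it: the general claim is left open, and is only verified in the two worked examples of Sections \ref{sec:TQM} and \ref{sec:2d1}, where the limiting structure is identified by direct computation rather than by a general transfer argument (in 1d the renormalized QME becomes the Moyal condition $[\gamma,\gamma]_\star=0$ and, globally, the Fedosov equation; in 2d it becomes $[\oint\gamma,\oint\gamma]=0$ in the modes Lie algebra). So there is no ``paper's own proof'' to compare against, and your text should be judged as a proposed attack on an open problem.

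As such, your strategy is the natural one and the mechanism you identify is correct: the scale-$r$ QME is the Maurer--Cartan equation of the DGBV $(Q+\hbar\Delta_r,\{-,-\}_r)$, the HRG flow is a nonlinear change of variables $I\mapsto I[r]$ whose graph expansion generates candidate higher operations, and your identity $\hbar(Q+\hbar\Delta_r)e^{I[r]/\hbar}=e^{I[r]/\hbar}\,\cF^\hbar_r(I[r])$ together with $(Q+\hbar\Delta_r)^2=0$ is the right source of coherence relations. But the proposal does not close the conjecture, because the three things you defer to ``the hard part'' are precisely its mathematical content, not technical afterthoughts. First, UV finiteness as defined in the paper only guarantees convergence of the diagonal graph integrals built from a single interaction $I$; the existence of the polarized multilinear operations $l_k^\hbar(I_1,\dots,I_k)$ as $r\to 0$, and their (asymptotic) locality, is a strictly stronger statement that you assume rather than derive. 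Second, the scale-$r$ structure is an honest dg Lie algebra (only $l_1$ and $l_2$), so the higher $L_\infty$-relations are not simply ``the scale-$r$ identities passed to the limit'': they must emerge from the interchange of the singular limit $r\to 0$ with compositions of operations, and it is exactly this interchange that can fail. Third, invoking homotopy transfer along the HRG flow presupposes that the flow fits into transfer data (a deformation retract onto local functionals with controlled homotopies), which is not something the paper's framework provides; the HRG maps $e^{\hbar\partial_{P^{r'}_r}}$ are isomorphisms between the regularized complexes, and the degeneration of this family at $r=0$ is the whole difficulty. A complete argument would have to either establish these convergence and locality statements in general, or, as the paper does, bypass the general machinery and compute the limiting equation explicitly in each class of UV finite theories.
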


There are two main classes of UV finite theories.
\begin{itemize}
    \item [\tc 1] Topological theory (Chern-Simons type)  where $\EE$ is of the form of de Rham complex. The UV finite property was established by Kontsevich \cite{Kont-diagram} and Axelrod-Singer \cite{AS-CS} using the compactified configuration space.
    \item [\tc 2] Holomorphic theory where $\EE$ is of the form of Dolbeault complex.  In this case, the Feynman graph integral can not be extended to the compactified configuration space. Fortunately, the UV finite property still holds in general. 
    \begin{itemize}
        \item $\dim_{\C}=1$: the UV property for chiral deformations is known to physicists via the method of point-splitting regularization (see for example Douglas \cite{Douglas} and Dijkgraaf \cite{Dijkgraaf-Chiral}). This method is essentially Cauchy principal value, and a homological theory for such regularization was systematically developed in Li-Zhou \cite{LZ-1}. In the framework of effective BV quantization, the UV finite property was established in  Li \cite{LS-D,LS-VOA}.
        
        \item $\dim_{\C}>1$: the method in \cite{LS-VOA} has been generalized for one loop graphs in Costello-Li \cite{Open-closed} and Williams \cite{Williams}. At higher loops,  Budzik-Gaiotto-Kulp-Wu-Yu \cite{Operatope} presented a strategy to prove UV finiteness for Laman graphs. In \cite{Wang-Cn}, Wang proved the UV finite property for all graphs on all $\C^n$ using a compactified Schwinger space. This is further generalized to K\"{a}hler manifolds in Wang-Yan \cite{Wang-Yan}.
    \end{itemize}
\end{itemize}

It is an extremely interesting question to figure out  $\{l_1^\hbar,l_2^\hbar,\cdots\}$ in these examples. In Section \ref{sec:TQM} and Section \ref{sec:2d1}, we explain the simplest example in each of these two categories  (Conjecture holds there) to illustrate the underlying rich structures.

\section{Topological Quantum Mechanics}\label{sec:TQM}

In this section we consider the example of topological quantum mechanics and illustrate its connection with deformation quantization and algebraic index theorem. 

\subsection{Deformation Quantization}
The method of deformation quantization was developed in the series of papers by Bayen, Flato, Fronsdal, Lichnerowicz, Sternheimer \cite{BFFLS}. The space of the real-valued (or complex-valued) functions on a phase space admits
two algebraic structures: a structure of \emph{associative algebra} given by the usual product of
functions and a structure of Lie algebra given by the \emph{Poisson bracket}. The study of the
properties of the deformations (in a suitable sense) of these two structures gives a new
invariant approach for quantum mechanics.
\bea
\begin{tikzcd}
\text{Poisson algebra}
\arrow[r, bend left, "\text{quantization}"] & \arrow[l, bend left, "\hbar\to 0"] \text{Associative algebra} 
\end{tikzcd}\eea
This is essentially the quantization method in quantum mechanics, in which a function $f$ on the classical phase space is quantized to an operator $\widehat{f}$.

\begin{defn}
A \textbf{Poisson manifold} is a pair $\lb X,P\rb$, where $X$ is a smooth manifold, and $P\in \Gamma(X, \asym^2 TX)$ satisfying $\lcb P,P\rcb_{\text{SN}}=0$.
\end{defn}

Here $\lcb -,-\rcb_{\text{SN}}$ is the \emph{Schouten-Nijenhuis bracket}. $P$ is called the {Poisson tensor/bi-vector}. In local coordinates, we can write 
\bea P=\sum_{i,j} P^{ij}(x)\p_i \wedge \p_j.\eea
It defines a \emph{Poisson bracket} $\lcb-,-\rcb_P$ on $C^\infty(X)$:
\bea \lcb f,g\rcb_P \coloneqq \sum_{i,j} P^{ij}\p_i f \p_j g,\quad \forall f,g\in C^\infty(X).\eea
The Poisson condition $\lcb P,P\rcb_{\text{SN}}=0$ implies $\lcb-,-\rcb_P$ satisfies Jacobi identity. Hence $\lcb-,-\rcb_P$ naturally defines the Poisson algebra $\lb C^\infty(X), \lcb-,-\rcb_P\rb$.

\begin{eg}
Let $(X,\omega)$ be a symplectic manifold, where $\omega= \hf\sum_{i,j} \omega_{ij} dx^i \wedge dx^j$ is the symplectic 2-form. Let 
$$
P=\omega^{-1}=\hf \sum_{i,j} \omega^{ij} \p_i \wedge \p_j,
$$
where $(\omega^{ij})$ is the inverse of $(\omega_{ij})$. Then 
$$
d\omega=0 \LRA \lcb P,P\rcb_{\text{SN}}=0.
$$
Hence $(X,\omega^{-1})$ defines a Poisson manifold.
\end{eg}

\begin{defn}
A \textbf{star-product} on a Poisson manifold $(X,P)$ is a $\bR[[ \hbar]]$-bilinear map
\begin{align*} 
C^\infty(X)[[ \hbar]] \times C^\infty(X)[[ \hbar]] &\to C^\infty(x)[[ \hbar]]\\
f \times g \quad &\mapsto f\star g=\sum_{k\geq 0} \hbar^k c_k\lb f,g\rb 
\end{align*}
such that
\bi[(1)]
\item $\star$ is associative: $\lb f\star g\rb \star h=f\star \lb g\star h\rb$,
\item $f\star g= fg+\cO(\hbar), \quad \forall f,g\in C^\infty(X)$,
\item $\hf \lb f\star g-g\star f\rb=\hbar\lcb f,g\rcb+\cO(\hbar^2), \quad \forall f,g\in C^\infty(X)$,
\item $c_k: C^\infty(X)\times C^\infty(X)\to C^\infty(X)$ is a bidifferential operator. 
\ei
Then $\lb C^\infty(X)[[\hbar]], \star\rb$ is called a \textbf{deformation quantization} of $\lb X,P\rb$.
\end{defn}

The definition of deformation quantization is purely algebraic. The \emph{existence} of deformation quantization is highly nontrivial. DeWilde-Lecomte first \cite{DeWilde-Lecomte} obtained the general existence of deformation quantization on symplectic manifolds via cohomological method.  Fedosov \cite{Fedosov-DQ} presented another beautiful approach on symplectic manifolds via differential geometric method. In the general case, Kontsevich \cite{Kont-DQ} gave the complete solution for arbitrary Poisson manifold. The parameter $\hbar$ is formal in the above definition of deformation quantization (only formal power series of $\hbar$ is concerned). There is also a notion of strict deformation quantization introduced by Rieffel \cite{Rieffel} in terms of $C^*$-algebras where $\hbar$ is not formal. 

\begin{eg} Let $X=\bR^{2n}$, with Poisson tensor  
$$
P=\hf \sum_{i,j} P^{ij} \p_i\wedge \p_j
$$ 
Here $P^{ij}$ are constants. Given $f(x),g(x)$, define the (formal) Moyal product $\star$ by
\bea
(f\star g)(x)=\left. \exp{\lb \frac{\hbar}{2} \sum_{i,j} P^{ij} \frac{\p}{\p y^i}\frac{\p}{\p z^j}\rb}\right|_{y=z=x} f(y) g(z)
\eea
or pictorially,
\bea 
\tikzset{every picture/.style={line width=0.75pt}}   
\begin{tikzpicture}[x=0.75pt,y=0.75pt,yscale=-1,xscale=1]

\draw    (123.16,67.69) -- (166.54,111.07) ;
\draw [shift={(166.54,111.07)}, rotate = 45] [color={rgb, 255:red, 0; green, 0; blue, 0 }  ][fill={rgb, 255:red, 0; green, 0; blue, 0 }  ][line width=0.75]      (0, 0) circle [x radius= 3.35, y radius= 3.35]   ;
\draw    (120.61,157) -- (166.54,111.07) ;
\draw [shift={(166.54,111.07)}, rotate = 315] [color={rgb, 255:red, 0; green, 0; blue, 0 }  ][fill={rgb, 255:red, 0; green, 0; blue, 0 }  ][line width=0.75]      (0, 0) circle [x radius= 3.35, y radius= 3.35]   ;
\draw    (292.48,154.31) -- (248.99,111.04) ;
\draw [shift={(248.99,111.04)}, rotate = 224.86] [color={rgb, 255:red, 0; green, 0; blue, 0 }  ][fill={rgb, 255:red, 0; green, 0; blue, 0 }  ][line width=0.75]      (0, 0) circle [x radius= 3.35, y radius= 3.35]   ;
\draw    (294.81,65) -- (248.99,111.04) ;
\draw [shift={(248.99,111.04)}, rotate = 134.86] [color={rgb, 255:red, 0; green, 0; blue, 0 }  ][fill={rgb, 255:red, 0; green, 0; blue, 0 }  ][line width=0.75]      (0, 0) circle [x radius= 3.35, y radius= 3.35]   ;
\draw   (166.54,110.94) .. controls (166.54,87.98) and (185.15,69.37) .. (208.11,69.37) .. controls (231.06,69.37) and (249.68,87.98) .. (249.68,110.94) .. controls (249.68,133.9) and (231.06,152.51) .. (208.11,152.51) .. controls (185.15,152.51) and (166.54,133.9) .. (166.54,110.94) -- cycle ;
\draw    (166.54,111.07) .. controls (203.93,77.12) and (229.51,97.31) .. (249.68,110.94) ;
\draw [color={rgb, 255:red, 20; green, 120; blue, 235 }  ,draw opacity=1 ]   (171.5,54) .. controls (151.82,67.44) and (155.18,87.33) .. (162.56,101.29) ;
\draw [shift={(163.5,103)}, rotate = 240.26] [color={rgb, 255:red, 20; green, 120; blue, 235 }  ,draw opacity=1 ][line width=0.75]    (10.93,-3.29) .. controls (6.95,-1.4) and (3.31,-0.3) .. (0,0) .. controls (3.31,0.3) and (6.95,1.4) .. (10.93,3.29)   ;
\draw [color={rgb, 255:red, 65; green, 117; blue, 5 }  ,draw opacity=1 ]   (253.5,53) .. controls (268.94,69.41) and (262.03,83.02) .. (253.44,98.33) ;
\draw [shift={(252.5,100)}, rotate = 299.36] [color={rgb, 255:red, 65; green, 117; blue, 5 }  ,draw opacity=1 ][line width=0.75]    (10.93,-3.29) .. controls (6.95,-1.4) and (3.31,-0.3) .. (0,0) .. controls (3.31,0.3) and (6.95,1.4) .. (10.93,3.29)   ;

\draw (143.5,104.17) node [anchor=north west][inner sep=0.75pt]  [color={rgb, 255:red, 20; green, 120; blue, 235 }  ,opacity=1 ]  {$f$};
\draw (260.09,103.17) node [anchor=north west][inner sep=0.75pt]  [color={rgb, 255:red, 65; green, 117; blue, 5 }  ,opacity=1 ]  {$g$};
\draw (182,40) node [anchor=north west][inner sep=0.75pt]    {$\textcolor[rgb]{0.08,0.47,0.92}{\frac{\partial }{\partial x^{i}}} P^{ij}\textcolor[rgb]{0.25,0.46,0.02}{\frac{\partial }{\partial x^{j}}}$};
\draw (212,112) node [anchor=north west][inner sep=0.75pt]  [rotate=-90]  {$\cdots $};
\end{tikzpicture}
\eea
Then $\star$ defines a deformation quantization.
\end{eg}

\begin{rmk}
If $P^{ij}\neq$ constant, then the above formula does not work. How to correct this to a star product is the celebrated Formality Theorem in \cite{Kont-DQ}. 
\end{rmk}

\begin{defn}
Let $\lb V,\omega\rb$ be a linear symplectic space, where $V\simeq \bR^{2n}$ and $\omega: \asym^2 V\to \bR$ is a symplectic pairing. $P=\omega^{-1}$ is the Poison tensor. Write 
\bea\widehat{\sO}(V)\coloneqq \widehat{\sym}(V^\vee)=\prod_{k\geq 0} \sym^k(V^\vee),\eea
where $V^\vee=\on{Hom}(V,\bR)$ is the linear dual of $V$. Then the Moyal product defines an associative algebra $\lb \widehat{\sO}(V)[[ \hbar]],\star\rb$, called the \textbf{(formal) Weyl algebra}.
\end{defn}

\subsection{Fedosov's Geometric Method}
We will focus on \emph{symplectic manifolds} in the rest of this section. Fedosov \cite{Fedosov-DQ} gave a simple and geometric construction of deformation quantization as follows. 
On a symplectic manifold $(X,\omega)$, the tangent plane $T_p X$ at each point $p \in X$ 
is a linear symplectic space. Quantum fluctuations deform the algebra of functions on $T_p X$ to the
associated \emph{Weyl algebra}. These pointwise Weyl algebras form a vector bundle --- the \emph{Weyl bundle} $\cW(X)$ on $X$.
\begin{defn}
Let $(X,\omega)$ be a symplectic manifold.
We define the \textbf{Weyl bundle} 
\bea \cW(X)\coloneqq \prod_{k\geq 0}\sym^k\lb T^\ast X\rb[[ \hbar]].\eea
So at each point $p\in X$, its fiber is
\bea \left. \cW(X)\right|_p= \widehat{\sO}\lb T_p X\rb[[\hbar]].\eea
Here $\widehat{\sO}$ refers to formal power series functions. 
\end{defn}
A local section of $\cW(X)$ is 
\bea \sigma(x,y)=\sum_{k,l\geq 0}\hbar^k a_{k,i_1\cdots i_l}(x) y^{i_1}\cdots y^{i_l},\eea
where $\lcb x^i\rcb$ are the base coordinates and $\lcb y^i\rcb$ are the fiber coordinates; $a_{k,i_1\cdots i_l}(x)$ are smooth functions.
Since $\lb T_p X, \left.\omega\right|_{T_p X}\rb$ is linear symplectic, we have a fiberwise Moyal product, still denoted by $\star$. Thus $\lb \cW(X),\star\rb$ defines the $\infty$-dimensional bundle of algebras.

Let $\nabla$ be a connection on $TX$ which is torsion-free and compatible with $\omega$ (i.e. $\nabla \omega=0$). Such connection is called a \textbf{symplectic connection} (which always exists and is not unique). $\nabla$ induces a connection on all tensors. In particular, it defines a connection on $\cW(X)$, still denoted by $\nabla$. Its curvature is
\bea \nabla^2 \sigma=\frac{1}{\hbar}\lsb R_\nabla, \sigma\rsb_\star, \quad \forall \sigma\in \Gamma(X,\cW(X))\eea
where 
$$
R_\nabla=\frac{1}{4}R_{ijkl} y^i y^j dx^k \wedge dx^l \in \Omega^2(X,\cW(X))
$$ 
is a 2-form valued in the Weyl bundle $\cW(X)$; $R_{ijkl}=\omega_{im}R^m{}_{jkl}$ is a curvature form contracted with the symplectic pairing.

Given a sequence of closed 2-forms $\lcb \omega_k\rcb_{k\geq 1}$ on $X$, Fedosov showed that there exists a unique (up to gauge) connection on $\cW(X)$ of the form 
$$
\nabla+\frac{1}{\hbar}\lsb \gamma,-\rsb_\star
$$
where $\gamma\in\Omega^1\lb X, \cW(X)\rb$ is a $\cW(X)$-valued 1-form, satisfying certain initial conditions and the equation 
\bea \nabla\gamma+\frac{1}{2\hbar}\lsb \gamma,\gamma\rsb_\star+R_\nabla=\omega_\hbar\qquad \text{(Fedosov equation)}\eea
where $\omega_{\hbar}= -\omega+ \sum_{k\geq 1}\hbar^k \omega_k$. 
Let 
\bea D= \nabla+\frac{1}{\hbar}\lsb \gamma,-\rsb_\star\eea
be the {Fedosov connection}. Then the Fedosov equation implies 
$$
D^2=\frac{1}{\hbar}\lsb\omega_\hbar,-\rsb_\star=0
$$
since $\omega_\hbar$ is constant along each fiber, thus a central term.
So we obtain a flat connection $D$ on $\cW(X)$. 
Fedosov equation has the geometric
interpretation of BV quantum master equation \cite{GLL-BV, GLX-Effective}.

Let $\cW_D(X)\coloneqq \lcb \left.\sigma\in\Gamma(X,\cW(X))\ \right|D\sigma=0 \rcb$ be the space of flat sections. Then $\lb \cW_D(X),\star\rb$ is an associative algebra.
Let 
\bea\sigma: \cW_D(X)\to C^\infty(X)[[\hbar]]\eea
be the {symbol map} by sending $y\mapsto 0$. Then $\sigma$ is an isomorphism, and 
\bea f\star g \mapsto \sigma\lb \sigma^{-1}\lb f\rb\star \sigma^{-1}\lb g\rb\rb\eea
defines a {deformation quantization}. $\omega_\hbar$ is the corresponding {characteristic class} (or {moduli}).

\subsection{Algebraic Index Theorem}
Given a deformation quantization $\lb C^\infty(X)[[\hbar]],\star \rb$ on a symplectic manifold with characteristic class $\omega_\hbar$, there exists a unique \textbf{trace map}
\bea \Tr: C^\infty(X)[[\hbar]] \to \bR ((\hbar))\eea
satisfying a normalization condition and the trace property:
\bea \Tr \lb f\star g\rb=\Tr \lb g\star f\rb.\eea

Then the index is obtained as the partition function of the theory, which can be formulated as 
\bea \on{Index}=\Tr(1)=\int_X e^{\omega_\hbar/\hbar} \widehat{A}(X),\eea
where $\widehat{A}(X)$ is the \emph{(formal) $\widehat{A}$-genus} of $X$.
This is the simplest version of \textbf{algebraic index theorem} formulated by Fedosov \cite{Fedosov-book} and Nest-Tsygan \cite{Nest-Tsygan} as the algebraic analogue of {Atiyah-Singer index theorem}. 

In the case of a vector bundle $E$ on $X$, one can similarly construct a \emph{deformation quantization} for $C^\infty\lb X,\on{End}(E)\rb[[\hbar]]$ and construct the trace map. In this case
$$
\Tr(1)=\int_X e^{\omega_\hbar/\hbar}\on{Ch}(E) \widehat{A}(X),
$$
where $\on{Ch}(E)$ is the Chern character of the vector bundle $E$ over $X$.

\subsubsection*{Relation with QFT}
In supersymmetric (SUSY) QFT, \emph{localization} often appears, in which the path integral on $\cE$ is often localized effectively to an equivalent integral on a finite-dimensional space $M\subset \cE$ describing some interesting moduli space:
\bea \int_\cE e^{iS/\hbar}= \int_M \lb -\rb.\eea

In topological QM, we find
\bea \int_{\on{Map}(S^1,X)} e^{-S/\hbar}\overset{\hbar\to 0}{=} \int_X \lb -\rb,\eea
where $\on{Map}(S^1,X)$ is a loop space, and $\int_X$ indicates the localization to a \emph{constant map}.
The path integral will be captured exactly by an effective theory in the formal neighborhood of
constant maps inside the full mapping space. Exact semi-classical approximation in $\hbar\to 0$ allows
us to reduce the path integral into a meaningful integral on the moduli space of constant maps, i.e., $X$.
The left-hand side usually gives a physics presentation of the analytic index of certain elliptic operator; the
right-hand side will end up with integrals of various curvature forms representing the topological index.

\begin{figure}[!htpb]\centering 
\tikzset{every picture/.style={line width=0.75pt}}         
\begin{tikzpicture}[x=0.75pt,y=0.75pt,yscale=-1,xscale=1]
\draw   (34.83,63.04) .. controls (34.83,50.68) and (44.85,40.67) .. (57.21,40.67) .. controls (69.57,40.67) and (79.59,50.68) .. (79.59,63.04) .. controls (79.59,75.4) and (69.57,85.42) .. (57.21,85.42) .. controls (44.85,85.42) and (34.83,75.4) .. (34.83,63.04) -- cycle ;
\draw    (86.58,66.97) -- (118.38,66.97) ;
\draw [shift={(120.38,66.97)}, rotate = 180] [color={rgb, 255:red, 0; green, 0; blue, 0 }  ][line width=0.75]    (10.93,-3.29) .. controls (6.95,-1.4) and (3.31,-0.3) .. (0,0) .. controls (3.31,0.3) and (6.95,1.4) .. (10.93,3.29)   ;
\draw   (138.44,47.45) .. controls (149.5,38.5) and (157,37) .. (179.23,45.99) .. controls (201.46,54.98) and (210.5,71.5) .. (190.3,83.87) .. controls (170.1,96.23) and (149.5,78) .. (135.24,78.33) .. controls (120.97,78.66) and (127.38,56.39) .. (138.44,47.45) -- cycle ;
\draw  [fill={rgb, 255:red, 144; green, 19; blue, 254 }  ,fill opacity=1 ] (138.15,62.89) .. controls (138.15,57.58) and (142.45,53.27) .. (147.76,53.27) .. controls (153.07,53.27) and (157.38,57.58) .. (157.38,62.89) .. controls (157.38,68.2) and (153.07,72.5) .. (147.76,72.5) .. controls (142.45,72.5) and (138.15,68.2) .. (138.15,62.89) -- cycle ;
\draw    (131,100.5) .. controls (127.97,94.38) and (114.35,86.96) .. (138.6,71.18) ;
\draw [shift={(140.96,69.69)}, rotate = 148.53] [fill={rgb, 255:red, 0; green, 0; blue, 0 }  ][line width=0.08]  [draw opacity=0] (10.72,-5.15) -- (0,0) -- (10.72,5.15) -- (7.12,0) -- cycle    ;
\draw   (280.67,60.39) .. controls (280.67,50.6) and (288.6,42.67) .. (298.39,42.67) .. controls (308.18,42.67) and (316.11,50.6) .. (316.11,60.39) .. controls (316.11,70.18) and (308.18,78.11) .. (298.39,78.11) .. controls (288.6,78.11) and (280.67,70.18) .. (280.67,60.39) -- cycle ;
\draw    (322.07,59.76) -- (355.03,59.76) ;
\draw [shift={(357.03,59.76)}, rotate = 180] [color={rgb, 255:red, 0; green, 0; blue, 0 }  ][line width=0.75]    (10.93,-3.29) .. controls (6.95,-1.4) and (3.31,-0.3) .. (0,0) .. controls (3.31,0.3) and (6.95,1.4) .. (10.93,3.29)   ;
\draw  [fill={rgb, 255:red, 144; green, 19; blue, 254 }  ,fill opacity=1 ] (363.86,59.81) .. controls (363.86,49.48) and (372.24,41.1) .. (382.57,41.1) .. controls (392.91,41.1) and (401.28,49.48) .. (401.28,59.81) .. controls (401.28,70.15) and (392.91,78.52) .. (382.57,78.52) .. controls (372.24,78.52) and (363.86,70.15) .. (363.86,59.81) -- cycle ;

\draw (35.94,87.75) node [anchor=north west][inner sep=0.75pt]    {$S^{1}$};
\draw (79.29,98.35) node [anchor=north west][inner sep=0.75pt]   [align=left] {localized effective theory};
\draw (190.64,36.72) node [anchor=north west][inner sep=0.75pt]    {$X$};
\draw (276.88,80.5) node [anchor=north west][inner sep=0.75pt]    {$S^{1}$};
\draw (402.6,50.7) node [anchor=north west][inner sep=0.75pt]    {$\simeq \mathbb{R}^{2n}$};
\end{tikzpicture}
\end{figure}

Geometrically, a loop space is mapped to a localized neighborhood of a point in $X$ (specified by the constant map), where a localized effective theory exists.
Locally, by the choice of Darboux coordinates, $X$ can be thought of as a standard phase space, $\bR^{2n}$. 
The loop spaces are then glued together on $X$ as a family of effective field theory. This can be done rigorously within the framework of effective BV quantization \cite{GLX-Effective}: we find the following dictionary 
\begin{itemize}
    \item Effective action $\leadsto \gamma$,
    \item Quantum master equation $\leadsto$ Fedosov equation,
    \item BV integral $\leadsto$ trace map,
    \item Partition function $\leadsto$ algebraic index.
\end{itemize}

\subsection{Local Theory}
\label{sec:tqm1}
In this section we study topological quantum mechanics in terms of the effective renormalization method, and explain how to use it to prove the algebraic index theorem. We follow the presentation in \cite{GLL-BV,GLX-Effective}.

\subsubsection*{Local model}
Let us consider the standard phase space $(V,\omega)$, where $V\simeq \bR^{2n}$ with coordinates 
$$
(x^1,\cdots,x^n,x^{n+1},\cdots, x^{2n})=(q^1, \cdots, q^n, p_1,\cdots,p_n)
$$ 
and $$
\omega=\sum_{i=1}^n dp_i\wedge dq^i.
$$

Let $S^1_{dR}$ be the (locally ringed) space with underlying topology of the circle $S^1$ and the structure sheaf $\sO(S^1_{dR})=\Omega^\blt_{S^1}$, which is a differential graded ring of differential forms with the de Rham differential operator $d$. Consider the \emph{local model} describing the space of maps
\bea \varphi: S^1_{dR}\to V\simeq \bR^{2n}.\eea
Such a $\varphi$ can be identified with an element in $\Omega^\blt_{S^1}\otimes V$. Explicitly, let $\theta$ be the coordinates on $S^1$ (with the identification $\theta\sim \theta+1$). The space of maps can then be written as 
\bea \lcb\varphi \rcb= \lcb \bP_i(\theta),\bQ^i(\theta)\rcb_{i=1,\cdots,n}, \quad \bP_i,\bQ^i\in \Omega^\blt_{S^1}.\eea
Writing in form component,
\bea \bP_i(\theta)=p_i(\theta)+\eta_i(\theta)d\theta, \quad \bQ^i(\theta)=q^i(\theta)+\xi^i(\theta)d\theta.\eea

So the space of fields is 
\bea \cE= \Omega^\blt_{S^1}\otimes V.\eea
The triple $\lb \Omega^\blt_{S^1}\otimes V, d, \int_{S^1} \lan -,-\ran_\omega\rb$ is an $\infty$-dimensional $(-1)$-dg symplectic space. The topological action is the free one:
\bea S\lsb \varphi\rsb &\coloneqq \int_{S^1} \lan \varphi, d\varphi\ran_\omega\\
&= \sum_i \int_{S^1} \bP_i d\bQ^i= \sum_i \int_{S^1} p_i(\theta) dq^i(\theta). \eea

\begin{rmk}
This is the first-order formalism of topological quantum mechanics along the line of the \emph{AKSZ} construction \cite{AKSZ}. 
\end{rmk}

\subsubsection*{Propagator}
Let us choose the standard flat metric on $S^1$. Let $d^\ast$ be the adjoint of $d$. The Laplacian is 
\bea d d^\ast+d^\ast d=-\lb \frac{d}{d\theta}\rb^2.\eea

Let 
\bea\Pi=\omega^{-1}=\sum_i \frac{\p}{\p p_i} \wedge \frac{\p}{\p q^i}= \hf \sum_i \lb \frac{\p}{\p p_i} \otimes \frac{\p}{\p q^i}- \frac{\p}{\p q^i} \otimes\frac{\p}{\p p_i}\rb \ \in \asym^2 V\eea
be the Poisson bivector (or Poisson kernel).
Let 
\bea h_t\lb \theta_1, \theta_2\rb= \frac{1}{\sqrt{4\pi t}} \sum _{n\in \bZ} e^{-\frac{\lb \theta_1-\theta_2+n\rb^2}{4t}} \eea
be the standard heat kernel on $S^1$. Then the regularized propagator is 
\bea P^L_\varepsilon =\int^L_\varepsilon \p_{\theta_1} h_t\lb \theta_1, \theta_2\rb dt \otimes \Pi \ \in \cE\otimes \cE,\eea
where $\int^L_\varepsilon \p_{\theta_1} h_t\lb \theta_1, \theta_2\rb dt  \in C^\infty(S^1 \times S^1)$ and $\Pi\in V\otimes V$. Let us denote 
\bea P^{S^1} \lb \theta_1, \theta_2\rb= \int^\infty_0 \p_{\theta_1} h_t\lb \theta_1, \theta_2\rb dt.\eea
Then the full propagator is given by 
\bea P^\infty_0= P^{S^1} \otimes \Pi.\eea

\begin{prop}
$P^{S^1} \lb \theta_1, \theta_2\rb$ is the following periodic function of $\theta_1-\theta_2 \in \bR/\bZ$ 
\bea P^{S^1} \lb \theta_1, \theta_2\rb= \theta_1-\theta_2-\hf \quad \text{if } \quad 0<\theta_1-\theta_2 <1.\eea
\bea \begin{tikzpicture}
    \draw [->] (-3, 0) -- (3, 0) node [right] {$\theta_1-\theta_2$};
    \draw [->] (0, -1.5) -- (0, 1.5) node [above] {$P^{S^1}$};

    \node [anchor = north east] at (-2, 0) {$-2$};
    \node [anchor = north east] at (-1, 0) {$-1$};
    \node [anchor = north east] at (0, 0) {$0$};
    \node [anchor = north east] at (1, 0) {$1$};
    \node [anchor = north east] at (2, 0) {$2$};
    
    \foreach \x in {-2, -1, 0, 1} {
      \draw [teal] (\x, -0.5) -- (\x + 1, 0.5);
    }
    \node [anchor = south east] at (0, 0.5) {$\frac{1}{2}$};
    \node [anchor = north east] at (0, -0.5) {$-\frac{1}{2}$};
    
    \node [fill, circle, inner sep = 0, minimum size = 3] at (0, 0.5) {};
    \node [fill, circle, inner sep = 0, minimum size = 3] at (0, -0.5) {};
    \node [fill, circle, inner sep = 0, minimum size = 3] at (-2, 0) {};
    \node [fill, circle, inner sep = 0, minimum size = 3] at (-1, 0) {};
    \node [fill, circle, inner sep = 0, minimum size = 3] at (1, 0) {};
    \node [fill, circle, inner sep = 0, minimum size = 3] at (2, 0) {};
\end{tikzpicture}\eea
$P^{S^1}$ is NOT a smooth function on $S^1 \times S^1$ (as expected), but it is \emph{bounded}.
\end{prop}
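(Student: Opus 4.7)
The plan is to exploit translation invariance along $S^1$ to reduce the propagator to a single-variable problem, and then use the heat equation in the $t$-direction to identify the resulting function explicitly. Since the heat kernel $h_t(\theta_1,\theta_2)$ depends only on $u=\theta_1-\theta_2 \in \bR/\bZ$, so does $\partial_{\theta_1}h_t$ and hence the time integral $P^{S^1}(\theta_1,\theta_2) = f(u)$ for some $1$-periodic function $f$. It therefore suffices to identify $f$.

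Next, I would pull $\partial_{\theta_1}$ inside the time integral and invoke the heat equation $\partial_t h_t = \partial_{\theta_1}^2 h_t$ to get
\[
\partial_u f(u) \;=\; \int_0^\infty \partial_{\theta_1}^2 h_t(u)\, dt \;=\; \int_0^\infty \partial_t h_t(u)\, dt \;=\; \lim_{t\to\infty} h_t(u) \;-\; \lim_{t\to 0^+} h_t(u).
\]
Expanding in the eigenbasis $\{e^{2\pi i k u}\}_{k\in\bZ}$ of $-\partial_\theta^2$ on $S^1$ (via Poisson summation), one has $h_t(u) = \sum_{k\in\bZ} e^{-(2\pi k)^2 t} e^{2\pi i k u}$, so the $t\to\infty$ limit picks out the $k=0$ mode and equals the constant $1$, while the $t\to 0^+$ limit recovers the $\delta$-function on $S^1$. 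Hence $\partial_u f(u) = 1 - \delta(u)$ as a distribution on $\bR/\bZ$.

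On the open interval $0<u<1$ this reads $f'(u)=1$, giving $f(u)=u+c$; the jump $-1$ produced by the $-\delta$ at $u\in\bZ$ makes this consistent with periodicity. The constant $c$ is pinned down by the vanishing of the zero Fourier mode of $f$: since the $k=0$ term of $h_t$ is killed by $\partial_{\theta_1}$, one has $\int_0^1 f(u)\, du = 0$, forcing $c = -\hf$. This yields $P^{S^1}(\theta_1,\theta_2) = \theta_1-\theta_2-\hf$ for $0<\theta_1-\theta_2<1$, extended $1$-periodically. Boundedness is then immediate: the sawtooth takes values in $[-\hf,\hf]$, and the only non-smoothness sits at $u\in\bZ$ (where the propagator jumps rather than blowing up), consistent with a UV-finite topological theory.

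The main technical point to be careful about is the interchange of $\partial_{\theta_1}$ with $\int_0^\infty dt$ and the evaluation of the boundary terms at $t=0$ and $t=\infty$. These are cleanest in the Fourier basis, where each mode decays rapidly in $t$ and the limits become termwise; alternatively one can carry out the argument on the regularized interval $t\in(\varepsilon,L)$, where smoothness makes all manipulations unambiguous, and then pass to $\varepsilon\to 0$, $L\to\infty$ in the distributional sense.
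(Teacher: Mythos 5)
Your proposal is correct. The paper states this proposition without proof, so there is nothing to compare against line by line; your argument supplies a complete and standard justification. The key steps all check out: translation invariance reduces $P^{S^1}$ to a function $f(u)$ of $u=\theta_1-\theta_2$; the heat equation $\partial_t h_t=\partial_{\theta_1}^2 h_t$ turns $\partial_u f$ into the boundary term $1-\delta(u)$ on $\bR/\bZ$; and the vanishing of the zero Fourier mode (since $\partial_{\theta_1}$ kills the constant mode of $h_t$) fixes the constant to $-\hf$. One can confirm the same answer even more directly by summing the Fourier series $\sum_{k\neq 0}\tfrac{i}{2\pi k}e^{2\pi i k u}=-\sum_{k\geq 1}\tfrac{\sin(2\pi k u)}{\pi k}=u-\hf$ on $(0,1)$, which is essentially the termwise version of your ODE argument; your care about interchanging $\partial_{\theta_1}$ with $\int_0^\infty dt$ (via the regularized window $t\in(\varepsilon,L)$) is exactly the right technical point to flag. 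The concluding observations on boundedness and the jump discontinuity at $u\in\bZ$ match the statement.
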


\subsubsection*{Correlation map}
Let us denote the {formal Weyl algebra}
\bea {\cW_{2n}= \lb \bR [[ p_i,q^i]] ((\hbar)), \star\rb}\ ,\eea
and the {formal Weyl subalgebra}
\bea {\cW_{2n}^+ = \lb \bR [[ p_i,q^i]] [[\hbar]], \star\rb}\ ,\eea
where $\star$ is the Moyal product. We can identify the formal Weyl subalgebra as (formal) functions on $V$ (via deformation quantization):
\bea \cW_{2n}^+ \simeq \lb\widehat{\sO}(V)[[ \hbar]], \star\rb. \eea

Given $f_0,f_1,\cdots, f_m \in \cW_{2n}$, we define
$\cO_{f_0,f_1,\cdots, f_m} \in \sO(\cE)((\hbar))$
by
\bea \cO_{f_0,f_1,\cdots, f_m}\lsb \varphi\rsb\coloneqq \int_{0<\theta_1<\theta_2<\cdots<\theta_m<1} d\theta_1 d\theta_2 \cdots d\theta_m f^{(0)}_{0}\lb \varphi(\theta_0)\rb f^{(1)}_{1}\lb \varphi(\theta_1)\rb \cdots f^{(1)}_{m}\lb \varphi(\theta_m)\rb. \eea
Here $\varphi\in \Omega^\blt_{S^1}\otimes V$,  $f(\varphi(\theta))=f(\bP_i(\theta),\bQ^i(\theta))\in \Omega^\blt_{S^1}$, 
and we decompose it as 
$$
f(\varphi(\theta))=f^{(0)}(\varphi(\theta))+f^{(1)}(\varphi(\theta))d\theta.
$$
\begin{figure}[!htpb]\centering 
\tikzset{every picture/.style={line width=0.75pt}}         
\begin{tikzpicture}[x=0.75pt,y=0.75pt,yscale=-1,xscale=1]

\draw   (392.86,103.45) .. controls (392.86,73.34) and (417.27,48.93) .. (447.39,48.93) .. controls (477.5,48.93) and (501.91,73.34) .. (501.91,103.45) .. controls (501.91,133.57) and (477.5,157.98) .. (447.39,157.98) .. controls (417.27,157.98) and (392.86,133.57) .. (392.86,103.45) -- cycle ;
\draw  [fill={rgb, 255:red, 0; green, 0; blue, 0 }  ,fill opacity=1 ] (496,122.2) .. controls (496,120.41) and (497.45,118.96) .. (499.25,118.96) .. controls (501.04,118.96) and (502.49,120.41) .. (502.49,122.2) .. controls (502.49,124) and (501.04,125.45) .. (499.25,125.45) .. controls (497.45,125.45) and (496,124) .. (496,122.2) -- cycle ;
\draw  [fill={rgb, 255:red, 0; green, 0; blue, 0 }  ,fill opacity=1 ] (469.52,151.96) .. controls (469.52,150.16) and (470.98,148.71) .. (472.77,148.71) .. controls (474.57,148.71) and (476.02,150.16) .. (476.02,151.96) .. controls (476.02,153.75) and (474.57,155.21) .. (472.77,155.21) .. controls (470.98,155.21) and (469.52,153.75) .. (469.52,151.96) -- cycle ;
\draw  [fill={rgb, 255:red, 0; green, 0; blue, 0 }  ,fill opacity=1 ] (495.14,84.66) .. controls (495.14,82.87) and (496.6,81.41) .. (498.39,81.41) .. controls (500.18,81.41) and (501.64,82.87) .. (501.64,84.66) .. controls (501.64,86.45) and (500.18,87.91) .. (498.39,87.91) .. controls (496.6,87.91) and (495.14,86.45) .. (495.14,84.66) -- cycle ;
\draw  [fill={rgb, 255:red, 0; green, 0; blue, 0 }  ,fill opacity=1 ] (422.06,153.48) .. controls (422.06,151.68) and (423.51,150.23) .. (425.31,150.23) .. controls (427.1,150.23) and (428.55,151.68) .. (428.55,153.48) .. controls (428.55,155.27) and (427.1,156.73) .. (425.31,156.73) .. controls (423.51,156.73) and (422.06,155.27) .. (422.06,153.48) -- cycle ;

\draw (479.27,146.11) node [anchor=north west][inner sep=0.75pt]    {$f_{0}^{( 0)}$};
\draw (368.21,149.41) node [anchor=north west][inner sep=0.75pt]    {$f_{m}^{( 1)} d\theta _{m}$};
\draw (449.27,40) node  [rotate=-1.35]  {$\cdots $};
\draw (505.25,112.86) node [anchor=north west][inner sep=0.75pt]    {$f_{1}^{( 1)} d\theta _{1}$};
\draw (504.64,72.56) node [anchor=north west][inner sep=0.75pt]    {$f_{2}^{( 1)} d\theta _{2}$};
\draw (250.5,82.4) node [anchor=north west][inner sep=0.75pt]    {$\int _{\theta _{0} =0< \theta _{1} < \theta _{2} < \cdots < \theta _{m} < 1}$};
\end{tikzpicture}
\end{figure}

\begin{rmk}
$f^{(1)}(\varphi)$ is the {topological descent} of $f^{(0)}(\varphi)$ in topological  field theory.
\end{rmk}

Now let us apply the HRG flow 
$$
\exp{(\hbar P^\infty_0)}\lb \cO_{f_0,f_1,\cdots, f_m}\rb.
$$
Since $P^\infty_0$ is bounded, it is convergent and well-defined! This is the \emph{UV finite property}. As we have discussed, at $L=\infty$, we can view it as defining a function on zero modes  
\bea\bH =H^\blt\lb \Omega^\blt_{S^1} \otimes V, d\rb= H^\blt(S^1)\otimes V= V\oplus V d\theta.\eea
On zero modes we have $\widehat{\sO}(\bH)=\widehat{\Omega}^{-\blt}_{2n}$ forms on $V$.

\begin{defn}\label{defn-TQM-correlation}
We define the following correlation map:
\bea \lan \cdots\ran_{free}: \cW_{2n} \otimes \cdots \otimes \cW_{2n} \to \widehat{\Omega}^{-\blt}_{2n}((\hbar))\eea
by 
$$
\lan f_0\otimes f_1\otimes \cdots\otimes f_m\ran_{free} \coloneqq \left. \exp{(\hbar P^\infty_0)}\lb \cO_{f_0,f_1,\cdots, f_m}\rb \right|_{\bH}.
$$
In the path integral perspective, this is 
\bea \lan f_0\otimes f_1\otimes \cdots\otimes f_m\ran_{free}(\alpha)
=\int_{\Im d^\ast \subset \cE} \lsb D\varphi\rsb e^{-S[\varphi+\alpha]/\hbar} \cO_{f_0,f_1,\cdots,f_m}[\varphi+\alpha], \quad \alpha\in \bH= H^\blt(S^1)\otimes V.\eea
Here the zero mode $\alpha$ is viewed as the background field. It can also be represented as a Feynman diagram as follows.
\begin{figure}[!htpb]\centering 
\tikzset{every picture/.style={line width=0.75pt}}         
\begin{tikzpicture}[x=0.75pt,y=0.75pt,yscale=-1,xscale=1]
\draw   (370.33,193.5) .. controls (370.33,163.12) and (394.96,138.5) .. (425.33,138.5) .. controls (455.71,138.5) and (480.33,163.12) .. (480.33,193.5) .. controls (480.33,223.88) and (455.71,248.5) .. (425.33,248.5) .. controls (394.96,248.5) and (370.33,223.88) .. (370.33,193.5) -- cycle ;
\draw    (374.17,120.17) -- (389.67,152) ;
\draw [shift={(389.67,152)}, rotate = 64.04] [color={rgb, 255:red, 0; green, 0; blue, 0 }  ][fill={rgb, 255:red, 0; green, 0; blue, 0 }  ][line width=0.75]      (0, 0) circle [x radius= 3.35, y radius= 3.35]   ;
\draw    (347.33,170.17) -- (389.67,152) ;
\draw [shift={(389.67,152)}, rotate = 336.77] [color={rgb, 255:red, 0; green, 0; blue, 0 }  ][fill={rgb, 255:red, 0; green, 0; blue, 0 }  ][line width=0.75]      (0, 0) circle [x radius= 3.35, y radius= 3.35]   ;
\draw    (346.83,240.67) -- (377.17,219) ;
\draw [shift={(377.17,219)}, rotate = 324.46] [color={rgb, 255:red, 0; green, 0; blue, 0 }  ][fill={rgb, 255:red, 0; green, 0; blue, 0 }  ][line width=0.75]      (0, 0) circle [x radius= 3.35, y radius= 3.35]   ;
\draw    (403.67,267.17) -- (425.33,248.5) ;
\draw [shift={(425.33,248.5)}, rotate = 319.25] [color={rgb, 255:red, 0; green, 0; blue, 0 }  ][fill={rgb, 255:red, 0; green, 0; blue, 0 }  ][line width=0.75]      (0, 0) circle [x radius= 3.35, y radius= 3.35]   ;
\draw    (452.17,267.17) -- (425.33,248.5) ;
\draw [shift={(425.33,248.5)}, rotate = 214.82] [color={rgb, 255:red, 0; green, 0; blue, 0 }  ][fill={rgb, 255:red, 0; green, 0; blue, 0 }  ][line width=0.75]      (0, 0) circle [x radius= 3.35, y radius= 3.35]   ;
\draw    (492.17,250.67) -- (465.33,232) ;
\draw [shift={(465.33,232)}, rotate = 214.82] [color={rgb, 255:red, 0; green, 0; blue, 0 }  ][fill={rgb, 255:red, 0; green, 0; blue, 0 }  ][line width=0.75]      (0, 0) circle [x radius= 3.35, y radius= 3.35]   ;
\draw    (509.17,205.17) -- (480.33,186) ;
\draw [shift={(480.33,186)}, rotate = 213.61] [color={rgb, 255:red, 0; green, 0; blue, 0 }  ][fill={rgb, 255:red, 0; green, 0; blue, 0 }  ][line width=0.75]      (0, 0) circle [x radius= 3.35, y radius= 3.35]   ;
\draw    (506.67,162.67) -- (480.33,186) ;
\draw [shift={(480.33,186)}, rotate = 138.46] [color={rgb, 255:red, 0; green, 0; blue, 0 }  ][fill={rgb, 255:red, 0; green, 0; blue, 0 }  ][line width=0.75]      (0, 0) circle [x radius= 3.35, y radius= 3.35]   ;
\draw [color={rgb, 255:red, 144; green, 19; blue, 254 }  ,draw opacity=1 ]   (377.17,219) .. controls (407.67,213.67) and (424.17,238.17) .. (425.33,248.5) ;
\draw [color={rgb, 255:red, 144; green, 19; blue, 254 }  ,draw opacity=1 ]   (480.33,186) .. controls (461.5,196) and (457,212.5) .. (465.33,232) ;
\draw [color={rgb, 255:red, 144; green, 19; blue, 254 }  ,draw opacity=1 ]   (453.33,146.17) .. controls (488.33,142.67) and (485.83,172.67) .. (480.33,186) ;
\draw    (464.83,120.67) -- (453.33,146.17) ;
\draw [shift={(453.33,146.17)}, rotate = 114.27] [color={rgb, 255:red, 0; green, 0; blue, 0 }  ][fill={rgb, 255:red, 0; green, 0; blue, 0 }  ][line width=0.75]      (0, 0) circle [x radius= 3.35, y radius= 3.35]   ;
\draw [color={rgb, 255:red, 144; green, 19; blue, 254 }  ,draw opacity=1 ]   (389.67,152) -- (465.33,232) ;

\draw (400.67,202.4) node [anchor=north west][inner sep=0.75pt]  [color={rgb, 255:red, 144; green, 19; blue, 254 }  ,opacity=1 ]  {$P_{0}^{\infty }$};
\draw (410.17,156.9) node [anchor=north west][inner sep=0.75pt]  [color={rgb, 255:red, 144; green, 19; blue, 254 }  ,opacity=1 ]  {$P_{0}^{\infty }$};
\draw (448.67,176.9) node [anchor=north west][inner sep=0.75pt]  [color={rgb, 255:red, 144; green, 19; blue, 254 }  ,opacity=1 ]  {$P_{0}^{\infty }$};
\draw (478.67,134.9) node [anchor=north west][inner sep=0.75pt]  [color={rgb, 255:red, 144; green, 19; blue, 254 }  ,opacity=1 ]  {$P_{0}^{\infty }$};
\draw (334.33,234.9) node [anchor=north west][inner sep=0.75pt]    {$\alpha $};
\draw (334.33,159.9) node [anchor=north west][inner sep=0.75pt]    {$\alpha $};
\draw (365.83,104.9) node [anchor=north west][inner sep=0.75pt]    {$\alpha $};
\draw (391.33,260.9) node [anchor=north west][inner sep=0.75pt]    {$\alpha $};
\draw (451.33,261.9) node [anchor=north west][inner sep=0.75pt]    {$\alpha $};
\draw (493.83,244.9) node [anchor=north west][inner sep=0.75pt]    {$\alpha $};
\draw (510.33,197.4) node [anchor=north west][inner sep=0.75pt]    {$\alpha $};
\draw (509.33,150.9) node [anchor=north west][inner sep=0.75pt]    {$\alpha $};
\draw (463.33,103.4) node [anchor=north west][inner sep=0.75pt]    {$\alpha $};
\draw (353.34,208.55) node [anchor=north west][inner sep=0.75pt]  [rotate=-269.59]  {$\cdots $};
\draw (410.4,119.77) node [anchor=north west][inner sep=0.75pt]  [rotate=-0.52]  {$\cdots $};
\draw (420.83,253.4) node [anchor=north west][inner sep=0.75pt]    {$0$};
\draw (356.67,151.21) node [anchor=north west][inner sep=0.75pt]  [rotate=-302.24]  {$\cdots $};
\end{tikzpicture}
\end{figure}
\end{defn}

\begin{rmk}See \cite{LWY} for a probabilistic approach where the topological correlations above are constructed in terms of a large variance limit.
\end{rmk}

\subsubsection*{(Cyclic) Hochschild complex reviewed}
Let $A$ be a unital associative algebra and $\ols{A} \coloneqq A/(\bC\cdot 1)$. Let $C_{-p}(A)\coloneqq A\otimes \ols{A}^{\otimes p}$ be the cyclic $p$-chains. It carries a natural {Hochschild differential}
\bea b: C_{-p}(A) \to C_{-p+1}(A), \quad p\geq 1\eea
by 
$$
b(a_0\otimes \cdots\otimes a_p)=(-1)^{p} a_p a_0\otimes \cdots\otimes a_{p-1}+\sum_{i=0}^{p-1} (-1)^i a_0\otimes \cdots\otimes a_{i}a_{i+1} \otimes\cdots\otimes a_p. 
$$
\bea
\tikzset{every picture/.style={line width=0.75pt}}
\begin{tikzpicture}[x=0.75pt,y=0.75pt,yscale=-1,xscale=1]

\draw   (44.9,196.45) .. controls (44.9,166.34) and (69.31,141.93) .. (99.43,141.93) .. controls (129.54,141.93) and (153.95,166.34) .. (153.95,196.45) .. controls (153.95,226.57) and (129.54,250.98) .. (99.43,250.98) .. controls (69.31,250.98) and (44.9,226.57) .. (44.9,196.45) -- cycle ;
\draw  [fill={rgb, 255:red, 0; green, 0; blue, 0 }  ,fill opacity=1 ] (102.91,143.32) .. controls (102.91,141.52) and (104.36,140.07) .. (106.16,140.07) .. controls (107.95,140.07) and (109.4,141.52) .. (109.4,143.32) .. controls (109.4,145.11) and (107.95,146.57) .. (106.16,146.57) .. controls (104.36,146.57) and (102.91,145.11) .. (102.91,143.32) -- cycle ;
\draw  [fill={rgb, 255:red, 0; green, 0; blue, 0 }  ,fill opacity=1 ] (129.82,154.92) .. controls (129.82,153.13) and (131.28,151.67) .. (133.07,151.67) .. controls (134.86,151.67) and (136.32,153.13) .. (136.32,154.92) .. controls (136.32,156.71) and (134.86,158.17) .. (133.07,158.17) .. controls (131.28,158.17) and (129.82,156.71) .. (129.82,154.92) -- cycle ;
\draw  [fill={rgb, 255:red, 0; green, 0; blue, 0 }  ,fill opacity=1 ] (146.06,218.96) .. controls (146.06,217.16) and (147.52,215.71) .. (149.31,215.71) .. controls (151.11,215.71) and (152.56,217.16) .. (152.56,218.96) .. controls (152.56,220.75) and (151.11,222.21) .. (149.31,222.21) .. controls (147.52,222.21) and (146.06,220.75) .. (146.06,218.96) -- cycle ;
\draw  [fill={rgb, 255:red, 0; green, 0; blue, 0 }  ,fill opacity=1 ] (125.18,242.16) .. controls (125.18,240.37) and (126.64,238.91) .. (128.43,238.91) .. controls (130.22,238.91) and (131.68,240.37) .. (131.68,242.16) .. controls (131.68,243.95) and (130.22,245.41) .. (128.43,245.41) .. controls (126.64,245.41) and (125.18,243.95) .. (125.18,242.16) -- cycle ;
\draw  [fill={rgb, 255:red, 0; green, 0; blue, 0 }  ,fill opacity=1 ] (61.14,238.91) .. controls (61.14,237.12) and (62.6,235.66) .. (64.39,235.66) .. controls (66.19,235.66) and (67.64,237.12) .. (67.64,238.91) .. controls (67.64,240.71) and (66.19,242.16) .. (64.39,242.16) .. controls (62.6,242.16) and (61.14,240.71) .. (61.14,238.91) -- cycle ;
\draw [color={rgb, 255:red, 144; green, 19; blue, 254 }  ,draw opacity=1 ]   (142.2,139.53) .. controls (140.79,133.55) and (138.45,130.35) .. (133.01,126.4) ;
\draw [shift={(130.6,124.73)}, rotate = 33.69] [fill={rgb, 255:red, 144; green, 19; blue, 254 }  ,fill opacity=1 ][line width=0.08]  [draw opacity=0] (10.72,-5.15) -- (0,0) -- (10.72,5.15) -- (7.12,0) -- cycle    ;
\draw [color={rgb, 255:red, 144; green, 19; blue, 254 }  ,draw opacity=1 ]   (127.62,124.04) .. controls (121.65,122.66) and (112.06,120.78) .. (105,123.13) ;
\draw [shift={(130.6,124.73)}, rotate = 192.99] [fill={rgb, 255:red, 144; green, 19; blue, 254 }  ,fill opacity=1 ][line width=0.08]  [draw opacity=0] (10.72,-5.15) -- (0,0) -- (10.72,5.15) -- (7.12,0) -- cycle    ;

\draw (91.58,125.59) node [anchor=north west][inner sep=0.75pt]    {$a_{i+1}$};
\draw (134.89,142.11) node [anchor=north west][inner sep=0.75pt]    {$a_{i}$};
\draw (153.8,212.68) node [anchor=north west][inner sep=0.75pt]    {$a_{1}$};
\draw (132.67,241.45) node [anchor=north west][inner sep=0.75pt]    {$a_{0}$};
\draw (57.75,244.91) node [anchor=north west][inner sep=0.75pt]    {$a_{p}$};
\draw (129.39,104.14) node [anchor=north west][inner sep=0.75pt]  [color={rgb, 255:red, 144; green, 19; blue, 254 }  ,opacity=1 ]  {$b$};
\draw (40,171.15) node  [rotate=-292.57]  {$\cdots $};
\draw (102.65,264) node  [rotate=-1.35]  {$\cdots $};
\draw (163.33,186) node  [rotate=-261.64]  {$\cdots $};
\end{tikzpicture}
\eea
Then the associativity implies 
$b\circ b=0$. Thus,
$(C_{-\blt}(A), b)$ defines the {Hochschild chain complex}. We can also define the {Connes operator}:
\bea B: C_{-p}(A) \to C_{-p-1}(A)\eea
by 
$$
B(a_0\otimes \cdots\otimes a_p)=1\otimes a_0 \otimes\cdots\otimes a_p+\sum_{i=1}^p (-1)^{pi} 1\otimes a_i \otimes\cdots\otimes a_p\otimes a_0 \otimes\cdots\otimes a_{i-1}.
$$

We have the following relations:
\bea b^2=0, \quad B^2=0, \quad [b,B]=bB+Bb=0.\eea
Let $u$ be a formal variable of $\on{deg}=2$. Then $(b+uB)^2=0$.
This defines a complex
\bea CC^{per}_{-\blt}(A)=\lb C_{-\blt}(A)[u,u^{-1}], b+uB\rb,\eea
called the {periodic cyclic complex}. For a systematic reference, see Loday \cite{Loday}. 

\subsubsection*{Back to Correlation map}
It is not hard to see that
\bea \lan \cdots\ran_{free}: C_{-p}\lb \cW_{2n}\rb \to \widehat{\Omega}^{-p}_{2n}((\hbar)),\eea
i.e. $\lan f_0\otimes f_1\otimes \cdots\otimes f_p\ran_{free}$ is a $p$-form. Recall that $\widehat{\Omega}^{-\blt}_{2n}$ is equipped with a BV operator $\Delta=\cL_{\omega^{-1}}=\cL_\Pi$, the Lie derivative with respect to the Poisson bi-vector.

\begin{prop}[\cite{GLX-Effective}]
\bea \lan b(-)\ran_{free}= \hbar \Delta \lan \cdots\ran_{free},\\
\lan B(-)\ran_{free}= d_{2n} \lan \cdots\ran_{free}.\eea
Here $d_{2n}: \widehat{\Omega}^{-\blt}_{2n} \to \widehat{\Omega}^{-(\blt+1)}_{2n}$ is the de Rham differential. 
\end{prop}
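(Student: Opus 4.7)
The plan is to establish both identities by a Stokes-theorem argument on the open simplex $\Delta^m = \{0 < \theta_1 < \cdots < \theta_m < 1\}$ of insertion points on $S^1$, combined with a Feynman-diagram expansion of $\lan - \ran_{free}$. The key computational input is the explicit formula
\bea \p_{\theta_1} P^{S^1}(\theta_1, \theta_2) = 1 - \delta_{S^1}(\theta_1 - \theta_2),\eea
which follows from writing $P^{S^1} = \int_0^\infty \p_\theta h_t\, dt$ and applying the heat equation; equivalently, it is the distributional derivative of the sawtooth function. Geometrically, it encodes that $P^{S^1}$ is harmonic away from the diagonal and has a unit jump across it.

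For the first identity, I would first expand $\lan f_0 \otimes \cdots \otimes f_m \ran_{free}$ as a Feynman integral over $\Delta^m$ of a top form built from the insertions $f_i^{(0)} + f_i^{(1)} d\theta_i$ at $\theta_i$, external legs evaluated at the zero mode $\alpha = \alpha_0 + \alpha_1 d\theta$, and internal edges contributing $\hbar P^{S^1}(\theta_i,\theta_j) \otimes \Pi$. Applying Stokes' theorem to this integral produces boundary contributions from the codimension-one strata $\theta_i = \theta_{i+1}$ and from the cyclic stratum identifying $\theta_m \to 1$ with $\theta_1 \to 0$. At each collision, the jump in $P^{S^1}$ of magnitude $1$, combined with the factor $\hbar$ from the HRG exponential, executes the Gaussian contraction of derivatives at the two colliding insertions --- this is precisely the Moyal star product $f_i \star f_{i+1}$. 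Summing boundary contributions with signs dictated by the orientation of $\Delta^m$ reproduces $\lan b(-)\ran_{free}$, with the cyclic boundary contributing the term $(-1)^m f_m \star f_0 \otimes \cdots \otimes f_{m-1}$. To identify this sum with $\hbar \Delta \lan - \ran_{free}$, I would use the Hodge-type decomposition $\omega^{-1} = K_{\bH} + Q(P^\infty_0)$ relating the full Poisson kernel to the parametrix and the zero-mode kernel $K_{\bH} = \omega_{\bH}^{-1}$. Under the Feynman integral, the $Q(P^\infty_0)$ piece is precisely the Stokes boundary contribution, while the $K_{\bH}$ piece is exactly $\hbar \Delta$ acting on the zero-mode expression by contracting two external $\alpha$-legs via $\Pi = \omega^{-1}$.

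For the second identity, the Connes operator $B$ inserts the unit $1$ cyclically. Since $1$ has trivial topological descent ($1^{(0)} = 1$, $1^{(1)} = 0$), the correlation $\lan B(f_0 \otimes \cdots \otimes f_p) \ran_{free}$ is an integral over $\Delta^{p+1}$ with one insertion carrying $1$ and no $d\theta$. Direct evaluation of the cyclic sum, after integration over the extra $\theta$-variable, reorganizes the form-degree counting so that one of the $\alpha_0$-arguments of the $f_j$'s is differentiated and paired with $\alpha_1$; under the identification $\widehat{\sO}(\bH) \simeq \widehat{\Omega}^{-\blt}_{2n}$ with $\alpha_1^j \leftrightarrow dx^j$ this operation is precisely $d_{2n} = dx^j \p_{x^j}$.

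The main obstacle is the bookkeeping of signs and combinatorial factors: one must verify that the Koszul signs from permuting the form-valued insertions $f_i^{(1)} d\theta_i$, the orientation of $\Delta^m$, and the identification of the cyclic stratum all conspire to reproduce the sign conventions in the Hochschild differential $b$ and the Connes operator $B$. Once these signs are organized, the proof reduces to the clean geometric statement that the Moyal product is realized as the short-distance limit of the propagator $P^{S^1}$ across its diagonal jump, and that the unit insertion in $B$ realizes the de Rham derivative on zero modes. A complete treatment appears in \cite{GLL-BV, GLX-Effective}.
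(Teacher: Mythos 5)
Your proposal is correct and follows essentially the same route as the proof in the cited references \cite{GLL-BV, GLX-Effective} (the present paper states the proposition without proof): the identity $\partial_{\theta_1}P^{S^1}=1-\delta_{S^1}(\theta_1-\theta_2)$, equivalently $K_0-K_{\mathbf{H}}=Q(P_0^\infty)$, is exactly the chain homotopy used there, with the collision strata of the simplex producing the Moyal product via the $\mp\tfrac{1}{2}$ jump of the propagator and the harmonic part of the kernel producing $\hbar\Delta=\hbar\mathcal{L}_{\Pi}$ on zero modes. Your treatment of $B$ via the trivial topological descent of the unit insertion is likewise the argument given there, so the only remaining work is the sign bookkeeping you already flag.
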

In other words, the {correlation map}:
\bea  \lan \cdots\ran_{free}: C_{-\blt}(\cW_{2n}) \to \widehat{\Omega}^{-\blt}_{2n}((\hbar ))\eea
intertwines $b$ with $\hbar\Delta$ and $B$ with $d_{2n}$. We can combine the above two to get
\bea \lan\cdots\ran_{free}: CC^{per}_{-\blt}(\cW_{2n})\to 
\widehat{\Omega}^{-\blt}_{2n}((\hbar))[u,u^{-1}]\eea
which intertwines $b+uB$ with $\hbar\Delta+ud_{2n}$.

\subsubsection*{BV integral on zero modes}
We can define a \emph{BV integration} map on the BV algebra $\lb \widehat{\Omega}^{-\blt}_{2n},\Delta\rb$
which is only non-zero on top forms $\widehat{\Omega}^{-2n}_{2n}$ and sends
\bea \beta\in \widehat{\Omega}^{-2n}_{2n} \mapsto \left. \frac{\hbar^n}{n!}\iota^n_\Pi \beta \right|_{p=q=0}.\eea
This is the {Berezin integral} \cite{Berezin-phase} over the purely fermionic super Lagrangian. We can extend this BV integration to an $S^1$-equivariant version by
\bea \int_{BV}: \widehat{\Omega}^{-\blt}_{2n}[u,u^{-1}] \to \bR((\hbar))[u,u^{-1}], \quad \beta\mapsto \left. \lb u^n e^{\hbar\iota_\Pi/u}\beta\rb\right|_{p=q=0}.\eea
Then it has the following property
\bea \int_{BV} \lb \hbar\Delta +ud_{2n}\rb(-)=0.\eea

\begin{rmk}
For $\beta\in \widehat{\Omega}^{-\blt}_{2n}$, the non-equivariant limit
\bea \lim_{u\to 0} \int_{BV} \beta= \left. \frac{\hbar^n}{n!}\iota_\Pi^n \beta\right|_{p=q=0}\eea
gives back the Berezin integral.
\end{rmk}

Combining the above maps, we define
\bea \Tr \coloneqq \int_{BV}  \circ \lan \cdots\ran_{free}: CC^{per}_{-\blt}(\cW_{2n})\to \bR((\hbar))[u,u^{-1}]\eea
which satisfies the following equation:
\bea\Tr\lb (b+uB)(-)\rb=0.\eea
Therefore $\Tr$ descends to {periodic cyclic homology}. This essentially leads to the trace formula in {Feigin-Felder-Shoikhet} \cite{FFS}.

\subsubsection*{Quantum Master Equation}
We can generalize slightly by considering a graded vector space $V$ with a $\on{deg}=0$ symplectic pairing $\omega$. We still have the canonical quantization $\lb \widehat{\sO}(V)[[ \hbar]],\star\rb$ by Moayl product and similarly can define the BV algebra of forms 
$$
\lb \widehat{\Omega}^{-\blt}_{V},\Delta=\cL_{\omega^{-1}}\rb.
$$
The same trace map gives 
\bea \lan \cdots\ran_{free}: C_{-\blt}\lb \widehat{\sO}(V)[[\hbar]]\rb\to 
\widehat{\Omega}^{-\blt}_{V}((\hbar)), \quad b\mapsto \hbar\Delta.\eea

Given $\gamma\in \widehat{\sO}(V)[[\hbar]]$, $\on{deg}(\gamma)=1$, it defines an action functional:
\bea I_\gamma= \int_{S^1}\gamma(\varphi) \quad \forall \varphi\in \Omega^\blt(S^1)\otimes V.\eea
Let us treat $I_\gamma$ as an {interaction} and consider
\bea \underbrace{\hf \int_{S^1} \lan \varphi,d\varphi\ran}_{\text{free part}}
+\underbrace{\int_{S^1} \gamma(\varphi)}_{I_\gamma}.\eea
Then we run the HRG flow to get
\bea e^{\frac{1}{\hbar}I_\gamma[\infty]}\coloneqq e^{\hbar\partial_{P^\infty_0}} e^{\frac{1}{\hbar}I_\gamma}\eea
which is well-defined since $P^\infty_0$ is bounded.

Let us now analyze the QME. By construction,
\bea e^{\frac{1}{\hbar}I_\gamma[\infty]}= \lan 1\otimes e^{\gamma/\hbar}\ran_{free}.\eea
Assume $\gamma\star \gamma=\hf \lsb \gamma,\gamma\rsb_\star=0$. Then 
\bea \hbar\Delta e^{\frac{1}{\hbar}I_\gamma[\infty]}= \lan b\lb 1\otimes e^{\gamma/\hbar}\rb \ran_{free}=0.\eea

\begin{prop}[\cite{GLL-BV}]
If $\lsb \gamma,\gamma\rsb_\star=0$, then the local interaction $I_\gamma= \int_{S^1} \gamma(\varphi)$ defines a family of solutions of effective QME $I_\gamma[L]$ at scale $L>0$ by
\bea e^{\frac{1}{\hbar}I_\gamma[L]}\coloneqq \lim_{\varepsilon\to\infty} e^{\hbar \partial_{P^L_\varepsilon}} e^{\frac{1}{\hbar}I_\gamma}.\eea
\end{prop}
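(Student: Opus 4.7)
My plan is to establish the proposition in three steps: well-definedness of $I_\gamma[L]$, the effective QME at the limiting scale $L=\infty$, and descent to finite $L$ via homotopy RG (reading the limit in the statement as $\varepsilon\to 0$, the only sensible interpretation since $P^L_\varepsilon$ connects scales $\varepsilon$ and $L$).

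First I would verify that $\lim_{\varepsilon\to 0} e^{\hbar\partial_{P^L_\varepsilon}} e^{I_\gamma/\hbar}$ exists. Expanding via Feynman graphs, each connected graph contributes an integral over configurations on $S^1$ with $P^L_\varepsilon$ decorating internal edges and copies of $\gamma$ at vertices. Since the full propagator kernel $P^{S^1}$ is bounded on $S^1\times S^1$ (established in the Proposition above), and $P^L_0$ is the difference of two such bounded kernels, integration over configurations of any finite-order graph on the compact circle produces no UV divergence. Hence each coefficient in the $\hbar$-expansion converges as $\varepsilon\to 0$, and the family $\{I_\gamma[L]\}_{L>0}$ is well-defined; the HRG consistency across scales then follows tautologically from $P^{L'}_0 = P^{L'}_L + P^L_0$.

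Second, at $L=\infty$ the Feynman expansion of $e^{\hbar\partial_{P^\infty_0}} e^{I_\gamma/\hbar}$, restricted to a zero-mode background $\alpha\in\mathbf{H}$, matches the ordered-configuration integral defining $\langle 1\otimes e^{\gamma/\hbar}\rangle_{free}$ from Definition~\ref{defn-TQM-correlation}. Now the Hochschild boundary of the cyclic chain $1\otimes e^{\gamma/\hbar}\in CC^{per}_{-\bullet}(\cW_{2n})$ reduces, after cyclic combinatorics, to an expression proportional to $\gamma\star\gamma$, so the hypothesis $[\gamma,\gamma]_\star=0$ gives $b\bigl(1\otimes e^{\gamma/\hbar}\bigr)=0$. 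Applying the intertwining identity $\langle b(-)\rangle_{free}=\hbar\Delta\,\langle\cdots\rangle_{free}$ from the preceding Proposition yields
\[
\hbar\Delta_{\mathbf{H}}\, e^{I_\gamma[\infty]/\hbar} \;=\; \langle b(1\otimes e^{\gamma/\hbar})\rangle_{free} \;=\; 0,
\]
and since the de Rham differential acts trivially on harmonics, this is the effective QME $(d+\hbar\Delta_\infty)e^{I_\gamma[\infty]/\hbar}=0$.

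Third, I would descend to finite $L$ using the operator identity proved in the toy model, namely
\[
(d+\hbar\Delta_L)\,e^{\hbar\partial_{P^\infty_L}} \;=\; e^{\hbar\partial_{P^\infty_L}}\,(d+\hbar\Delta_\infty),
\]
which holds because $K_L-K_\infty = d(P^\infty_L)$. Combining with the HRG relation $e^{I_\gamma[\infty]/\hbar}=e^{\hbar\partial_{P^\infty_L}}e^{I_\gamma[L]/\hbar}$ gives
\[
e^{\hbar\partial_{P^\infty_L}}\bigl((d+\hbar\Delta_L)\,e^{I_\gamma[L]/\hbar}\bigr) \;=\; (d+\hbar\Delta_\infty)\,e^{I_\gamma[\infty]/\hbar} \;=\; 0,
\]
and invertibility of $e^{\hbar\partial_{P^\infty_L}}$ as a formal series with unit leading term delivers the effective QME at scale $L$.

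The main obstacle is the middle step: one has to carefully match the graph expansion of $e^{\hbar\partial_{P^\infty_0}}e^{I_\gamma/\hbar}$ with the ordered integral $\langle 1\otimes e^{\gamma/\hbar}\rangle_{free}$ (accounting for the symmetry factors converting unordered integrals on $(S^1)^m$ to simplicial ones), and then run the Hochschild calculation so that the boundary-of-configuration contributions reassemble into the Moyal product $\gamma\star\gamma$. This combinatorial/geometric reduction---essentially Stokes on the compactified configuration space of $S^1$ together with the OPE interpretation of $b$---is the only substantive content; once it is in place, steps one and three are formal consequences of the effective BV machinery recalled earlier in the paper.
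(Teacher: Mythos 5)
Your proof is correct and follows essentially the same route as the paper: the paper's own argument (given in the discussion immediately preceding the proposition, with details deferred to \cite{GLL-BV}) identifies $e^{I_\gamma[\infty]/\hbar}$ with $\langle 1\otimes e^{\gamma/\hbar}\rangle_{free}$, applies the intertwining $\langle b(-)\rangle_{free}=\hbar\Delta\langle\cdots\rangle_{free}$ together with $b\bigl(1\otimes e^{\gamma/\hbar}\bigr)=0$ when $\gamma\star\gamma=0$, and obtains the finite-$L$ statement by homotopy RG compatibility, with well-definedness coming from boundedness of the propagator. You are also right that the $\varepsilon\to\infty$ in the statement is a typo for $\varepsilon\to 0$, consistent with the paper's earlier definition $e^{I_\gamma[\infty]/\hbar}=e^{\hbar\partial_{P^\infty_0}}e^{I_\gamma/\hbar}$.
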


\subsection{Global Theory}
\label{sec:tqm2}

Recall in Section \ref{sec:tqm1}, we have discussed the first-order formalism of TQM such that in a local model with maps $\varphi: \Omega^\blt_{S^1} \to V\simeq \bR^{2n}$, the correlation map
\bea  \lan \cdots\ran_{free}: C_{-\blt}(\cW_{2n}) \to \widehat{\Omega}^{-\blt}_{2n}((\hbar ))\eea
intertwines $b$ with $\hbar\Delta$ and $B$ with $d_{2n}$.

In this section, we are going to glue this construction to a symplectic manifold and establish the algebraic index to universal Lie algebra
cohomology computations. The basic idea is to \emph{glue} the local model $\Sigma\to T^{Model}\subset X$. In the following discussion,  we borrow the presentation in \cite{GLX-Effective}, where extensive references are given for related material.

\subsubsection*{Gluing via Gelfand-Kazhdan formal geometry}
\begin{defn}
A \textbf{Harish-Chandra pair} is a pair $(\fg,K)$, where $\fg$ is a Lie algebra, $K$ is a Lie group, with 
\begin{itemize}
    \item an action of $K$ on $\fg$: $K\xrightarrow{\ \rho\ } \on{Aut}(\fg)$,
    \item a natural embedding: $\on{Lie}(K) \xhookrightarrow{\ i\ } \fg$, where $\on{Lie}(K)$ is the Lie algebra associated with $K$,
\end{itemize}
such that they are compatible:
\bea
\begin{tikzcd}
\on{Lie}(K) \ar[r, hook, "i"] \ar[dr, "d\rho"]
& \fg \ar[d,"adjoint"] \\
& \on{Der}(\fg)
\end{tikzcd}
\eea
\end{defn}

\begin{defn}
A \textbf{$(\fg,K)$-module} is a vector space $V$ with 
\begin{itemize}
    \item an action of $K$ on $V$: $K \xrightarrow{\ \varphi\ } \on{GL}(V)$,
    \item a Lie algebra morphism: $\fg\to \on{End}(V)$,
\end{itemize}
such that they are compatible:
\bea
\begin{tikzcd}
\on{Lie}(K) \ar[r, hook, "i"] \ar[dr, "d\varphi"]
& \fg \ar[d] \\
& \on{End}(V)
\end{tikzcd}
\eea
\end{defn}

\begin{defn}
A \textbf{flat $(\fg,K)$-bundle} over X is
\begin{itemize}
    \item a principal $K$-bundle $P \xrightarrow{\ \pi\ } X$,
    \item a $K$-equivariant $\fg$-valued 1-form $\gamma\in \Omega^1(P,\fg)$ on $P$,
\end{itemize}
satisfying the following conditions:
\bi[(1)]
    \item $\forall a\in \on{Lie}(K)$, let $\xi_a\in \on{Vect}(P)$ generated by $a$. Then we have the contraction $\gamma(\xi_a)=a$ such that
    \bea
    \begin{tikzcd}
        0 \ar[r] & \on{Lie}(K) \ar[r] \ar[dr, hook, "i"] & \on{Vect}(P) \ar[d, "\gamma"] \\
        & & \fg
    \end{tikzcd}
    \eea
    \item $\gamma$ satisfies the Maurer-Cartan equation
    \bea d\gamma +\hf \lsb \gamma,\gamma\rsb=0,\eea
    where $d$ is the de Rham differential on $P$, and $\lsb-,-\rsb$ is the Lie bracket in $\fg$.
\ei
\end{defn}

Given a flat $(\fg,K)$-bundle $P\to X$ and $(\fg,K)$-module $V$, let \bea\Omega^\blt(P,V)\coloneqq \Omega^\blt (P)\otimes V\eea
denote differential forms on $P$ valued in $V$.
It carries a connection
\bea \nabla^\gamma= d+\gamma :  \Omega^\blt(P,V)\to \Omega^{\blt+1}(P,V)\eea
which is \emph{flat} by the Maurer-Cartan equation.
The group $K$ acts on $\Omega^\blt(P)$ and $V$, and hence inducing a natural action on $\Omega^\blt(P,V)$. Let 
\bea V_P\coloneqq P \times_K V\eea 
be the vector bundle on $X$ associated to the $K$-representation $V$. Let 
$$
\Omega^\blt(X;V_P)
$$ 
be differential forms
on $X$ valued in the bundle $V_P\to X$.
Similar to the usual principal bundle case, $\nabla^\gamma$ induces a flat connection on $V_P\to X$. This defines a (de Rham) chain complex $\lb \Omega^\blt(X;V_P), \nabla^\gamma\rb$, and $H^\blt(X;V_P)$ denotes the corresponding de Rham cohomology.

We can descend Lie algebra cohomologies to geometric objects on $X$.
\begin{defn}
Let $V$ be a $(\fg,K)$-module. Define the \textbf{$(\fg,K)$ relative Lie algebra cochain complex} 
$\lb C^\blt_{\Lie}(\fg,K;V), \p_{Lie}\rb$
by
\bea C^p_{\Lie}(\fg,K;V)=\on{Hom}_K \lb \asym^p\lb\fg/\on{Lie}(K)\rb,V\rb.\eea
\end{defn}
Here $\on{Hom}_K$ means $K$-equivariant linear maps. $\p_{\Lie}$ is the Chevalley-Eilenberg differential if we view
$C^p_{\Lie}(\fg,K;V)$ as a subspace of the Lie algebra cochain $C^p_{\Lie}(\fg;V)$. Explicitly,
for $\alpha\in C^p_{\Lie}(\fg,K;V)$,
\bea \lb \p_{\Lie}\alpha\rb \lb a_1 \wedge \cdots \wedge a_{p+1}\rb
&=\sum_{i=1}^{p+1}(-1)^{i-1} a_i\cdot \alpha\lb a_1 \wedge \cdots \wedge \widehat{a}_i \wedge \cdots \wedge a_{p+1}\rb\\
&\quad +\sum_{i<j} (-1)^{i+j} \alpha\lb \lsb a_i,a_j\rsb \wedge \cdots \wedge \widehat{a}_i \wedge \cdots \wedge \widehat{a}_j \wedge \cdots \wedge a_{p+1}\rb.\eea
The corresponding cohomology is $H^\blt_{\Lie}(\fg,K;V)$. 

Given a $(\fg,K)$-module $V$ and flat $(\fg,K)$-bundle $P\to X$ with the flat connection $\gamma\in\Omega^1(P,\fg)$. 
We can define the \textbf{descent map} from the $(\fg, K)$ relative Lie algebra cochain complex to
$V$-valued de Rham complex on $P$ by
\bea \on{desc}: \lb C^\blt_{Lie}(\fg,K;V),\p_{Lie}\rb\to \lb \Omega^\blt(X;V_P),\nabla^\gamma\rb, \quad \alpha\mapsto \alpha(\gamma,\cdots,\gamma)\eea
inducing the cohomology descent map
\bea \on{desc}: H^\blt_{Lie}(\fg,K;V)\to H^\blt (X;V_P).\eea

\subsubsection*{Fedosov connection revisited}
Recall the (formal) Weyl algebras
\bea \cW_{2n}=\bR[[p_i,q^i]] ((\hbar)), \quad 
\cW_{2n}^+=\bR[[p_i,q^i]] [[\hbar]]\eea
with the induced Lie algebra structure such that the Lie bracket is defined by
\bea \lsb f,g\rsb\coloneqq \frac{1}{\hbar} \lsb f,g\rsb_\star=\frac{1}{\hbar}\lb f\star g- g\star f\rb.\eea

Let $\gsp_{2n}$ be the symplectic group of linear transformations preserving the Poisson bivector $\Pi$. It acts on Weyl algebras by inner automorphisms. We can identify the Lie algebra $\sp_{2n}$ of $\gsp_{2n}$ with quadratic polynomials in $\bR[p_i,q^i]$, and $\sp_{2n}$ is a Lie subalgebra of $\cW_{2n}^+$.
The action $\gsp_{2n}\curvearrowright \bR^{2n}$ induces the action $\gsp_{2n}\curvearrowright \cW_{2n}^+$. Hence, $\lb \cW_{2n}^+,\gsp_{2n}\rb$ and $\lb \cW_{2n},\gsp_{2n}\rb$ are Harish-Chandra pairs.

Let $(X,\omega)$ be a symplectic manifold, and $F_{\gsp}(X)$ be the symplectic frame bundle. We have the Weyl bundles
\bea \cW^+_X=F_{\gsp}(X)\times_{\gsp_{2n}} \cW_{2n}^+, \quad 
\cW_X=F_{\gsp}(X) \times_{\gsp_{2n}} \cW_{2n}.\eea
Consider the Harish-Chandra pair
\bea (\ols{\fg},K)=(\fg/Z(\fg),\gsp_{2n}),\eea
where $\fg=\cW_{2n}^+$, and $Z(\fg)=\bR[[\hbar]]$ is the center of $\fg$, $Z(\fg)\cap \sp_{2n}=0$.
Fedosov constructed a flat $(\ols{\fg},K)$-bundle $F_{\gsp}(X)\to X$
and $H^0(X; \cW_X^+)$ gives a \emph{deformation quantization}. 
Choose the trivial $(\ols{\fg},K)$-module $\bR((\hbar))$. Then 
\bea \on{desc}: C^\blt_{\Lie} \lb \ols{\cW^+_{2n}},\sp_{2n}; \bR((\hbar))\rb \to \Omega^\blt_X((\hbar)).\eea 
This is the \textbf{Gelfand-Fuks map}.
Here 
$$
C^\blt_{\Lie} \lb \ols{\cW^+_{2n}},\sp_{2n}; \bR((\hbar))\rb \simeq C^\blt_{Lie}\lb \cW^+_{2n},\sp_{2n}\oplus Z(\cW^+_{2n}); \bR((\hbar))\rb.
$$

\subsubsection*{Characteristic classes}
Let us review the Chern-Weil construction of {characteristic classes} in Lie algebra cohomology.
They will descend to the usual characteristic forms via the Gelfand-Fuks map.

Let $\fg$ be a Lie algebra, and $\fh\subset \fg$ be its Lie subalgebra. Let the projection map
\bea \on{pr}: \fg\to\fh\eea
be the $\fh$-equivariant splitting of the embedding $\fh\subset \fg$. In general $\on{pr}$ is not a Lie algebra homomorphism from $\fg$ to $\fh$. The \emph{failure} of $\on{pr}$ being a Lie algebra homomorphism gives $R\in \on{Hom}\lb \asym^2\fg, \fh\rb$ by
\bea R(\alpha,\beta)=\lsb \on{pr}(\alpha),\on{pr}(\beta)\rsb_\fh-\on{pr}\lsb \alpha,\beta\rsb_\fg, \quad \alpha,\beta\in\fg.\eea
The $\fh$-equivariance of $\on{pr}$ implies that $R\in\on{Hom}_\fh\lb \asym^2(\fg/\fh),\fh\rb$. $R$ is called the \textbf{curvature form}.
Let $\sym^m(\fh^\vee)^\fh$ be $\fh$-invariant polynomials on $\fh$ of homogeneous degree $\on{deg}=m$. Given $P\in \sym^m(\fh^\vee)^\fh$, 
we can
associate a cochain 
$$
P(R)\in C^{2m}_{\Lie}(\fg, \fh; \bR)
$$
by the composition
\bea P(R): \asym^{2m}\fg \xrightarrow{\asym^m R} \sym^m(\fh) \xrightarrow{P}\bR.\eea
It can be checked that $\p_{\Lie} P(R)=0$, defining a cohomology class 
$$
[P(R)] \in H^{2m}(\fg,\fh;\bR)
$$ 
which does
not depend on the choice of $\on{pr}$. Therefore we have the analogue of Chern-Weil characteristic map
\bea \chi:\sym^\blt(\fh^\vee)^\fh \to H^\blt(\fg,\fh;\bR), \quad P\mapsto \chi(P) \coloneqq \lsb P(R)\rsb.\eea

Now we apply the above construction to the case where
\bea \fg= \cW^+_{2n}, \quad \fh=\sp_{2n}\oplus Z(\fg).\eea
Any element $f$ in $\fg= \cW^+_{2n}$ can be uniquely written as a polynomial
$f=f(y^i,\hbar)$, with coordinates $(y^1,\cdots,y^n,y^{n+1},\cdots, y^{2n})=(p_1,\cdots,p_n, q^1, \cdots, q^n)$. 
Define the $\fh$-equivariant projections
\bea \on{pr}_1(f) &=\left. \hf \sum_{i,j}\p_i \p_j f\right|_{y=\hbar=0} y^i y^j \in\sp_{2n},\\
\on{pr}_3(f) &=\left. f\right|_{y=0} \in Z(\fg).\eea
We obtain the corresponding curvature
\bea
R_1 &\coloneqq \lsb \on{pr}_1(-), \on{pr}_1(-)\rsb-\on{pr}_1\lsb-,-\rsb\ \in\on{Hom}(\asym^2 \fg,\sp_{2n}),\\
R_3 &\coloneqq -\on{pr}_3\lsb -,-\rsb\ \in\on{Hom}(\asym^2 \fg,\bR[[\hbar]]).
\eea

\begin{rmk}
A more general case can be considered when we incorporate vector bundles, where $\fg= \cW^+_{2n}+\hbar\lb \mathfrak{gl}\lb \cW^+_{2n}\rb\rb$, $\fh=\sp_{2n}\oplus \hbar\mathfrak{gl}\oplus Z(\fg)$. There the extra projection $\on{pr}_2$ and its corresponding curvature $R_2$ are defined as elements in $\hbar\mathfrak{gl}$ and $\on{Hom}\lb\asym^2, \mathfrak{gl}\rb$, respectively.
It is worthwhile to point out that all the $\on{Hom}$’s here are only $\bR$-linear map, but not
$\bR[[\hbar]]$-linear, although $\fg$ is a $\bR[[\hbar]]$-module.
\end{rmk}

We now define the \textbf{$\widehat{A}$-genus}
\bea\widehat{A}(\sp_{2n})\coloneqq \lsb \on{det}\lb \frac{R_1/2}{\sinh{(R_1/2)}}\rb^{\hf}\rsb \in H^\blt(\fg,\fh;\bR).\eea
Under the descent map $\on{desc}: H^\blt(\fg,\fh;\bR((\hbar)))\to H^\blt(X)((\hbar))$ via the Fedosov connection, it can be shown that
\bea \on{desc}\lb \widehat{A}(\sp_{2n})\rb &= \widehat{A}(X),\\
\on{desc}\lb R_3\rb &= \omega_\hbar- \hbar\omega.
\eea

\subsubsection*{Universal trace map}
Recall that using $\Omega^\blt_{S^1}\to \bR^{2n}$, we have obtained
\bea\Tr= \int_{BV}\circ \lan-\ran_{free}: CC^{per}_{-\blt}(\cW_{2n})\to \bK\coloneqq\bR((\hbar))[u,u^{-1}].\eea

Let us write 
\bea \Tr\in \on{Hom}_{\bK}\lb CC^{per}_{-\blt}(\cW_{2n}),\bK\rb.\eea
This is a $\lb {\cW^+_{2n}},\gsp_{2n}\rb$-module. Via the flat $\lb {\cW^+_{2n}},\gsp_{2n}\rb$-bundle $F_{\gsp}(X)\to X$, we obtain the associated bundle
\bea E^{per}\coloneqq F_{\gsp}(X) \times_{\gsp_{2n}}
\on{Hom}_{\bK}\lb CC^{per}_{-\blt}(\cW_{2n}),\bK\rb\eea
with induced flat connection $\nabla^\gamma$.

Recall the Weyl bundle $\cW(X)=F_{\gsp}(X) \times_{\gsp_{2n}} \cW_{2n}$ with flat connection $\nabla^\gamma$.
We would like to glue $\Tr$ on $X$. Let us denote $\delta$ for the differential on $\on{Hom}_{\bK}\lb CC^{per}_{-\blt}(\cW_{2n}),\bK\rb$ induced from $b+uB$. So \bea \delta \Tr=\Tr \lb (b+uB)(-)\rb=0.\eea

We can view $\Tr$ as defining an element in
\bea C^0_{Lie}\lb \fg,\fh; \on{Hom}_{\bK}\lb CC^{per}_{-\blt}(\cW_{2n}),\bK\rb\rb,\eea
where we take 
\bea\fg= \cW^+_{2n}/Z(\cW^+_{2n}), \quad \fh=\sp_{2n}.\eea
However, $\Tr$ is NOT $\fg$-invariant, i.e. $\p_{\Lie}\Tr\neq 0$. In other words, $\Tr$ is NOT a map of $\lb\fg,\gsp_{2n}\rb$-module. So $\Tr$ can not be glued directly. 

It is observed that $\p_{\Lie}\Tr=\delta(-)$. 
It turns out that we have a canonical way to lift $\Tr$ to
\bea \widehat{\Tr}\in C^\blt_{Lie}\lb \fg,\fh; \on{Hom}_{\bK}\lb CC^{per}_{-\blt}(\cW_{2n}),\bK\rb\rb\eea
such that
\bea\widehat{\Tr}=\Tr+ \text{terms in } C^{>0}_{Lie}\lb \fg,\fh; \on{Hom}_{\bK}\lb CC^{per}_{-\blt}(\cW_{2n}),\bK\rb\rb\eea
and satisfying the coupled cocycle condition
\bea \lb\p_{Lie}+\delta\rb \widehat{\Tr}=0.\eea
$\widehat{\Tr}$ is called the \textbf{universal trace map}. Let us insert $1\in \cW_{2n}$, then $\widehat{\Tr}(1)$ is $\p_{Lie}$-closed, which defines the \textbf{universal index}, $\lsb \widehat{\Tr}(1)\rsb\in H^\blt_{Lie}(\fg,\fh;\bK)$.

\begin{thm}[Universal algebraic index theorem]
\bea\lsb \widehat{\Tr}(1)\rsb=u^ne^{-R_3/(u\hbar)}\widehat{A}(\sp_{2n})_u,\eea
where for $A=\sum_{p \text{ even}}A_p$, $A_P\in H^p(\fg,\fh;\bK)$, \bea A_u=\sum_p u^{-p/2}A_p.\eea
\end{thm}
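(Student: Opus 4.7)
The strategy is to compute $[\widehat{\Tr}(1)]\in H^\blt_{\Lie}(\fg,\fh;\bK)$ by combining a direct Feynman-diagram expansion of the BV partition function on $S^1$ with a uniqueness argument based on the Chern-Weil structure of this relative Lie algebra cohomology ring. The idea is to realize $\widehat{\Tr}(1)$ as the zero-point function in the universal TQM whose interaction is the tautological $\cW^+_{2n}/Z(\cW^+_{2n})$-valued $1$-form $\gamma$ playing the role of the Fedosov connection, and then to read off the answer by separating the classical (tree) contribution, the one-loop contribution, and the BV-integration normalization.

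First I would make the lift $\widehat{\Tr}$ explicit as the BV partition function associated to the interaction $I_\gamma = \int_{S^1}\gamma(\varphi)$. Since $[\gamma,\gamma]_\star$ differs from the Fedosov bracket by a central piece, the effective QME for $I_\gamma$ holds only up to a $Z(\cW^+_{2n})$-valued anomaly that matches the curvature form $R_3$. Running the HRG flow $e^{\hbar\p_{P^\infty_0}}$ on $e^{(I_0+I_\gamma)/\hbar}$ and restricting to zero modes produces a cochain in $\lb C^\blt_{\Lie}(\fg,\fh;\bK),\p_{\Lie}+\delta\rb$ whose Lie-algebra degree-zero component recovers $\Tr$ from Section~\ref{sec:tqm1}; the coupled cocycle condition $(\p_{\Lie}+\delta)\widehat{\Tr}=0$ then follows from the Fedosov / Maurer-Cartan equation satisfied by $\gamma$.

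Next I would expand $\widehat{\Tr}(1) = \int_{BV}\langle 1\otimes e^{\gamma/\hbar}\rangle_{free}$ into connected Feynman graphs on $S^1$ with propagator $P^{S^1}(\theta_1,\theta_2)=\theta_1-\theta_2-\hf$ and vertices coming from the jets of $\gamma$. Tree-level diagrams localize onto the central part of $\gamma\star\gamma$ and exponentiate to the factor $e^{-R_3/(u\hbar)}$, where the denominator $u$ appears because the $S^1$-equivariant BV integration pairs the degree-two class $R_3$ with $u^{-1}$. The one-loop contribution is a sum over necklace diagrams whose integrals over the ordered configuration space of $S^1$ are Bernoulli-number generating integrals; following the Kontsevich-Shoikhet evaluation these exponentiate to $\on{det}^{1/2}\bigl(\tfrac{R_1/2}{\sinh(R_1/2)}\bigr) = \widehat{A}(\sp_{2n})$, and the $u$-twist $\widehat{A}(\sp_{2n})_u$ arises from tracking the $u$-grading in the periodic cyclic complex. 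The overall $u^n$ prefactor is the normalization of the top Berezin form in $\int_{BV}=\left.\lb u^n e^{\hbar\iota_\Pi/u}(-)\rb\right|_{p=q=0}$.

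The hardest step is controlling graphs of loop order $\geq 2$. I would address this through a cohomological uniqueness argument: $H^\blt_{\Lie}(\fg,\fh;\bK)$ is generated as a ring by the $\sp_{2n}$-Chern-Weil classes in $R_1$ together with the central class $R_3$, so any $(\p_{\Lie}+\delta)$-cocycle produced by the universal TQM must lie in this subring. Combined with $\sp_{2n}$-invariance (which kills odd-length cycles), this forces every higher-loop contribution to be either $(\p_{\Lie}+\delta)$-exact or a multiplicative piece of the one-loop $\widehat{A}$-genus already obtained. The global normalization is fixed by specializing to the flat case $X=\bR^{2n}$ and comparing with the Feigin-Felder-Shoikhet trace formula for $\cW_{2n}$, which pins down all overall constants. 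Assembling the tree, one-loop, and $u^n$ factors then yields $[\widehat{\Tr}(1)] = u^n e^{-R_3/(u\hbar)}\widehat{A}(\sp_{2n})_u$.
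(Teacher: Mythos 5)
Your setup and your tree/one-loop analysis track the paper's step (1): the paper likewise records that the Feynman diagram computation yields $\widehat{\Tr}(1)=u^n e^{-R_3/(u\hbar)}\lb \widehat{A}(\sp_{2n})_u+\cO(\hbar)\rb$, with the $\widehat{A}$-genus coming from the one-loop graphs and the central curvature $R_3$ from the tree part. The genuine gap is in your treatment of loop order $\geq 2$. Knowing that $H^\blt_{\Lie}(\fg,\fh;\bK)$ is generated by the Chern--Weil classes of $R_1$ together with $R_3$ does not determine the class $\lsb\widehat{\Tr}(1)\rsb$: a two-loop graph could contribute, say, $\hbar$ times a nontrivial Pontryagin class, which lies in that subring, is not $(\p_{Lie}+\delta)$-exact, and is in no sense ``a multiplicative piece of the $\widehat{A}$-genus already obtained.'' Nor does your normalization step close this hole: specializing to the flat case kills every curvature class, so comparison with the Feigin--Felder--Shoikhet formula only pins down the coefficient of $1$, not the $\hbar$-dependent coefficients multiplying the higher Chern--Weil generators.

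The paper closes exactly this gap with a different tool: the Gauss--Manin--Getzler connection $\nabla_{\hbar\p_\hbar}$ on $CC^{per}_{-\blt}(\cW_{2n})$, obtained by viewing the Weyl algebra as a family of algebras over $\hbar$. Its step (2) shows that $\nabla_{\hbar\p_\hbar}\lb e^{R_3/(u\hbar)}\widehat{\Tr}(1)\rb$ is $\p_{Lie}$-exact, i.e.\ the normalized class is rigid in $\hbar$; combined with the one-loop leading term, this rigidity forces all $\cO(\hbar)$ corrections to vanish in cohomology, which is how the paper arrives at $\lsb\widehat{\Tr}(1)\rsb=u^n e^{-R_3/(u\hbar)}\widehat{A}(\sp_{2n})_u$. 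If you want to avoid the Gauss--Manin argument you would need some other input of this rigidity type, or an actual vanishing theorem for the higher-loop graph integrals on the compactified configuration spaces of $S^1$; a statement about the generators of the cohomology ring alone cannot do the job.
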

This theorem is developed in the works of Feigin-Tsygan \cite{feigin1989riemann}, Feigin-Felder-Shoikhet \cite{FFS}, Bressler-Nest-Tsygan \cite{bressler2002riemann}, and many others. This can be naturally generalized to the bundle case \cite{GLX-Effective} (as well as an explicit formula as a byproduct).

Now we apply the Gelfand-Fuks (descent) map on $\widehat{\Tr}$, such that
\bea\begin{tikzcd}
    C^\blt_{Lie}\lb \fg,\fh; \on{Hom}_{\bK}\lb CC^{per}_{-\blt}(\cW_{2n}),\bK\rb\rb \ar[d, "\on{desc}"]\\
    \Omega^\blt\lb X, \on{Hom}_{\bK}\lb CC^{per}_{-\blt}(\cW(X)),\bK\rb\rb
\end{tikzcd}\eea
Let $\cW_D(X)$ be the space of flat sections of $\cW(X)$
that gives a \emph{deformation quantization}. Then 
\bea
\on{desc}(\widehat{\Tr}): CC^{per}_{-\blt}(\cW_D(X))\to \Omega^\blt(X)((\hbar))[u,u^{-1}], \quad b+uB\mapsto d_X.
\eea
In particular, it defines a \emph{trace map} in deformation quantization by
\bea
f\in \cW_D(X)\mapsto \int_X \on{desc}(\widehat{\Tr})(f) \in \bR((\hbar)).
\eea

We can show that $\int_X \on{desc}(\widehat{\Tr})(f)$ does not depend on $u$. By the \emph{universal algebraic index theorem}, we have
\bea \int_X \on{desc}(\widehat{\Tr})(1)=\int_X e^{-\omega_\hbar/\hbar}\widehat{A}(X).\eea
This gives the {algebraic index theorem}.

\subsubsection*{Construction of universal trace map $\widehat{\Tr}$}

We have the following relations \cite{GLX-Effective}.
\bea\begin{tikzcd}
    \boxed{\text{background symmetry}} \ar[rr, leftrightarrow] \ar[dr,leftrightarrow] & & \boxed{\text{connection form}} \ar[dl,leftrightarrow]\\
    & \boxed{\text{interaction}} &
\end{tikzcd}\eea

Let $\Theta: \fg\to \cW^+_{2n}/ Z(\cW^+_{2n})=\fg$
be the canonical identity map.
For each $f\in \cW^+_{2n}/ Z(\cW^+_{2n})$, we have defined the local functional on $\cE=\Omega^\blt(S^1)\otimes \bR^{2n}$ by
\bea I_f(\varphi)=\int_{S^1} f(\varphi), \quad \varphi\in\cE.\eea
Then $\Theta$ gives a map
\bea I_\Theta: \fg\to \cO_{loc}(\cE), \quad f\mapsto I_{\Theta(f)}.\eea

We can view this map as 
\bea I_\Theta\in C^1(\fg,\cO_{loc}(\cE))=\fg^\vee\otimes \cO_{loc}(\cE).\eea
This allows constructing 
$\widehat{\Tr}\in C^\blt_{Lie}\lb \fg,\fh; \on{Hom}_{\bK}\lb CC^{per}_{-\blt}(\cW_{2n}),\bK\rb\rb$ explicitly \cite{GLX-Effective} by
\bea
\widehat{\Tr}\lb f_0\otimes f_1\otimes \cdots\otimes f_m\rb &\coloneqq
\int_{BV} \exp{\lb\hbar P^\infty_0\rb}
\lb \cO_{f_0,f_1,\cdots,f_m} 
e^{\frac{1}{\hbar}I_\Theta}\rb\ \in C^\blt(\fg,\fh;\bK), \quad f_i\in \cW_{2n}\\
&``=\int_{BV}\int_{\Im d^\ast\subset\cE} e^{-\frac{1}{2\hbar} \int_{S^1} \lan\varphi,d\varphi\ran+\frac{1}{\hbar}I_\Theta} \cO_{f_0,f_1,\cdots,f_m}\,''.
\eea

\subsubsection*{Computation of index}
The Weyl algebra $\cW_{2n}$ can be viewed as a family of associative algebras parameterized by $\hbar$.
This leads to the {Gauss-Manin-Getzler  connection} \cite{Getzler} $\nabla_{\hbar \p_\hbar}$ on  $CC^{per}_{-\blt}(\cW_{2n})$.
The calculation of index consists of the following steps \cite{GLX-Effective}:
\bi[(1)]
\item \emph{Feynman diagram computation} implies
\bea \widehat{\Tr}(1)=u^n e^{-R_3/(u\hbar)}\lb\underbrace{\widehat{A}(\sp_{2n})_u}_{1-\text{loop computation}} +\cO(\hbar)\rb.\eea
\item Computation of \emph{Gauss-Manin-Getzler connection} shows 
$$
\nabla_{\hbar \p_\hbar} \lb e^{R_3/(u\hbar)}\widehat{
\Tr}(1)\rb\  \text{is $\p_{Lie}$-exact}.
$$
\item Combining (1) and (2), we find 
\bea \lsb \widehat{
\Tr}(1)\rsb= \lsb u^n e^{-R_3/(u\hbar)} \widehat{A}(\sp_{2n})_u\rsb\ \in H^\blt(\fg,\fh;\bK).\eea
\ei

\section{Two-dimensional Chiral Theory}
\label{sec:2d1}

We have discussed the first-order formalism of topological QM, where the fields are differential forms $\Omega^\blt(S^1,V)$ on $S^1$ valued in the vector bundle $V$ with the de Rham differential $d$. Here $d$ being part of the BRST operator implies that ``translation is homologically trivial.'' This defines a topological theory.

We will now consider 2d chiral models where the fields are differential forms $\Omega^{0,\blt}(\Sigma,h)$ with the Dolbeault differential $\ols{\p}$. The Dolbeault differential being part of the BRST operator implies that ``anti-holomorphic translation is homologically trivial,'' which in turn defines a chiral (or holomorphic) theory.

In topological QM, the theory is \emph{UV finite}. The general consideration in Section \ref{sec:UV} applies and we find that the renormalized QME is traded to Moyal commutator and Fedosov equation. We will see that 2d chiral theory is also \emph{UV finite} and we have a similar geometric result for QME \cite{LS-VOA}.

\subsection{Vertex Algebra}

As illustrated by the picture below, in 1d topological theory we have an associative algebra defined by the fusion of two operators $a\cdot b$; in 2d chiral theory we have (chiral) vertex algebra defined by $A_{(n)}B$. The algebras are found when one operator approaches another either on a line (for 1d) or on a plane (for 2d).
\bea 
\tikzset{every picture/.style={line width=0.75pt}} 
\begin{tikzpicture}[x=0.75pt,y=0.75pt,yscale=-1,xscale=1]

\draw    (35.58,72.83) -- (176.08,72.83) ;
\draw  [color={rgb, 255:red, 74; green, 144; blue, 226 }  ,draw opacity=1 ][fill={rgb, 255:red, 74; green, 144; blue, 226 }  ,fill opacity=1 ] (63.08,72.33) .. controls (63.08,70.95) and (64.2,69.83) .. (65.58,69.83) .. controls (66.96,69.83) and (68.08,70.95) .. (68.08,72.33) .. controls (68.08,73.71) and (66.96,74.83) .. (65.58,74.83) .. controls (64.2,74.83) and (63.08,73.71) .. (63.08,72.33) -- cycle ;
\draw  [color={rgb, 255:red, 74; green, 144; blue, 226 }  ,draw opacity=1 ][fill={rgb, 255:red, 74; green, 144; blue, 226 }  ,fill opacity=1 ] (132.58,72.33) .. controls (132.58,70.95) and (133.7,69.83) .. (135.08,69.83) .. controls (136.46,69.83) and (137.58,70.95) .. (137.58,72.33) .. controls (137.58,73.71) and (136.46,74.83) .. (135.08,74.83) .. controls (133.7,74.83) and (132.58,73.71) .. (132.58,72.33) -- cycle ;
\draw [color={rgb, 255:red, 144; green, 19; blue, 254 }  ,draw opacity=1 ]   (65.75,59.83) -- (132.58,59.83) ;
\draw [shift={(135.58,59.83)}, rotate = 180] [fill={rgb, 255:red, 144; green, 19; blue, 254 }  ,fill opacity=1 ][line width=0.08]  [draw opacity=0] (10.72,-5.15) -- (0,0) -- (10.72,5.15) -- (7.12,0) -- cycle    ;
\draw  [line width=1.5]  (280.25,17.04) -- (387,17.04) -- (341.25,89) -- (234.5,89) -- cycle ;
\draw  [color={rgb, 255:red, 74; green, 144; blue, 226 }  ,draw opacity=1 ][fill={rgb, 255:red, 74; green, 144; blue, 226 }  ,fill opacity=1 ] (292.25,32) .. controls (292.25,30.62) and (293.37,29.5) .. (294.75,29.5) .. controls (296.13,29.5) and (297.25,30.62) .. (297.25,32) .. controls (297.25,33.38) and (296.13,34.5) .. (294.75,34.5) .. controls (293.37,34.5) and (292.25,33.38) .. (292.25,32) -- cycle ;
\draw  [color={rgb, 255:red, 74; green, 144; blue, 226 }  ,draw opacity=1 ][fill={rgb, 255:red, 74; green, 144; blue, 226 }  ,fill opacity=1 ] (327.25,62.5) .. controls (327.25,61.12) and (328.37,60) .. (329.75,60) .. controls (331.13,60) and (332.25,61.12) .. (332.25,62.5) .. controls (332.25,63.88) and (331.13,65) .. (329.75,65) .. controls (328.37,65) and (327.25,63.88) .. (327.25,62.5) -- cycle ;
\draw [color={rgb, 255:red, 144; green, 19; blue, 254 }  ,draw opacity=1 ]   (302.5,30) .. controls (324.7,18.9) and (329.79,38.6) .. (330.43,51.12) ;
\draw [shift={(330.5,54)}, rotate = 270] [fill={rgb, 255:red, 144; green, 19; blue, 254 }  ,fill opacity=1 ][line width=0.08]  [draw opacity=0] (10.72,-5.15) -- (0,0) -- (10.72,5.15) -- (7.12,0) -- cycle    ;

\draw (60.58,75.73) node [anchor=north west][inner sep=0.75pt]    {$a$};
\draw (130.58,76.73) node [anchor=north west][inner sep=0.75pt]    {$b$};
\draw (264.08,37.4) node [anchor=north west][inner sep=0.75pt]    {$A( z)$};
\draw (294.25,66.4) node [anchor=north west][inner sep=0.75pt]    {$B( w)$};
\draw (82.58,113.73) node [anchor=north west][inner sep=0.75pt]  [color={rgb, 255:red, 144; green, 19; blue, 254 }  ,opacity=1 ]  {$a\cdot b$};
\draw (215.75,97.4) node [anchor=north west][inner sep=0.75pt]  [color={rgb, 255:red, 144; green, 19; blue, 254 }  ,opacity=1 ]  {$A( z) B( w) \sim \sum _{n}\frac{( A_{( n)} B)( w)}{( z-w)^{n+1}}$};
\draw (359,20.4) node [anchor=north west][inner sep=0.75pt]    {$\bC$};
\end{tikzpicture}
\eea
On a plane, the ``product'' (binary operation) depends on the location holomorphically, leading to infinitely many binary operations.

\begin{defn}
A \textbf{vertex algebra} is a collection of data:
\begin{itemize}
    \item (space of states) a $\bZ$-graded superspace $\cV=\cV_{even}\oplus \cV_{odd}$,
    \item (vacuum) a vector $\ket{0}\in \cV_{even}$,
    \item (translation operator) an even linear map $T: \cV\to\cV$,
    \item (state-field correspondence) an even linear operation (vertex operation) 
    \bea Y(-,z):\cV\to \on{End}\cV[[z,z^{-1}]], \quad A\mapsto Y(A,z)=\sum_{n\in\bZ} A_{(n)}z^{-n-1}\eea
    such that $Y(A,z)B\in \cV((z))$ for any $A,B\in \cV$. 
\end{itemize}
\end{defn}

The data are required to satisfy the following axioms:
\begin{itemize}
    \item (vacuum axiom) $Y(\ket{0},z)=1_{\cV}$, i.e. for any $A\in\cV$, 
    \bea Y(A,z)\ket{0}\in \cV[[z]] \quad \text{and} \quad \lim_{z\to0} Y(A,z)\ket{0}=A,\eea
    \item (translation axiom) $T\ket{0}=0$, i.e. for any $A\in\cV$, 
    \bea \lsb T,Y(A,z)\rsb=\p_z Y(A,z),\eea
    \item (locality axiom) all $\lcb Y(A,z)\rcb_{a\in\cV}$ are mutually local.
\end{itemize}

Roughly speaking, mutual locality implies for any $A,B\in\cV$, we can expand as
\bea Y(A,z)Y(B,w)=\sum_{n\in\bZ}\frac{Y(A_{(n)}\cdot B,w)}{(z-w)^{n+1}}.\eea
This is called the \textbf{operator product expansion (OPE)}. $\lcb A_{(n)}\cdot B\rcb$ from the expansion coefficient can be viewed as defining an infinite tower of products. For simplicity, we will write
\bea A(z)\equiv Y(A,z) \quad \text{for }A\in\cV.\eea
Then the OPE can be written as
\bea A(z)B(w)=\sum_{n\in\bZ}\frac{A_{(n)}\cdot B(w)}{(z-w)^{n+1}}.\eea
We also write, whenever only the \emph{singular} parts matter, 
\bea A(z)B(w)\sim \sum_{n\geq 0}
\frac{A_{(n)}\cdot B(w)}{(z-w)^{n+1}}.\eea

Given a vertex algebra, we can define its \textbf{modes Lie algebra} 
\bea \oint \cV\coloneqq \on{Span}_{\bC}\lcb \oint dz\ z^k A(z)=A_{(k)}\rcb_{A\in\cV,\ k\in\bZ}.\eea
The Lie bracket of contour integrals is determined by the OPE,
\bea \lsb \oint dz\ z^m A(z),\ 
\oint dw\ w^n B(w)\rsb
=\oint dw\ w^n \oint_w dz\ z^m \sum_{j\in\bZ} \frac{A_{(j)}\cdot B(w)}{(z-w)^{j+1}},\eea
where only the singular part matters in the integration. The Lie bracket is represented diagrammatically as follows.
\bea 
\tikzset{every picture/.style={line width=0.75pt}}         
\begin{tikzpicture}[x=0.75pt,y=0.75pt,yscale=-1,xscale=1]

\draw   (49.33,109) .. controls (49.33,72.55) and (78.88,43) .. (115.33,43) .. controls (151.78,43) and (181.33,72.55) .. (181.33,109) .. controls (181.33,145.45) and (151.78,175) .. (115.33,175) .. controls (78.88,175) and (49.33,145.45) .. (49.33,109) -- cycle ;
\draw   (74.61,109) .. controls (74.61,86.51) and (92.84,68.28) .. (115.33,68.28) .. controls (137.82,68.28) and (156.06,86.51) .. (156.06,109) .. controls (156.06,131.49) and (137.82,149.72) .. (115.33,149.72) .. controls (92.84,149.72) and (74.61,131.49) .. (74.61,109) -- cycle ;
\draw  [color={rgb, 255:red, 74; green, 144; blue, 226 }  ,draw opacity=1 ][fill={rgb, 255:red, 74; green, 144; blue, 226 }  ,fill opacity=1 ] (115.78,48.51) -- (104.68,43.53) -- (115.32,37.64) -- (110.11,43.3) -- cycle ;
\draw   (250.75,110.67) .. controls (250.75,98.56) and (260.56,88.75) .. (272.67,88.75) .. controls (284.77,88.75) and (294.58,98.56) .. (294.58,110.67) .. controls (294.58,122.77) and (284.77,132.58) .. (272.67,132.58) .. controls (260.56,132.58) and (250.75,122.77) .. (250.75,110.67) -- cycle ;
\draw   (231.94,110.67) .. controls (231.94,88.18) and (250.18,69.94) .. (272.67,69.94) .. controls (295.16,69.94) and (313.39,88.18) .. (313.39,110.67) .. controls (313.39,133.16) and (295.16,151.39) .. (272.67,151.39) .. controls (250.18,151.39) and (231.94,133.16) .. (231.94,110.67) -- cycle ;
\draw   (408.48,76.1) .. controls (408.48,67.25) and (415.65,60.08) .. (424.5,60.08) .. controls (433.34,60.08) and (440.51,67.25) .. (440.51,76.1) .. controls (440.51,84.94) and (433.34,92.11) .. (424.5,92.11) .. controls (415.65,92.11) and (408.48,84.94) .. (408.48,76.1) -- cycle ;
\draw   (363.94,112) .. controls (363.94,89.51) and (382.18,71.28) .. (404.67,71.28) .. controls (427.16,71.28) and (445.39,89.51) .. (445.39,112) .. controls (445.39,134.49) and (427.16,152.72) .. (404.67,152.72) .. controls (382.18,152.72) and (363.94,134.49) .. (363.94,112) -- cycle ;
\draw  [fill={rgb, 255:red, 0; green, 0; blue, 0 }  ,fill opacity=1 ] (422.23,76.1) .. controls (422.23,74.84) and (423.24,73.83) .. (424.5,73.83) .. controls (425.75,73.83) and (426.76,74.84) .. (426.76,76.1) .. controls (426.76,77.35) and (425.75,78.36) .. (424.5,78.36) .. controls (423.24,78.36) and (422.23,77.35) .. (422.23,76.1) -- cycle ;
\draw  [color={rgb, 255:red, 74; green, 144; blue, 226 }  ,draw opacity=1 ][fill={rgb, 255:red, 74; green, 144; blue, 226 }  ,fill opacity=1 ] (123.26,74.24) -- (112.47,68.61) -- (123.45,63.36) -- (117.91,68.7) -- cycle ;
\draw  [color={rgb, 255:red, 74; green, 144; blue, 226 }  ,draw opacity=1 ][fill={rgb, 255:red, 74; green, 144; blue, 226 }  ,fill opacity=1 ] (279.78,74.71) -- (268.68,69.73) -- (279.32,63.84) -- (274.11,69.5) -- cycle ;
\draw  [color={rgb, 255:red, 74; green, 144; blue, 226 }  ,draw opacity=1 ][fill={rgb, 255:red, 74; green, 144; blue, 226 }  ,fill opacity=1 ] (432.75,66.32) -- (422.29,60.12) -- (433.53,55.46) -- (427.71,60.5) -- cycle ;
\draw  [color={rgb, 255:red, 74; green, 144; blue, 226 }  ,draw opacity=1 ][fill={rgb, 255:red, 74; green, 144; blue, 226 }  ,fill opacity=1 ] (440.35,113.98) -- (444.86,102.68) -- (451.19,113.07) -- (445.31,108.1) -- cycle ;
\draw  [color={rgb, 255:red, 74; green, 144; blue, 226 }  ,draw opacity=1 ][fill={rgb, 255:red, 74; green, 144; blue, 226 }  ,fill opacity=1 ] (288.95,116.38) -- (293.46,105.08) -- (299.79,115.47) -- (293.91,110.5) -- cycle ;

\draw (118.53,26) node [anchor=north west][inner sep=0.75pt]    {$A$};
\draw (134.27,56.6) node [anchor=north west][inner sep=0.75pt]    {$B$};
\draw (291.87,116.68) node [anchor=north west][inner sep=0.75pt]    {$A$};
\draw (285.2,55.87) node [anchor=north west][inner sep=0.75pt]    {$B$};
\draw (202,102.4) node [anchor=north west][inner sep=0.75pt]    {$-$};
\draw (330,101.07) node [anchor=north west][inner sep=0.75pt]    {$=$};
\draw (440.67,54.73) node [anchor=north west][inner sep=0.75pt]    {$A$};
\draw (447.39,115.4) node [anchor=north west][inner sep=0.75pt]    {$B$};
\end{tikzpicture}
\eea

We refer to \cite{Kac, Frenkel-Ben} for a systematic discussion on vertex algebras. 

\begin{eg}[$\beta\gamma$-system]
The $\beta\gamma$-system is generated by two \emph{bosonic} fields $\beta(z), \gamma(z)$ with OPE
\bea \beta(z)\gamma(w)\sim \frac{\hbar}{z-w}\sim -\gamma(z)\beta(w).\eea
The vertex algebra $\cV$ is identified with the differential ring
\bea\cV= \nord{\bC[[\p^i\beta,\p^i\gamma]]} [[\hbar]],\eea
where $\nord{\ }$ is the \emph{normal ordering operator}. The general OPE is obtained via \textbf{Wick contractions}. For example,
\bea \nord{\beta(z)\gamma(z)} \nord{\beta(w)\gamma(w)}
&= \underbrace{\frac{\hbar}{z-w} \nord{\gamma(z)\beta(w)}- 
\frac{\hbar}{z-w} \nord{\beta(z)\gamma(w)}}_{\text{1 contraction}}
-\underbrace{\lb\frac{\hbar}{z-w}\rb^2}_{\text{2 contractions}}\\
&=\sum_{k\geq 0}\frac{\hbar}{z-w}\frac{(z-w)^k}{k!} \nord{\p^k\gamma(w)\beta(w)-\p^k\beta(w)\gamma(w)} -\frac{\hbar^2}{(z-w)^2}.\eea
\end{eg}

\begin{eg}[$bc$-system]
The $bc$-system is generated by two \emph{fermionic} fields $b(z), c(z)$ with OPE
\bea b(z)c(w)\sim \frac{\hbar}{z-w}\sim c(z)b(w).\eea
The vertex algebra $\cV$ is identified with the differential ring
\bea\cV= \nord{\bC[[\p^i b,\p^i c]]} [[\hbar]].\eea
The general OPE is generated in the similar way as the $\beta\gamma$-system (but we need to take care of the signs). 
\end{eg}

More generally, we can define a general $\beta\gamma-bc$ system by considering a $\bZ_2$-graded space 
\bea h=h_0\oplus h_1\eea 
with an even symplectic pairing
\bea \lan -,-\ran: \asym^2 h\to \bC.\eea

Let $\lcb a_i\rcb$ be a basis of $h$, then we can define a vertex algebra $\cV_h$ by
\bea\cV_h= \nord{\bC[[\p^k a_i]]} [[\hbar]].\eea
The OPE is generated by
\bea a_i(z)a_j(w)\sim \frac{\hbar}{z-w} \lan a_i,a_j\ran.\eea
In particular, $h_0$ represents the copies of $\beta\gamma$-system; $h_1$ represents the copies of $bc$-system. 

\subsection{Chiral Deformation of $\beta\gamma-bc$ Systems}
We consider the following data:
\begin{itemize}
    \item an elliptic curve $E$ (topologically a torus $T^2$) with linear coordinate $z$ such that $z\sim z+1\sim z+\tau$,
    \item a graded symplectic space $h=h_0\oplus h_1$ with an even symplectic pairing $\lan -,-\ran$.
\end{itemize}
This defines a field theory in BV formalism where the space of fields is
$$
\cE=\Omega^{0,\blt}(E)\otimes h
$$
with $(-1)$-symplectic pair by 
$$
\omega(\varphi_1,\varphi_2)=\int_E dz  \lan \varphi_1,\varphi_2\ran, \quad \varphi_i\in\cE. 
$$
Note that $\omega$ has $\on{deg}=-1$ since we need exactly one $\ols{dz}$ from $\varphi_1,\varphi_2$ to be integrated.

The free theory is given by 
\bea \hf\int_E dz\lan \varphi
,\pb \varphi\ran, \quad \varphi\in\cE.\eea
The local quantum observables form exactly $\beta\gamma-bc$ system. The propagator is given by the {Szeg\"{o} kernel}
\bea \pb^{-1}\sim \frac{1}{z-w}+\text{regular}.\eea

We would like to consider a general interacting theory by turning on \textbf{chiral deformations} of the form
\bea \int \cL\lb \varphi,\p_z\varphi,\p_z^2\varphi,\cdots\rb\eea
which involves only \emph{holomorphic} derivatives. This is related to the vertex algebra
\bea \cV_{h^\vee}= \bC[[\p^i h^\vee]] [[\hbar]]\eea
as follows. Define a map
\bea I: \cV_{h^\vee}\to \sO_{loc}(\cE), \quad \gamma\mapsto I_\gamma.\eea

Explicitly, if $\gamma=\sum \p^{k_1}a_1\cdots \p^{k_m}a_m$, then \bea I_\gamma(\varphi)=i\int_E dz \sum\pm \p^{k_1}_z a_1(\varphi)\cdots
\p^{k_m}_z a_m(\varphi).\eea
Here $a_i\in h^\vee$ and $a_i(\varphi)\in \omega^{0,\blt}(E)$.

\begin{thm}[\cite{LS-VOA}]
For any $\gamma\in \cV_{h^\vee}$, the chiral deformed theory
\bea \hf \int_E dz\lan \varphi,\pb \varphi\ran +I_\gamma(\varphi)\eea
is \textbf{UV finite} in the sense that the limit
\bea e^{\frac{1}{\hbar}I_\gamma[L]}=\lim_{\varepsilon\to 0} e^{\hbar \partial_{P^L_\varepsilon}} e^{\frac{1}{\hbar}I_\gamma}\quad \text{exists.}\eea
\end{thm}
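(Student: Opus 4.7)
The plan is to prove, graph by graph in the Wick expansion of $e^{\hbar \partial_{P^L_\varepsilon}} e^{I_\gamma/\hbar}$, that each Feynman amplitude admits a finite $\varepsilon \to 0$ limit. The core point is that every short-distance singularity of the propagator on the elliptic curve is purely holomorphic, while the $\beta\gamma$--$bc$ vertex data involves only holomorphic derivatives of the fields, so all potential ultraviolet poles in configuration space are of the form $(z_i - z_j)^{-k}$, which is an integrable or principal-value singularity in one complex dimension.

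First I would isolate the singular part of the propagator. Using $P^L_\varepsilon(z,w) = \int_\varepsilon^L dt\,(\bar\partial^\dagger \otimes 1) K_t(z,w)$ and the small-$t$ asymptotics of the Dolbeault heat kernel on $E$, one decomposes $P^L_\varepsilon = P^{\mathrm{sing}} + P^{\mathrm{reg}}_\varepsilon$, where $P^{\mathrm{sing}}$ is an $\varepsilon$-independent kernel whose diagonal singularity is the Szeg\"o pole $\frac{1}{z-w}$ times a universal antiholomorphic factor, and $P^{\mathrm{reg}}_\varepsilon$ is smooth on $E \times E$ with a smooth limit as $\varepsilon \to 0$. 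This reduces UV finiteness to convergence of configuration-space integrals whose only obstruction is holomorphic-pole collisions at coincident vertices.

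Second, for each graph $\Gamma$ the amplitude is an integral over $E^{V(\Gamma)}$ of a product of propagator factors, possibly differentiated by holomorphic derivatives $\partial_z^{k}$ coming from $\gamma \in \cV_{h^\vee}$, paired against $(0,\bullet)$-form field data valued in $h$. Near a subdiagonal where vertices $i,j$ collide, the singular part of the integrand is a pure holomorphic pole $(z_i - z_j)^{-k_{ij}}$ times a smooth antiholomorphic top form; crucially, there are no antiholomorphic denominators. The decisive elementary fact is that
\[
\int_{|z-w|<1}\frac{f(z,\ols z)\, dz \wedge d\ols z}{(z-w)^k}
\]
converges absolutely for $k=1$ and admits a canonical Hadamard/principal-value regularization for $k \geq 2$, obtained by integration by parts against $\bar\partial$. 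Applied pairwise, this gives convergence of the $\varepsilon \to 0$ limit of any graph amplitude away from higher-codimension diagonals.

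The main obstacle is controlling multi-vertex collisions: when three or more vertices simultaneously approach a point, distinct holomorphic diagonals intersect in higher codimension and one must ensure that the nested regularizations do not pile up logarithmic divergences. I would handle this by induction on the number of colliding vertices, shrinking one nested subdiagram at a time via the splitting $P^L_\varepsilon = P^{\mathrm{sing}} + P^{\mathrm{reg}}_\varepsilon$ so that at each stage only a simple holomorphic pole remains to be integrated against a smooth tail, and ultimately invoking the Li--Zhou homological theory of regularized integrals \cite{LZ-1}, which assigns a canonical finite value to any meromorphic graph integrand on $E^n$ with holomorphic-only polar locus. Applied to our $\beta\gamma$--$bc$ diagrammatics this canonical value realizes the limit in the statement, establishing UV finiteness.
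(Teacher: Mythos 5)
First, note that the paper you are working from does not actually prove this theorem: it is quoted from \cite{LS-VOA}, accompanied by a remark stressing that the proof is ``a bit technical'' precisely because, unlike the 1d case with its bounded propagator, \emph{the graph integrals here are not absolutely convergent}; Section \ref{sec:regularized-integral} then offers the Li--Zhou regularized integral as a geometric interpretation of the result rather than as a proof. Your outline is essentially a fleshing-out of that interpretation, and its first half is sound: the propagator on $E$ has only a holomorphic pole $\frac{1}{z-w}$ on the diagonal, the vertices contribute only holomorphic derivatives, so every graph integrand is a form on configuration space with meromorphic poles along diagonals --- exactly the class of forms to which the regularized integral $\dashint_{E^n}$ of \cite{LZ-1} applies.

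The gap is in the passage from ``the $\varepsilon=0$ integrand admits a canonical regularized value'' to ``the $\varepsilon\to 0$ limit of the mollified amplitudes exists.'' These are different statements, and the second is the content of the theorem. Poles of order $k\geq 2$ occur for essentially every graph (any vertex carrying $\partial_z^k a$ with $k\geq 1$, or any pair of vertices joined by two propagators, produces them), and such poles are not locally integrable in two real dimensions, so you cannot pass the limit inside the integral by dominated convergence; you need a uniform estimate showing that replacing the limiting propagator by $P^L_\varepsilon$ changes each conditionally convergent graph integral by $o(1)$. Your proposed decomposition $P^L_\varepsilon=P^{\mathrm{sing}}+P^{\mathrm{reg}}_\varepsilon$ with $P^{\mathrm{sing}}$ independent of $\varepsilon$ cannot hold as stated --- $P^L_\varepsilon$ is smooth for $\varepsilon>0$ while $P^{\mathrm{sing}}$ is not, so the ``regular'' part must absorb an $\varepsilon$-dependent mollification of the pole, and controlling products of such mollified poles across nested diagonals is exactly the technical heart of \cite{LS-VOA}. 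Two smaller points: the pole order is reduced by integrating by parts against the holomorphic differential $\partial$ (this is the decomposition $\omega=\alpha+\partial\beta$ of \cite{LZ-1}), not against $\bar{\partial}$; and the final appeal to \cite{LZ-1} for multi-vertex collisions presupposes the compatibility of the iterated (Fubini-type) regularized integral with the heat-kernel limit, which is again the statement to be proved rather than an available tool.
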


\begin{rmk}
The proof of the UV finiteness theorem is a bit technical.  The reason is different from topological QM, where we saw that the propagator is bounded (although not continuous). Here the graph integral is NOT absolutely convergent. See Section \ref{sec:regularized-integral} for another geometric interpretation \cite{LZ-1} of this fact.
\end{rmk}

Once we have a well-defined $I_\gamma[L]$ described above, we can formulate the {effective QME}
\bea \pb I_\gamma[L]+\hbar\Delta_L I_\gamma[L]+\hf\lcb I_\gamma[L],I_\gamma[L]\rcb_L=0\eea
and ask for the condition of $\gamma$ to satisfy the equation. It turns out that the answer is very simple.

\begin{thm}[\cite{LS-VOA}]
Consider $\gamma\in \cV_{h^\vee}$ and the effective functional $I_\gamma[L]$ defined above via the UV finiteness. Then $I_\gamma[L]$ satisfies the effective QME
\bea \pb I_\gamma[L]+\hbar\Delta_L I_\gamma[L]+\hf\lcb I_\gamma[L],I_\gamma[L]\rcb_L=0\eea
if and only if 
\bea \lsb \oint\gamma,\oint\gamma\rsb=0 \ \in \oint \cV.\eea
\end{thm}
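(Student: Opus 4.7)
The strategy is to exploit the scale-invariance of the QME under the homotopy RG flow to reduce the statement to a purely local calculation, and then to recognize the resulting obstruction as the modes Lie bracket $[\oint\gamma,\oint\gamma]$ in $\oint\cV$. Using the intertwining identity
\bea (\pb+\hbar\Delta_L)\,e^{\hbar\p_{P^L_\varepsilon}} = e^{\hbar\p_{P^L_\varepsilon}}\,(\pb+\hbar\Delta_\varepsilon),\eea
the effective QME at scale $L$ is equivalent to the effective QME at any scale $\varepsilon>0$. By the UV finiteness theorem, the limit $\varepsilon\to 0$ is well-controlled and the obstruction reads as a local functional
\bea \on{Obs}_\gamma\coloneqq \lim_{\varepsilon\to 0}\lb\pb I_\gamma[\varepsilon]+\hbar\Delta_\varepsilon I_\gamma[\varepsilon]+\hf\{I_\gamma[\varepsilon],I_\gamma[\varepsilon]\}_\varepsilon\rb.\eea
Thus it is enough to analyze $\on{Obs}_\gamma$ directly.

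Next, I would expand $\on{Obs}_\gamma$ via its Feynman graph presentation. The two terms $\hbar\Delta_\varepsilon I_\gamma[\varepsilon]$ and $\hf\{I_\gamma[\varepsilon],I_\gamma[\varepsilon]\}_\varepsilon$ together enumerate all diagrams with a single internal propagator $P^L_\varepsilon$, either self-contracting at one vertex or joining two distinct vertices. Commuting $\pb$ past these contractions and invoking the fundamental identity
\bea \pb\lb \frac{1}{z-w}\rb=2\pi i\,\delta^2(z-w),\eea
the result localizes onto the diagonal $E\hookrightarrow E\times E$, with integrand given by the singular part of the operator product $\gamma(z)\gamma(w)$. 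By the very definition of the modes Lie algebra, the contour residue around $z=w$ of $\gamma(z)\gamma(w)$ is $\gamma_{(0)}\gamma$, so (up to a universal normalization) $\on{Obs}_\gamma$ equals the image of $\hf[\oint\gamma,\oint\gamma]$ under the local-functional map $I:\cV\to\sO_{loc}(\cE)$.

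Finally, since $I$ factors through $\oint\cV=\cV/\p\cV$ (total $\p_z$-derivatives produce $\pb$-exact local functionals, already satisfying QME) and descends to an injection on $\oint\cV$ modulo $\pb$-exact terms, the vanishing of $\on{Obs}_\gamma$ is equivalent to $[\oint\gamma,\oint\gamma]=0\in\oint\cV$. This yields the claimed ``if and only if.''

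The main obstacle is the precise tracking of normalizations --- factors of $2\pi i$ and $\hbar$, and the $\bZ_2$-graded signs coming from $h=h_0\oplus h_1$ --- needed to identify the integrated residue with the modes commutator on the nose; this is essentially a chiral version of Wick's theorem combined with the Cauchy-Pompeiu formula. A secondary subtlety, required for the ``only if'' direction, is to establish the injectivity of $\oint\cV\to\sO_{loc}(\cE)/\pb\sO_{loc}(\cE)$, which prevents the anomaly from vanishing accidentally when $[\oint\gamma,\oint\gamma]\neq 0$; this should follow by comparing jet orders of $\p_z$-derivatives in the integrand, since $\gamma\mapsto I_\gamma$ only loses total chiral derivatives.
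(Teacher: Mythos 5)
The paper itself contains no proof of this theorem --- it is quoted from \cite{LS-VOA} --- but your strategy (use the intertwining property of $e^{\hbar\partial_{P^L_\varepsilon}}$ to move the obstruction to scale $\varepsilon$, take $\varepsilon\to 0$, localize onto the diagonal, and identify the surviving local functional with the image of $\tfrac12[\oint\gamma,\oint\gamma]$ under $I$, plus a separate injectivity statement for the ``only if'') is indeed the route taken in that reference. The skeleton is right.

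The genuine gap is in the localization step. After $\pb$ hits one propagator you do obtain a term concentrated on the diagonal, but in any graph where the two vertices joined by that edge are also joined by $k\geq 1$ further propagators, the would-be integrand is $\delta^2(z-w)\cdot (z-w)^{-k}\cdot(\text{smooth})$, which is not a distribution: the identity $\pb\bigl(\tfrac{1}{z-w}\bigr)=2\pi i\,\delta^2(z-w)$ cannot simply be substituted and the integral evaluated. This is precisely what the paper's own remark flags (``the graph integral is NOT absolutely convergent''), and it is also why the answer is the full zeroth product $\gamma_{(0)}\gamma$ --- which receives contributions from poles of \emph{all} orders in the OPE combined with Taylor expansion of the accompanying smooth factors --- rather than just the residue of a single Wick contraction. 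A correct argument must keep the heat-kernel regularization and prove, as a genuine analytic lemma, that $\lim_{\varepsilon\to 0}\int K_\varepsilon(z,w)\,P^L_\varepsilon(z,w)^k\,\Phi$ exists and equals the appropriate iterated residue (this is where the regularized-integral formalism of Section \ref{sec:regularized-integral} enters); only then can one read off the modes bracket. Relatedly, the tadpole graphs in which $\Delta_\varepsilon$ self-contracts a single vertex are not covered by the two-point OPE picture and need a separate symmetry or normal-ordering argument. Your closing paragraph files all of this under ``tracking normalizations,'' which substantially understates where the work lies; by comparison, the injectivity of $\oint\cV\to\sO_{loc}(\cE)$ modulo total $\p_z$-derivatives, which you correctly identify as needed for the ``only if'' direction, is the easier half.
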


\begin{rmk}
The local quantum observable of the chiral deformed theory is the vertex algebra $H^\blt(\cV_{h^\vee},[\oint \gamma,-])$. So $[\oint \gamma,-]$ plays the role of BRST reduction. Reversing this reasoning, vertex algebras coming from the BRST reduction of free field realizations can be realized via the model of chiral deformations above.
\end{rmk}

The above theorem can be glued for a \emph{chiral $\sigma$-model}
\bea \varphi:E\to X\eea
which produces a bundle $\cV(X)\to X$ of chiral vertex algebras on $X$. Then the solution of effective QME asks for a flat connection on $\cV(X)$ of the form
\bea D=d+\frac{1}{\hbar}\lsb \oint\gamma,-\rsb, \ \text{such that}\ D^2=0.\eea
Here $\gamma\in\Omega^1\lb X,\cV(X)\rb$ and $\oint \gamma$ is fiberwise chiral mode operator. This can be viewed as the \emph{chiral analogue of Fedosov connection}.

\subsection{Regularized Integral and UV Finiteness}\label{sec:regularized-integral}
The propagator $\pb^{-1}$ is given by the Szeg\"{o} kernel which exhibits holomorphic poles $\frac{1}{z-w}$ along the diagonal. In general, the Feynman diagram involves $\int_{\Sigma^n} \Omega$, where $\Omega$ exhibits holomorphic poles of arbitrary order when $z_i\to z_j$.
It turns out that such looking divergent integral has an intrinsic \emph{regularization} via its conformal structure. 

For simplicity, we start by considering such an integral $\int_\Sigma \omega$.
Here $\Sigma$ is a Riemann surface, possibly with boundary $\p\Sigma$, $\omega$ is a 2-from on $\Sigma$ with meromorphic poles of arbitrary order along a finite set $D\subset\Sigma$ and $D\cap\p\Sigma=\varnothing$.
\bea 
\tikzset{every picture/.style={line width=0.75pt}} 
\begin{tikzpicture}[x=0.75pt,y=0.75pt,yscale=-1,xscale=1]

\draw   (79.37,25.97) .. controls (152.1,52.53) and (134.73,34.72) .. (199.52,24.9) .. controls (264.3,15.09) and (274.94,103.85) .. (209.22,96.35) .. controls (143.5,88.85) and (157.79,66.44) .. (80.63,93.01) .. controls (3.47,119.57) and (6.64,-0.59) .. (79.37,25.97) -- cycle ;
\draw    (52.17,56.96) .. controls (65.45,73.4) and (75.57,63.91) .. (86.95,56.96) ;
\draw    (57.23,62.02) .. controls (62.92,54.43) and (75.57,58.86) .. (80,61.39) ;
\draw    (174.23,57.59) .. controls (187.51,74.03) and (197.63,64.55) .. (209.01,57.59) ;
\draw    (179.29,62.65) .. controls (184.98,55.06) and (197.63,59.49) .. (202.06,62.02) ;
\draw  [color={rgb, 255:red, 144; green, 19; blue, 254 }  ,draw opacity=1 ][fill={rgb, 255:red, 144; green, 19; blue, 254 }  ,fill opacity=1 ] (99.98,48.13) .. controls (99.98,46.91) and (100.97,45.92) .. (102.19,45.92) .. controls (103.41,45.92) and (104.4,46.91) .. (104.4,48.13) .. controls (104.4,49.35) and (103.41,50.34) .. (102.19,50.34) .. controls (100.97,50.34) and (99.98,49.35) .. (99.98,48.13) -- cycle ;
\draw  [color={rgb, 255:red, 144; green, 19; blue, 254 }  ,draw opacity=1 ][fill={rgb, 255:red, 144; green, 19; blue, 254 }  ,fill opacity=1 ] (109.25,65.77) .. controls (109.25,64.55) and (110.23,63.57) .. (111.45,63.57) .. controls (112.67,63.57) and (113.66,64.55) .. (113.66,65.77) .. controls (113.66,66.99) and (112.67,67.98) .. (111.45,67.98) .. controls (110.23,67.98) and (109.25,66.99) .. (109.25,65.77) -- cycle ;
\draw  [color={rgb, 255:red, 144; green, 19; blue, 254 }  ,draw opacity=1 ][fill={rgb, 255:red, 144; green, 19; blue, 254 }  ,fill opacity=1 ] (129.09,54.01) .. controls (129.09,52.79) and (130.08,51.81) .. (131.3,51.81) .. controls (132.52,51.81) and (133.5,52.79) .. (133.5,54.01) .. controls (133.5,55.23) and (132.52,56.22) .. (131.3,56.22) .. controls (130.08,56.22) and (129.09,55.23) .. (129.09,54.01) -- cycle ;
\draw  [color={rgb, 255:red, 144; green, 19; blue, 254 }  ,draw opacity=1 ][fill={rgb, 255:red, 144; green, 19; blue, 254 }  ,fill opacity=1 ] (155.12,51.66) .. controls (155.12,50.44) and (156.1,49.45) .. (157.32,49.45) .. controls (158.54,49.45) and (159.53,50.44) .. (159.53,51.66) .. controls (159.53,52.88) and (158.54,53.86) .. (157.32,53.86) .. controls (156.1,53.86) and (155.12,52.88) .. (155.12,51.66) -- cycle ;
\draw    (240.29,34.8) .. controls (229.12,60.68) and (230.29,69.5) .. (240.88,89.49) ;
\draw  [color={rgb, 255:red, 144; green, 19; blue, 254 }  ,draw opacity=1 ][fill={rgb, 255:red, 144; green, 19; blue, 254 }  ,fill opacity=1 ] (143.21,71.07) .. controls (143.21,69.85) and (144.2,68.86) .. (145.41,68.86) .. controls (146.63,68.86) and (147.62,69.85) .. (147.62,71.07) .. controls (147.62,72.28) and (146.63,73.27) .. (145.41,73.27) .. controls (144.2,73.27) and (143.21,72.28) .. (143.21,71.07) -- cycle ;

\draw (172.73,96.1) node [anchor=north west][inner sep=0.75pt]    {$\Sigma $};
\draw (255.91,56.85) node [anchor=north west][inner sep=0.75pt]    {$\partial \Sigma $};
\draw (119.18,22.22) node [anchor=north west][inner sep=0.75pt]  [color={rgb, 255:red, 144; green, 19; blue, 254 }  ,opacity=1 ]  {$D$};
\end{tikzpicture}
\eea

Let $p\in D$ and $z$ be a local coordinate centered at $p$. Then locally $\omega$ can be written as
\bea \omega=\frac{\eta}{z^n}\eea
where $\eta$ is smooth, and $n\in \bZ$.
Since the pole order can be arbitrarily large, the naive integral $\int_\Sigma\omega$ is divergent in general. One homological way out of this divergence problem \cite{LZ-1} is as follows.
We can decompose $\omega$ into
\bea \omega=\alpha+\p\beta,\eea
where $\alpha$ is a 2-form with at most \emph{logarithmic pole} along $D$, $\beta$ is a $(0,1)$-form with \emph{arbitrary order of poles} along $D$, and $\p=dz\frac{\p}{\p z}$ is the holomorphic de Rham differential. Such a decomposition \emph{exists} and is \emph{not unique}.

\begin{defn}[\cite{LZ-1}]
Define the \textbf{regularized integral} 
\bea\dint_\Sigma\omega\coloneqq\int_\Sigma\alpha+\int_{\p\Sigma}\beta\eea
as a recipe to integrate the singular form $\omega$ on $\Sigma$. It has the following properties
\begin{itemize}
    \item it does not depend on the choice of $\alpha,\beta$, and is equivalent to the Cauchy principal value, 
    \item $\dint_\Sigma$ is invariant under conformal transformations,
    \item $\dint_\Sigma\p(-)=\int_{\p\Sigma}(-)$ on $(0,1)$-form with meromorphic poles,
    \item $\dint_\Sigma\pb(-)=\on{Res}(-)$  on $(1,0)$-form with meromorphic poles.
\end{itemize}
\end{defn}
The regularized integral extends the usual integral for smooth forms, i.e., the following diagram is commutative:
\bea \begin{tikzcd}
\cA^2(\Sigma) \ar[rr, hook] \ar[dr, swap, "\int_\Sigma"]
&  & \cA^2(\Sigma,\star D) \ar[dl, "\dint_\Sigma"]\\
& \bC & 
\end{tikzcd}\eea
Here $\cA^2(\Sigma)$ means smooth 2-forms on $\Sigma$, and $\cA^2(\Sigma,\star D)$ means smooth 2-forms on $\Sigma-D$ with meromorphic poles of arbitrary order around $D$. 

We can use this to define integrals on configuration space of $\Sigma$
\bea \on{Conf}_n(\Sigma)=\Sigma^n-\Delta =\lcb\left. (p_1,\cdots,p_n)\in\Sigma^n \right| p_i \neq p_j, \forall i\neq j\rcb\eea
and define
\bea\dint_{\Sigma^n} : \cA^{2n}(\Sigma^n,\star\Delta)\to \bC\eea
by iterating 
\bea \dint_{\Sigma^n}(-)= \dint_{\Sigma}\dint_{\Sigma}\cdots\ \dint_{\Sigma}(-).\eea
It does NOT depend on the choice of the ordering of the factors in $\Sigma^n$: Fubini-type theorem holds.
This gives an intrinsically regularized meaning for $\dint_{\Sigma^n} \Omega$, where $\Omega$ is the Feynman diagram integrand. 
This explains why the theory is UV finite.

\subsection{Chiral Homology and Elliptic Trace}
Intuitively, chiral chain complex can be viewed as a 2d chiral analogue of Hochschild chain complex.
\bea 
\tikzset{every picture/.style={line width=0.75pt}}
\begin{tikzpicture}[x=0.75pt,y=0.75pt,yscale=-1,xscale=1]

\draw    (99.3,113.01) .. controls (114.35,131.65) and (125.82,120.89) .. (138.73,113.01) ;
\draw    (105.03,118.74) .. controls (111.49,110.14) and (125.82,115.16) .. (130.84,118.03) ;
\draw   (60.67,117.78) .. controls (60.67,99.31) and (86.87,84.33) .. (119.19,84.33) .. controls (151.52,84.33) and (177.72,99.31) .. (177.72,117.78) .. controls (177.72,136.25) and (151.52,151.22) .. (119.19,151.22) .. controls (86.87,151.22) and (60.67,136.25) .. (60.67,117.78) -- cycle ;
\draw    (275.96,111.67) .. controls (291.02,130.31) and (302.49,119.56) .. (315.4,111.67) ;
\draw    (281.7,117.41) .. controls (288.15,108.81) and (302.49,113.82) .. (307.51,116.69) ;
\draw   (237.33,116.44) .. controls (237.33,97.97) and (263.54,83) .. (295.86,83) .. controls (328.19,83) and (354.39,97.97) .. (354.39,116.44) .. controls (354.39,134.92) and (328.19,149.89) .. (295.86,149.89) .. controls (263.54,149.89) and (237.33,134.92) .. (237.33,116.44) -- cycle ;

\draw (87.56,93.07) node [anchor=north west][inner sep=0.75pt]    {$\times $};
\draw (116.89,91.07) node [anchor=north west][inner sep=0.75pt]    {$\times $};
\draw (86.89,125.73) node [anchor=north west][inner sep=0.75pt]    {$\times $};
\draw (122.89,126.4) node [anchor=north west][inner sep=0.75pt]    {$\times $};
\draw (137.51,121.43) node [anchor=north west][inner sep=0.75pt]    {$\times $};
\draw (34.89,108.07) node [anchor=north west][inner sep=0.75pt]    {$d_{\on{ch}}$};
\draw (264.22,91.73) node [anchor=north west][inner sep=0.75pt]    {$\times $};
\draw (293.56,89.73) node [anchor=north west][inner sep=0.75pt]    {$\times $};
\draw (263.56,124.4) node [anchor=north west][inner sep=0.75pt]    {$\times $};
\draw (299.56,125.07) node [anchor=north west][inner sep=0.75pt]  [color={rgb, 255:red, 144; green, 19; blue, 254 }  ,opacity=1 ]  {$\times $};
\draw (314.18,120.09) node [anchor=north west][inner sep=0.75pt]  [color={rgb, 255:red, 144; green, 19; blue, 254 }  ,opacity=1 ]  {$\times $};
\draw (184.67,108.07) node [anchor=north west][inner sep=0.75pt]    {$=$};
\draw (208.67,102.07) node [anchor=north west][inner sep=0.75pt]    {$\sum $};
\draw (295.31,134.74) node [anchor=north west][inner sep=0.75pt]  [color={rgb, 255:red, 144; green, 19; blue, 254 }  ,opacity=1 ,rotate=-347.56]  {$\underbrace{\ \ \ \ \ \ }_{\textbf{OPE}}$};
\end{tikzpicture}
\eea

\begin{itemize}
    \item In \cite{zhu1994global}, Zhu studied the space of genus 1 conformal blocks (i.e. the 0th elliptic chiral homology).
    \item In \cite{beilinson2004chiral}, Beilinson and Drinfeld developed the chiral homology theory on general algebraic curves.
\end{itemize}

\paragraph{The construction of Beilinson-Drinfeld.} We briefly review the construction of Beilinson-Drinfeld and refer to \cite{Gui:2021dci} for further details related to the purpose of the current discussion. Let $S$ denote the category of finite non-empty sets whose morphisms are surjections. 
Given the following data:
\begin{itemize}
    \item a category of right $\cD$-modules $\cM(X)$ on $X=\Sigma$,
    \item a category of right $\cD$-modules $\cM(X^S)$ on $X^S$, such that each element $M \in\cM(X^S)$ is a collection that assigns every finite index set $I\in S$ a right $\cD$-module $\cM_{X^I}$ on the product $X^I$ satisfying certain compatibility conditions,
    \item there is an exact fully faithful embedding
    \bea \Delta^{(S)}_\star: \cM(X)\hookrightarrow \cM(X^S)\eea
    via the diagonal map $\Delta^{(I)}:X \hookrightarrow X^I$,
    \item $\cM(X^S)$ carries a (chiral) tensor structure $\otimes^{\on{ch}}$,
\end{itemize}
Then a chiral algebra $\cA$ is a \textbf{Lie algebraic object} via $\Delta^{(S)}_\star$. 

\begin{rmk}
The chiral algebra $\cA$ collects all ``normal ordered operators'' in physics terminology. 
\end{rmk}

We consider the Chevalley-Eilenberg (CE) complex
\bea \lb C(\cA),d_{\on{CE}}\rb=\lb \bigoplus_{\blt>0} \sym^\blt_{\otimes^{\on{ch}}}\lb \Delta^{(S)}_\star\cA[1]\rb, d_{\on{CE}}\rb.\eea
The chiral homology for this complex is
\bea C^{\on{ch}}(X,\cA)=R\Gamma_{DR}(X^S, C(\cA)).\eea

We will focus on $\beta\gamma-bc$ system, where the vertex operator algebra (VOA) $\cV^{\beta\gamma-bc}$ gives rise to a chiral algebra $\cA^{\beta\gamma-bc}$. The following theorem gives the corresponding elliptic trace map in terms of renormalization group flow. 
\begin{thm}[\cite{Gui:2021dci}]\label{thm:Gui-L}
Let $E$ be an elliptic curve. Then the HRG flow gives a map
\bea \lan -\ran_{2d}: C^{\on{ch}}(E,\cA^{\beta\gamma-bc})\to \mathcal O_{\mathrm{BV}}((\hbar))\eea
satisfying the QME $$
(d_{\on{ch}}+\hbar\Delta)\lan -\ran_{2d}=0.
$$
\begin{itemize}

\item Here  $\mathcal O_{\mathrm{BV}}$ is the space of functions on {zero modes} of the $\beta\gamma-bc$ system, which carries a structure of BV algebra. $\Delta$ is the corresponding BV operator. 

\item $\lan -\ran_{2d}$ is defined by
\bea \lan \cO_1\otimes\cdots\otimes\cO_n\ran_{2d}\coloneqq \dint_{E^n} \lb \prod_{i=1}^n \frac{d^2z_i}{\Im \tau}\rb  \lan \cO_1(z_1)\cdots \cO_n(z_n) \ran\eea
where $\abracket{\OO_1(z_1)\cdots \OO_n(z_n)}$ is the correlation function computed via Feynman diagrammatics, and $\dashint$ is the {regularized integral}. 

\item The QME says that ${\langle-\rangle_{2d}}$ intertwines the chiral differential of the elliptic chiral chain complex with the BV operator $-\hbar \Delta$ of the zero-mode algebra $\mathcal O_{\mathrm{BV}}((\hbar)))$. Moreover, ${\langle-\rangle_{2d}}$ is shown to be a {quasi-isomorphism}. 

\item The BV trace with universal background leads to \textbf{Witten genus}.
\end{itemize}
\end{thm}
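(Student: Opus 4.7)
The plan is to verify the three assertions in sequence: (i) construction of $\langle-\rangle_{2d}$ via the heat-kernel-regularized propagator together with the regularized integral, (ii) the chain map property $(d_{\mathrm{ch}}+\hbar\Delta)\langle-\rangle_{2d}=0$, and (iii) the quasi-isomorphism claim, together with the Witten genus computation.

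\emph{Step 1: Construction.} First I would give meaning to $\langle \cO_1\otimes\cdots\otimes\cO_n\rangle_{2d}$ using the UV-finiteness theorem of the previous section. Starting from a chiral chain $\cO_1\otimes\cdots\otimes\cO_n$ realized as a Dolbeault form on $E^n$ with holomorphic poles along the big diagonal, the HRG flow $\exp(\hbar\partial_{P_0^\infty})$ with full Szeg\H{o} propagator expresses the correlator as a Feynman sum. Although each graph integrand has meromorphic poles of arbitrary order along the diagonals, the regularized integral $\dashint_{E^n}$ of Section~\ref{sec:regularized-integral} assigns it a canonical value independent of the choice of decomposition $\omega=\alpha+\partial\beta$. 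The Fubini-type property ensures independence of ordering, and together with conformal invariance this produces a well-defined $\mathbf{k}$-linear map into $\mathcal{O}_{\mathrm{BV}}((\hbar))$, where $\mathcal{O}_{\mathrm{BV}}$ is functions on the harmonic zero modes $\mathbf{H}=H^{0,\bullet}(E)\otimes h$ equipped with the BV operator induced by $\omega_{\mathbf{H}}^{-1}$.

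\emph{Step 2: Quantum master equation.} The key identity is that the chiral differential $d_{\mathrm{ch}}$ is realized geometrically as integration along the boundaries of the compactified diagonal strata in $E^n$, while $\hbar\Delta$ on zero modes corresponds to the contraction with the Poisson kernel of $h$. The plan is to use the two residue-type properties of $\dashint$ listed in Section~\ref{sec:regularized-integral}:
\begin{equation*}
\dashint_\Sigma \partial(-)=\int_{\partial\Sigma}(-), \qquad \dashint_\Sigma \bar\partial(-)=\mathrm{Res}(-).
\end{equation*}
Applying $\bar\partial_{z_i}$ to the full propagator produces a $\delta$-function supported on a diagonal, and this precisely effects the OPE collapse defining $d_{\mathrm{ch}}$. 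Conversely, the same $\bar\partial$-exact terms, when contracted against the harmonic projector, reproduce $\hbar\Delta$ acting on zero modes. One then organizes these contributions diagrammatically (as in the topological case) and verifies the cancellation between (a) boundary contributions of the regularized integral at the deepest diagonals, which give $d_{\mathrm{ch}}$, and (b) BV contractions produced when a propagator line is replaced by the harmonic projector, which give $\hbar\Delta$.

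\emph{Step 3: Quasi-isomorphism and Witten genus.} For the quasi-isomorphism I would pass to the associated graded with respect to the natural filtration by the polynomial degree in $h$, reducing to the free case where the chiral homology can be computed directly and matched with the de Rham complex $(\mathcal{O}_{\mathrm{BV}}((\hbar)),\hbar\Delta)$ on $\mathbf{H}$ via a spectral sequence argument. The target is cohomologically thin, so a dimension count after degeneration controls the argument. For the partition function $\langle 1\rangle_{2d}$ in a family (the chiral $\sigma$-model with target a K\"ahler manifold $X$), one runs the construction globally using the chiral analogue of Fedosov connection mentioned at the end of Section~\ref{sec:2d1}: the one-loop Feynman graph computation produces exactly the Chern-Weil representative of the Witten class $\prod_i \widehat{A}(x_i)\prod_{n\ge 1}\frac{x_i}{1-q^n e^{x_i}}\frac{1}{1-q^n e^{-x_i}}$, and higher-loop corrections are killed by a Gauss-Manin-Getzler-type argument analogous to Step (2) in Section~\ref{sec:tqm2}.

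\emph{Main obstacle.} The technically hardest point will be Step 2: tracking the boundary contributions of $\dashint_{E^n}$ at all strata of the compactified configuration space and matching them simultaneously with $d_{\mathrm{ch}}$ (which combines OPE expansions at every diagonal) and with $-\hbar\Delta$ (which is a quadratic operator on zero modes). The issue is that, unlike the 1d topological case where $\partial_\theta h_t$ produced bounded propagators and compactified configuration space was available, here the propagator has genuine meromorphic singularities, so one must control the noncommutativity of the regularized integral with the Wick expansion sum-by-sum; the systematic tool is the Stokes-type calculus for $\dashint$ from \cite{LZ-1}, and the careful bookkeeping of signs and combinatorial factors in the resulting graph sum is the real work.
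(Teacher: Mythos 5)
The paper does not prove this theorem: it is quoted verbatim from \cite{Gui:2021dci}, and the surrounding text only assembles the ingredients (the regularized integral of Section \ref{sec:regularized-integral}, the UV-finiteness theorem, and the remark that the Witten genus computation follows Costello \cite{Costello-Wittengenus} and Gorbounov--Gwilliam--Williams \cite{GGW}). Measured against that, your outline is consistent with the strategy the paper attributes to the source: Step 1 matches the stated definition of $\lan-\ran_{2d}$ via HRG flow plus $\dashint_{E^n}$, and Step 3's one-loop/Chern--Weil computation of the Witten class with higher loops killed by a Getzler-type argument is exactly the route the paper points to.

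Two caveats. First, in Step 2 you describe $d_{\on{ch}}$ as ``integration along the boundaries of the compactified diagonal strata,'' which is the wrong picture here: the paper is explicit that for holomorphic theories the graph integrals do \emph{not} extend to the compactified configuration space (that is the whole reason the regularized integral is needed), so there are no boundary strata to integrate over. The correct mechanism --- which you also state, creating an internal tension --- is the chain homotopy $K_0-K_\infty=\bar\partial P_0^\infty$ together with $\dashint_\Sigma\bar\partial(-)=\Res(-)$: the $K_0$ term localizes to residues along diagonals and assembles into the OPE/Chevalley--Eilenberg differential $d_{\on{ch}}$, while the $K_\infty$ (harmonic projector) term produces $\hbar\Delta$ on zero modes. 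You should commit to that version and drop the compactification language. Second, your Step 3 argument for the quasi-isomorphism is underdeveloped: the theorem already concerns the free $\beta\gamma$-$bc$ chiral algebra, so ``reducing to the free case'' by a filtration is not the content; the actual work is computing the elliptic chiral homology of the free system and matching it with the cohomology of $(\mathcal O_{\mathrm{BV}}((\hbar)),\hbar\Delta)$, for which a ``dimension count after degeneration'' is a placeholder rather than an argument. Neither caveat is a wrong idea so much as a point where the plan defers the real work, which you acknowledge in your final paragraph.
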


Theorem \ref{thm:Gui-L} establishes the construction of BV quantization and trace map outlined in the introduction. The Witten genus can be viewed as describing an elliptic chiral analogue of the algebraic index. The computation of Witten genus in BV quantization follows essentially from similar arguments in Costello \cite{Costello-Wittengenus} and Gorbounov-Gwilliam-Williams \cite{GGW}. 

\subsection{2d $\to$ 1d Reduction}\label{sec:2d-1d}
We summarize our discussion as follows.
\begin{table}[!htpb]\centering
            \begin{tabular}{c|c}\toprule
            \textbf{1d TQM} & \textbf{2d chiral QFT}\\ \hline
            Associative algebra & Vertex/chiral algebra\\ \hline
            Hochschild homology & Chiral homology\\ \hline 
            QME $(\hbar\Delta+b)\lan -\ran_{1d}=0$ & QME $(\hbar\Delta+d_{\on{ch}})\lan -\ran_{2d}=0$ \\ \hline
            $\lan \cO_1\otimes\cdots\otimes\cO_n\ran_{1d}=\int_{\ols{\on{Conf}_n(S^1)}}$ & $\lan \cO_1\otimes\cdots\otimes\cO_n\ran_{2d}=\dint_{\Sigma^n}$ \\ 
            \bottomrule
            \end{tabular}
\end{table}

In physics, the partition functions/correlation functions on elliptic curves are described by reducing to a quantum mechanical system on $S^1$.
\bea 
\tikzset{every picture/.style={line width=0.75pt}}       
\begin{tikzpicture}[x=0.75pt,y=0.75pt,yscale=-1,xscale=1]

\draw  [color={rgb, 255:red, 65; green, 117; blue, 5 }  ,draw opacity=1 ] (321.5,63.79) .. controls (321.5,46.27) and (346.35,32.07) .. (377,32.07) .. controls (407.65,32.07) and (432.5,46.27) .. (432.5,63.79) .. controls (432.5,81.3) and (407.65,95.5) .. (377,95.5) .. controls (346.35,95.5) and (321.5,81.3) .. (321.5,63.79) -- cycle ;
\draw    (248.5,67.5) -- (306.83,67.5) ;
\draw [shift={(308.83,67.5)}, rotate = 180] [color={rgb, 255:red, 0; green, 0; blue, 0 }  ][line width=0.75]    (10.93,-3.29) .. controls (6.95,-1.4) and (3.31,-0.3) .. (0,0) .. controls (3.31,0.3) and (6.95,1.4) .. (10.93,3.29)   ;
\draw  [color={rgb, 255:red, 144; green, 19; blue, 254 }  ,draw opacity=1 ][fill={rgb, 255:red, 144; green, 19; blue, 254 }  ,fill opacity=1 ] (354.5,93.67) .. controls (354.5,92.47) and (355.47,91.5) .. (356.67,91.5) .. controls (357.86,91.5) and (358.83,92.47) .. (358.83,93.67) .. controls (358.83,94.86) and (357.86,95.83) .. (356.67,95.83) .. controls (355.47,95.83) and (354.5,94.86) .. (354.5,93.67) -- cycle ;
\draw  [color={rgb, 255:red, 144; green, 19; blue, 254 }  ,draw opacity=1 ][fill={rgb, 255:red, 144; green, 19; blue, 254 }  ,fill opacity=1 ] (388,94.67) .. controls (388,93.47) and (388.97,92.5) .. (390.17,92.5) .. controls (391.36,92.5) and (392.33,93.47) .. (392.33,94.67) .. controls (392.33,95.86) and (391.36,96.83) .. (390.17,96.83) .. controls (388.97,96.83) and (388,95.86) .. (388,94.67) -- cycle ;
\draw  [color={rgb, 255:red, 144; green, 19; blue, 254 }  ,draw opacity=1 ][fill={rgb, 255:red, 144; green, 19; blue, 254 }  ,fill opacity=1 ] (415.5,84.17) .. controls (415.5,82.97) and (416.47,82) .. (417.67,82) .. controls (418.86,82) and (419.83,82.97) .. (419.83,84.17) .. controls (419.83,85.36) and (418.86,86.33) .. (417.67,86.33) .. controls (416.47,86.33) and (415.5,85.36) .. (415.5,84.17) -- cycle ;
\draw  [color={rgb, 255:red, 65; green, 117; blue, 5 }  ,draw opacity=1 ][fill={rgb, 255:red, 80; green, 227; blue, 194 }  ,fill opacity=0.48 ,even odd rule] (121.03,65.26) .. controls (121.03,58.79) and (138.48,53.54) .. (160,53.54) .. controls (181.52,53.55) and (198.96,58.8) .. (198.96,65.28) .. controls (198.96,71.75) and (181.52,77) .. (160,77) .. controls (138.48,76.99) and (121.03,71.74) .. (121.03,65.26)(84.5,65.26) .. controls (84.51,38.61) and (118.31,17.01) .. (160.01,17.01) .. controls (201.7,17.02) and (235.5,38.63) .. (235.49,65.28) .. controls (235.49,91.93) and (201.68,113.53) .. (159.99,113.53) .. controls (118.29,113.52) and (84.5,91.91) .. (84.5,65.26) ;
\draw [color={rgb, 255:red, 144; green, 19; blue, 254 }  ,draw opacity=1 ]   (196,69.19) .. controls (209.5,61.53) and (229,86.03) .. (217.5,96.53) ;
\draw [color={rgb, 255:red, 144; green, 19; blue, 254 }  ,draw opacity=1 ] [dash pattern={on 4.5pt off 4.5pt}]  (196,69.19) .. controls (186.5,78.53) and (203.5,100.53) .. (217.5,96.53) ;
\draw  [color={rgb, 255:red, 144; green, 19; blue, 254 }  ,draw opacity=1 ][fill={rgb, 255:red, 65; green, 117; blue, 5 }  ,fill opacity=1 ] (217.89,82.25) -- (216.27,76.66) -- (221.72,78.72) -- (218.04,78.57) -- cycle ;
\draw [color={rgb, 255:red, 144; green, 19; blue, 254 }  ,draw opacity=1 ]   (168.5,77.07) .. controls (184.43,79.43) and (184.7,112.12) .. (168.59,113.38) ;
\draw [color={rgb, 255:red, 144; green, 19; blue, 254 }  ,draw opacity=1 ] [dash pattern={on 4.5pt off 4.5pt}]  (168.5,77.07) .. controls (154.77,78.65) and (154.6,107.68) .. (168.59,113.38) ;
\draw  [color={rgb, 255:red, 144; green, 19; blue, 254 }  ,draw opacity=1 ][fill={rgb, 255:red, 65; green, 117; blue, 5 }  ,fill opacity=1 ] (178.04,101.88) -- (180.29,96.23) -- (183.42,101.43) -- (180.51,98.94) -- cycle ;
\draw [color={rgb, 255:red, 144; green, 19; blue, 254 }  ,draw opacity=1 ]   (131.17,73.82) .. controls (144.86,80.83) and (135.11,110.48) .. (120.16,106.69) ;
\draw [color={rgb, 255:red, 144; green, 19; blue, 254 }  ,draw opacity=1 ] [dash pattern={on 4.5pt off 4.5pt}]  (131.17,73.82) .. controls (118.27,71.06) and (109.24,97.26) .. (120.16,106.69) ;
\draw  [color={rgb, 255:red, 144; green, 19; blue, 254 }  ,draw opacity=1 ][fill={rgb, 255:red, 65; green, 117; blue, 5 }  ,fill opacity=1 ] (132.22,99.18) -- (135.98,94.76) -- (137.22,100.42) -- (135.35,97.28) -- cycle ;

\draw (223.5,128.9) node [anchor=north west][inner sep=0.75pt]    {$\lan - \ran_{2d} \longrightarrow \Tr_{\cH}\ \lan\cdots\ran$};
\draw (106,107.59) node [anchor=north west][inner sep=0.75pt]  [color={rgb, 255:red, 144; green, 19; blue, 254 }  ,opacity=1 ]  {$A$};
\end{tikzpicture}
\eea
Now we can define 2d chiral correlation function using \emph{regularized integral} $\dint_E$. In 1d, operators are described by $A$-cycle $\oint_A$. These two integrals are not exactly the same, but related to each other by \emph{holomorphic limit}.

\begin{thm}[\cite{LZ-1}]
Let $\Phi(z_1,\cdots,z_n;\tau)$ be a meromorphic elliptic function on $\bC^n\times \bH$ which is holomorphic away from diagonals. Let $A_1,\cdots,A_n$ be $n$ disjoint $A$-cycles on the elliptic curve $E_\tau=\bC/(\bZ\oplus \tau\bZ)$. 
\bea 
\tikzset{every picture/.style={line width=0.75pt}} 
\begin{tikzpicture}[x=0.75pt,y=0.75pt,yscale=-1,xscale=1]
\draw  [color={rgb, 255:red, 11; green, 95; blue, 193 }  ,draw opacity=1 ][fill={rgb, 255:red, 74; green, 144; blue, 226 }  ,fill opacity=0.45 ,even odd rule] (244.53,97.57) .. controls (244.53,91.09) and (261.98,85.84) .. (283.5,85.85) .. controls (305.02,85.85) and (322.46,91.11) .. (322.46,97.58) .. controls (322.46,104.06) and (305.02,109.31) .. (283.5,109.3) .. controls (261.98,109.3) and (244.53,104.05) .. (244.53,97.57)(208,97.56) .. controls (208.01,70.91) and (241.81,49.31) .. (283.51,49.32) .. controls (325.2,49.33) and (359,70.94) .. (358.99,97.59) .. controls (358.99,124.24) and (325.18,145.84) .. (283.49,145.83) .. controls (241.79,145.83) and (208,124.21) .. (208,97.56) ;
\draw [color={rgb, 255:red, 144; green, 19; blue, 254 }  ,draw opacity=1 ]   (319.5,101.5) .. controls (333,93.83) and (352.5,118.33) .. (341,128.83) ;
\draw [color={rgb, 255:red, 144; green, 19; blue, 254 }  ,draw opacity=1 ] [dash pattern={on 4.5pt off 4.5pt}]  (319.5,101.5) .. controls (310,110.83) and (327,132.83) .. (341,128.83) ;
\draw  [color={rgb, 255:red, 144; green, 19; blue, 254 }  ,draw opacity=1 ][fill={rgb, 255:red, 65; green, 117; blue, 5 }  ,fill opacity=1 ] (341.39,114.56) -- (339.77,108.96) -- (345.22,111.03) -- (341.54,110.88) -- cycle ;
\draw [color={rgb, 255:red, 144; green, 19; blue, 254 }  ,draw opacity=1 ]   (285.75,109.23) .. controls (301.43,113.61) and (297.57,146.42) .. (281.26,145.65) ;
\draw [color={rgb, 255:red, 144; green, 19; blue, 254 }  ,draw opacity=1 ] [dash pattern={on 4.5pt off 4.5pt}]  (285.75,109.23) .. controls (271.79,109.09) and (267.95,138.17) .. (281.26,145.65) ;
\draw  [color={rgb, 255:red, 144; green, 19; blue, 254 }  ,draw opacity=1 ][fill={rgb, 255:red, 65; green, 117; blue, 5 }  ,fill opacity=1 ] (292.18,135.31) -- (295.16,129.94) -- (297.64,135.54) -- (295.04,132.68) -- cycle ;
\draw [color={rgb, 255:red, 144; green, 19; blue, 254 }  ,draw opacity=1 ]   (256.75,106.87) .. controls (269.43,115.57) and (255.96,143.72) .. (241.61,138.05) ;
\draw [color={rgb, 255:red, 144; green, 19; blue, 254 }  ,draw opacity=1 ] [dash pattern={on 4.5pt off 4.5pt}]  (256.75,106.87) .. controls (244.31,102.47) and (231.99,127.3) .. (241.61,138.05) ;
\draw  [color={rgb, 255:red, 144; green, 19; blue, 254 }  ,draw opacity=1 ][fill={rgb, 255:red, 65; green, 117; blue, 5 }  ,fill opacity=1 ] (254.53,132.14) -- (258.83,128.25) -- (259.34,134.02) -- (257.88,130.66) -- cycle ;

\draw (230.33,140.23) node [anchor=north west][inner sep=0.75pt]  [color={rgb, 255:red, 144; green, 19; blue, 254 }  ,opacity=1 ]  {$A_{1}$};
\draw (310.43,125.53) node  [color={rgb, 255:red, 144; green, 19; blue, 254 }  ,opacity=1 ,rotate=-346.47]  {$\cdots $};
\draw (339,128.4) node [anchor=north west][inner sep=0.75pt]  [color={rgb, 255:red, 144; green, 19; blue, 254 }  ,opacity=1 ]  {$A_{n}$};
\draw (273,148.9) node [anchor=north west][inner sep=0.75pt]  [color={rgb, 255:red, 144; green, 19; blue, 254 }  ,opacity=1 ]  {$A_{2}$};
\draw (342,48.4) node [anchor=north west][inner sep=0.75pt]  [color={rgb, 255:red, 11; green, 95; blue, 193 }  ,opacity=1 ]  {$E_{\tau }$};
\end{tikzpicture}
\eea
Then the regularized integral 
\bea\dint_{E^n_\tau}\lb \prod_{i=1}^n \frac{d^2z_i}{\Im \tau}\rb \Phi(z_1,\cdots,z_n;\tau)\eea
lies in $\sO_{\bH}\lsb\frac{1}{\Im \tau}\rsb$. Moreover, we have
\bea\lim_{\ols{\tau}\to\infty}\dint_{E^n_\tau} \lb \prod_{i=1}^n\frac{d^2z_i}{\Im \tau}\rb \Phi
=\frac{1}{n!}\sum_{\sigma\in S_n}\oint_{A_{\sigma(1)}} dz_1 \cdots \oint_{A_{\sigma(n)}} dz_n\ \Phi,\eea
where $S_n$ is the $n$-th permutation group and 
$$
\lim_{\ols{\tau}\to\infty}: \sO_{\bH}\lsb\frac{1}{\Im \tau}\rsb\to \sO_{\bH}\quad \text{is the map sending} \quad \frac{1}{\Im \tau}\to 0.
$$
\end{thm}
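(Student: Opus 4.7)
I would prove the theorem by induction on $n$, using the Fubini property for the regularized integral $\dint_{E^n}$ from \cite{LZ-1}. The case $n=1$ is essentially trivial, since a meromorphic elliptic function of a single variable with no diagonals is constant. The content sits in the inductive step, which reduces via Fubini to a repeated application of a one-variable base identity that I now describe.

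\textbf{Base identity.} Let $\phi(z;\tau)$ be meromorphic elliptic in $z$ with pole set $D\subset E_\tau$. Introduce the multi-valued real coordinate $\sigma(z):=\Im z/\Im\tau$, which satisfies $\sigma(z+1)=\sigma(z)$ and $\sigma(z+\tau)=\sigma(z)+1$, and note the identity $\frac{d^2z}{\Im\tau}=-d(\sigma\,dz)$. Since $\phi$ is meromorphic, $\phi\cdot d(\sigma\,dz)=d(\phi\sigma\,dz)$ away from $D$. Applying Stokes on the fundamental parallelogram $P$ with small disks $D_\epsilon(a)$ excised and passing to $\epsilon\to 0$ (using the principal-value characterization of $\dint$), one obtains
\begin{equation*}
\dint_{E_\tau}\phi\,\frac{d^2z}{\Im\tau}=-\int_{\partial P}\phi\sigma\,dz+\sum_{a\in D}\lim_{\epsilon\to0}\int_{\partial D_\epsilon(a)}\phi\sigma\,dz.
\end{equation*}
The right and left edges of $\partial P$ cancel by $\tau$-periodicity of $\phi$, the bottom vanishes since $\sigma=0$, and the top (where $\sigma=1$) produces $\oint_A \phi\,dz$ after signs. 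Each disk integral is evaluated by substituting $\phi(z)=\sum_j c_j(z-a)^j$ along with $\sigma(z)-\sigma(a)=\Im(z-a)/\Im\tau$; the angular integration annihilates all terms but two, giving a $2\pi i\,\sigma(a)\,c_{-1}(a)$-term and a $\pi\,c_{-2}(a)/\Im\tau$-term. The analogous trick applied to $\sigma^k\phi\,\frac{d^2z}{\Im\tau}=-\frac{1}{k+1}d(\sigma^{k+1}\phi\,dz)$ computes $\dint_{E_\tau}\sigma^k\phi\,\frac{d^2z}{\Im\tau}$ as $\frac{1}{k+1}\oint_A\phi\,dz$ plus lower-power-$\sigma$ residue terms and $O(1/\Im\tau)$ contributions.

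\textbf{Inductive step, limit, and symmetrization.} Fix $z_2,\dots,z_n$ and apply the base identity in $z_1$ to $\Phi$, whose $z_1$-poles lie only along the diagonals $z_1=z_j$, $j\ge 2$. The relevant Laurent coefficients $c_{-1}(z_j),\,c_{-2}(z_j),\dots$ are meromorphic elliptic in the remaining variables, so the result of the $z_1$-integration is a finite sum of terms of the form $\sigma(z_{j_1})^{k_1}\cdots\sigma(z_{j_p})^{k_p}\cdot(\text{meromorphic elliptic})\cdot(\text{polynomial in }1/\Im\tau)$, to which the generalized base identity (now with $\sigma^k$-weights) applies at each subsequent step. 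Iterating, every further integration either produces a contour-integral piece in a new variable or absorbs a $\sigma$-power into a fractional $A$-cycle plus additional $1/\Im\tau$-corrections, so the final expression lies in $\sO_\bH[\frac{1}{\Im\tau}]$, proving the first claim. For the second, $\lim_{\ols{\tau}\to\infty}$ annihilates every positive power of $1/\Im\tau$, leaving only iterated contour integrals; since $\dint_{E^n}$ is symmetric in $(z_1,\dots,z_n)$ while the iterated contour integrals over the distinct disjoint cycles $A_1,\dots,A_n$ depend on ordering (the heights determine on which side each diagonal pole $z_i=z_j$ is resolved), the outcome must be the full symmetrization $\frac{1}{n!}\sum_{\sigma\in S_n}\oint_{A_{\sigma(1)}}dz_1\cdots\oint_{A_{\sigma(n)}}dz_n\,\Phi$. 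The main obstacle is the combinatorial accounting in the inductive step: one must verify that the non-holomorphic $\sigma$-factors and the residues along successive diagonals $z_{i_1}=\dots=z_{i_k}$ combine, after all remaining integrations, to produce exactly the symmetric sum of iterated contour integrals with coefficient $\frac{1}{n!}$, with the fractional factors $\frac{1}{k+1}$ from the $\sigma^k$-integrations conspiring to give the correct combinatorial weight and no extraneous non-holomorphic residue.
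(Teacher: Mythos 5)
This survey states the theorem with a citation to \cite{LZ-1} and gives no proof, so there is nothing in the paper itself to compare against; your proposal has to be judged on its own. Your overall strategy --- induction on $n$ via the Fubini property of $\dint_{E^n}$, the primitive $\frac{d^2z}{\Im\tau}=-d(\sigma\,dz)$ with $\sigma=\Im z/\Im\tau$, Stokes on the fundamental parallelogram with excised disks, and the resulting ``$A$-cycle plus $\sigma$-weighted residue plus $O(1/\Im\tau)$'' structure --- is exactly the mechanism used in the cited reference, and your one-variable computations (the cancellation of the side edges, the $2\pi i\,\sigma(a)c_{-1}(a)$ and $\pi c_{-2}(a)/\Im\tau$ disk contributions, the $\frac{1}{k+1}$ factors from $\sigma^k$-weights) are correct as far as they go.

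However, there is a genuine gap at the decisive step, and you have flagged it yourself. The argument that the holomorphic limit ``must be the full symmetrization'' because $\dint_{E^n}$ is symmetric while iterated contour integrals are order-dependent is not a valid inference: symmetry of the left-hand side only tells you the answer is a symmetric function of the data, not that it equals $\frac{1}{n!}\sum_{\sigma}$ of one particular family of ordered contour integrals (a priori it could be a single ordered integral plus residue corrections that happen to restore symmetry). What actually has to be shown is that each $\sigma(z_j)$-weighted residue term, after the remaining integrations contribute their $\frac{1}{k+1}$ factors, recombines with the fixed-height $A$-cycle terms via the identity $2\pi i\,\mathrm{Res}_{z_i=z_j}=\oint_{A_{\mathrm{above}}}-\oint_{A_{\mathrm{below}}}$ into exactly the average over all $n!$ nestings of the contours, with no leftover non-holomorphic residue. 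That bookkeeping is the entire content of the theorem beyond the one-variable base identity, and your proposal defers it as ``the main obstacle'' rather than carrying it out; as written the proof is therefore incomplete precisely where the work lies. (A secondary point to make explicit: membership in $\sO_{\bH}[\frac{1}{\Im\tau}]$ requires checking that the coefficients of the powers of $1/\Im\tau$ are holomorphic in $\tau$ after all variables are integrated out, i.e.\ that every non-elliptic intermediate term involving $\sigma(z_j)$ has been consumed.)
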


This theorem gives a precise relation on reduction of torus to circle
\bea\dint_{E^n}\xrightarrow{\lim\limits_{\ols{\tau}\to\infty}} \text{averaged } \oint_A. \eea
The anti-holomorphic dependence of $\dint_{E^n}$ on the moduli $\tau$ is actually fully described by the holomorphic anomaly equation \cite{LZ-2}.

Furthermore, if $\Phi(z_1,\cdots,z_n;\tau)$ is modular of weight $m$, then its regularized integral $\dint_{E^n_\tau}\lb \prod_{i=1}^n \frac{d^2z_i}{\Im \tau}\rb \Phi(z_1,\cdots,z_n;\tau)$ is modular of weight $m$ and thus an almost holomorphic modular forms \cite{Kaneko-Zagier}. The holomorphic limit by averaged $\oint_A$ is a quasi-modular form of weight $m$.

We apply the above theorem to 2d chiral correlations on elliptic curves. This leads to the following relation between the elliptic trace map in Theorem \ref{thm:Gui-L} and Weyl-ordered  operators by $A$-cycle integrals
\begin{align*}
\lim_{\ols{\tau}\to\infty} \lan \cO_1\otimes\cdots\otimes\cO_n\ran_{2d}=&\lim_{\ols{\tau}\to\infty}\dint_{E^n_\tau} \lb \prod_{i=1}^n\frac{d^2z_i}{\Im \tau}\rb \abracket{\OO_1(z_1)\cdots \OO_n(z_n)}\\
=&\frac{1}{n!}\sum_{\sigma\in S_n}\oint_{A_{\sigma(1)}} dz_1 \cdots \oint_{A_{\sigma(n)}} dz_n\ \abracket{\OO_1(z_1)\cdots \OO_n(z_n)}.
\end{align*}
This can be viewed as a reduction formula from 2d to 1d. This formula illustrates an interesting relationship between regularization and modularity/quasi-modularity.

\subsection{Application to Mirror Symmetry}

Mirror symmetry is about a duality between 
$$
{\small \text{\boxed{\text{symplectic geometry}} ({A-model})} \Longleftrightarrow \text{\boxed{\text{complex geometry}} ({B-model)}}}
$$
Here is a cartoon to illustrate how such mirror relation arises from physics.
$$
\xymatrix{
   \int_{\Map(\Sigma_g, X)} \bracket{\text{A-model}}\ar[d]^{\text{SUSY localize}} \ar@{=}[rr]^{\text{Fourier transform}} &  &\int_{\Map(\Sigma_g, X^\prime)} \bracket{\text{B-model}}\ar[d]_{\text{SUSY localize}}\\
 \int_{\text{Holomorphic maps}(\Sigma_g, X)} \ar@{<-->}[rr]  \ar@{=>}[d]  && 
 \int_{\text{Constant maps}(\Sigma_g, X^\prime)}   \ar@{=>}[d]  \\
 \text{Gromov-Witten Theory} && \text{Hodge theory/Kodaira-Spencer gravity}
}
$$

Consider the example of elliptic curves, whose mirrors are elliptic curves as well. The full quantum B-model (quantum BCOV theory as developed in \cite{Costello-Li}) on elliptic curves (including all gravitational descendents) is completely solved in \cite{LS-VOA}. The so-called stationary sector is described by the chiral deformation of chiral boson
$$
S=\int \pa \phi\wedge \bar\partial \phi+\sum_{k\geq 0}\int \eta_k {W^{(k+2)}(\pa_z\phi)\over k+2}
$$
where 
$$
W^{(k)}(\pa_z\phi)= (\pa_z\phi)^k+ O(\hbar)
$$
are the bosonic realization of the $W_{1+\infty}$-algebra.  The holomorphic limit $\bar \tau \to \infty$ (explained in Section \ref{sec:2d-1d}) of the generating function of $S$ on the elliptic curve  coincides  with the Gromov-Witten invariants on the mirror computed by Dijkgraaf\cite{Dijkgraaf-elliptic} and Okounkov-Pandharipande\cite{OP-Virasoro}. In this case, we find \cite{LS-VOA} 
$$
\text{Quantum Mirror Symmetry=Boson-Fermion Correspondence}.
$$

\bigskip{}

\noindent{\small Yau Mathematical Sciences Center, Tsinghua University, Beijing 100084, China}

\noindent{\small Email: \tt  sili@mail.tsinghua.edu.cn}


\end{document}